\theoremstyle{plain}
\newtheorem{theorem}{Theorem}
\newtheorem{lemma}[theorem]{Lemma}
\newtheorem{corollary}[theorem]{Corollary}
\theoremstyle{definition}
\newtheorem{definition}[theorem]{Definition}
\newtheorem{assumption}{Assumption}
\newtcolorbox{idea}[1][]
{
colbacktitle=cyan,
colback=cyan!10,
arc=1pt,
boxrule=1pt,
title=#1 
}
\newtcolorbox{update}[1][]
{
colbacktitle=gray,
colback=gray!10,
arc=1pt,
boxrule=1pt,
title=#1 
}
\newtcolorbox{question}[1][]
{
coltitle=black,
colbacktitle=yellow,
colback=yellow!10,
arc=1pt,
boxrule=1pt,
title=#1 
}
\newtcolorbox{answer}[1][]
{
coltitle=black,
colbacktitle=violet!10,
colback=violet!5,
arc=1pt,
boxrule=1pt,
title=#1 
}
\newcommand{\sA}{\mathcal{A}}
\newcommand{\sB}{\mathcal{B}}
\newcommand{\sD}{\mathcal{D}}
\newcommand{\sE}{\mathcal{E}}
\newcommand{\sW}{\mathcal{W}}
\newcommand{\arboricity}{\alpha}
\newcommand{\degeneracy}{\kappa}
\newcommand{\arboricitygraph}[1]{\alpha_{#1}}
\newcommand{\numtriangle}{T}
\newcommand{\esttriangle}{\widetilde{T}}
\newcommand{\emptriangle}{\widehat{T}}
\newcommand{\threshold}{\tau}
\newcommand{\upperthreshold}{h}
\newcommand{\lowerthreshold}{l}
\newcommand{\lighttriangles}[1]{\numtriangle^{#1}_{L}}
\newcommand{\heavytriangles}[1]{\numtriangle^{#1}_{H}}
\newcommand{\triangleset}{\Delta}
\renewcommand{\triangle}{\sf t} 
\newcommand{\weightfunc}{w}
\newcommand{\empweightfunc}{\widehat{\weightfunc}}
\newcommand{\weightfamily}{\sW}
\newcommand{\exactheavyoracle}{\texttt{HeavyOracle}}
\newcommand{\heavyoracle}{\texttt{Heavy}}
\newcommand{\samplededges}{S}
\newcommand{\sampledtriangles}{S_\triangleset}
\newcommand{\edgesamplesize}{s}
\newcommand{\querycount}{r}
\newcommand{\searchesttriangle}{\bar{t}}
\newcommand{\edgesensitivity}{\Delta_\edgeset}
\newcommand{\vertexsensitivity}{\Delta_\vertexset}
\newcommand{\bucket}{\sB}
\newcommand{\stringlength}{M}
\newcommand{\alicestring}{x}
\newcommand{\vertexsetA}{A}
\newcommand{\vertexsetAhat}{A'}
\newcommand{\vertexsetB}{B}
\newcommand{\vertexsetBhat}{B'}
\newcommand{\vertexA}{a}
\newcommand{\vertexAhat}{a'}
\newcommand{\vertexB}{b}
\newcommand{\vertexBhat}{b'}
\newcommand{\trianglemakerset}{S}
\newcommand{\rqcomplexity}[2]{R_{#1}\fbrac{#2}}
\newcommand{\rqgraph}{{\graph_{\alicestring}}}
\newcommand{\ptp}{\texttt{PTP}}
\newcommand{\ptpprob}{k}
\newcommand{\ptpsep}{\gamma}
\newcommand{\ptpdone}{\distribution_0}
\newcommand{\ptpdtwo}{\distribution_1}
\newcommand{\defptp}{\ptp_{\stringlength,\ptpprob,\ptpsep}}
\newcommand{\trianglealgo}{\algo_{\numtriangle}}
\newcommand{\trianglealgoptp}{\trianglealgo^{\ptp}}
\newcommand{\lbedgecount}{m}
\newcommand{\lbarboricity}{\arboricity^*}
\newcommand{\constant}{c}
\newcommand{\graph}{G}
\newcommand{\vertexset}{V}
\newcommand{\edgeset}{E}
\newcommand{\vertexcount}{n}
\newcommand{\edgecount}{m}
\newcommand{\vertex}{v}
\newcommand{\altvertex}{u}
\newcommand{\edge}{e}
\newcommand{\neighbour}{{\sf N}}
\newcommand{\degree}[1]{\mbox{deg}_{#1}}
\newcommand{\degreeq}{\texttt{Degree}}
\newcommand{\randedgeq}{\texttt{RandomEdge}}
\newcommand{\neighbourq}{\texttt{Neighbour}}
\newcommand{\edgeexistsq}{\texttt{Edge}}
\newcommand{\remove}[1]{}
\newcommand{\complain}[1]{\textcolor{violet}{#1}}
\newcommand{\debarshi}[1]{\textcolor{blue}{Debarshi: {#1}}}
\newcommand{\gopi}[1]{\textcolor{purple}{Gopi : {#1}}}
\newcommand{\field}[1]{\mathbb{#1}}
\newcommand{\Nat}{\field{N}}
\newcommand{\norm}[1]{\left\|{#1}\right\|}
\newcommand{\func}[3]{{#1} : {#2} \rightarrow {#3}}
\newcommand{\ceil}[1]{\left\lceil{#1}\right\rceil}
\newcommand{\set}[1]{\left\{{#1}\right\}}
\newcommand{\size}[1]{\left|{#1}\right|}
\newcommand{\blue}[1]{\textcolor{blue}{#1}}
\newcommand{\red}[1]{\textcolor{red} {#1}}
\newcommand{\fbrac}[1]{\left({#1}\right)}
\newcommand{\sbrac}[1]{\left\{{#1}\right\}}
\newcommand{\tbrac}[1]{\left[{#1}\right]}
\DeclareMathOperator*{\Exp}{\field{E}}
\DeclareMathOperator*{\Var}{\mathrm{Var}}
\newcommand{\distribution}{\sD}
\newcommand{\uniform}{\mathrm{Unif}}
\newcommand{\approxerror}{\varepsilon}
\newcommand{\confidence}{\delta}
\newcommand{\appcon}{\fbrac{\approxerror,\confidence}}
\newcommand{\bigo}[1]{O\fbrac{{#1}}}
\newcommand{\bigot}[1]{\widetilde{O}\fbrac{{#1}}}
\newcommand{\bigoted}[1]{\widetilde{O}_{\approxerror,\confidence}\fbrac{{#1}}}
\newcommand{\smallo}[1]{o\fbrac{{#1}}}
\newcommand{\bigomega}[1]{\Omega\fbrac{{#1}}}
\newcommand{\bigomegat}[1]{\widetilde{\Omega}\fbrac{{#1}}}
\newcommand{\bigomegated}[1]{\widetilde{\Omega}_{\approxerror,\confidence}\fbrac{{#1}}}
\newcommand{\algo}{\sA}
\newcommand{\poly}[1]{poly\fbrac{#1}}
\title{Arboricity and Random Edge Queries Matter for Triangle Counting using Sublinear Queries}
\author{%
  Arijit Bishnu \\
  Indian Statistical Institute \\
  Kolkata, India\\
  \And
  Debarshi Chanda \\
  Indian Statistical Institute \\
  Kolkata, India\\
  \And
  Gopinath Mishra \\
  National University of \\ 
  Singapore\\
}
\newif\ifarxiv
\begin{document}

\maketitle
\begin{abstract}
    Given a simple, unweighted, undirected graph $G=(V,E)$ with $|V|=n$ and $|E|=m$, and parameters $0 < \varepsilon, \delta <1$, along with \texttt{Degree}, \texttt{Neighbour}, \texttt{Edge} and \texttt{RandomEdge} query access to $G$, we provide a query based randomized algorithm to generate an estimate $\widehat{T}$ of the number of triangles $T$ in $G$, such that $\widehat{T} \in [(1-\varepsilon)T , (1+\varepsilon)T]$ with probability at least $1-\delta$. The query complexity of our algorithm is $\widetilde{O}\left({m \alpha \log(1/\delta)}/{\varepsilon^3 T}\right)$, where $\alpha$ is the arboricity of $G$. Our work can be seen as a continuation in the line of recent works [Eden et al., SIAM J Comp., 2017; Assadi et al., ITCS 2019;  Eden et al. SODA 2020] that considered subgraph or triangle counting with or without the use of \texttt{RandomEdge} query. Of these works, Eden et al. [SODA 2020] considers the role of arboricity. Our work considers how \texttt{RandomEdge} query can leverage the notion of arboricity. 
    Furthermore, continuing in the line of work of Assadi  et al. [APPROX/RANDOM 2022], we also provide a lower bound of $\widetilde{\Omega}\left({m \alpha \log(1/\delta)}/{\varepsilon^2 T}\right)$ that matches the upper bound exactly on arboricity and the parameter $\delta$ and almost on $\varepsilon$.
\end{abstract}

%

\section{Introduction}

\label{sec:intro}
\noindent
Counting the number of triangles present in graphs is a classical problem dating back to \cite{DBLP:journals/siamcomp/ChibaN85,DBLP:journals/siamdm/AlonKKR08,SWExperimentalListingTriangles2005,AlonTestingFOCS2013,DBLP:conf/soda/Bar-YossefKS02}. Researchers in the sub-linear space (streaming) community has been looking at this problem for the last two decades; see~\cite{DBLP:conf/soda/Bar-YossefKS02,DBLP:conf/stacs/BeraC17,jayaram_et_al:LIPIcs.APPROX/RANDOM.2021.11, AhnGMPODS2012, TriangleCountingDynamicGraphALgorithmica2016, DBLP:journals/algorithmica/AlonYZ97, JhaSeshadriStreamingTransitivity, BuriolFLPODS06, CORMODE2017SecondLook, ShinDynamicStreamTriangle, KaneMelhornArbitrarySubgrapgICALP12} for a sample of such works. The problem has also been studied by the researchers in the sub-linear time (property testing) community in the last decade~\cite{Dana_Ron_Triangle_Counting,DBLP:conf/pods/McGregorVV16,KallaugherP17,DBLP:conf/stacs/BeraC17,BeraSeshadriStreamingDegeneracy,DBLP:conf/soda/EdenRS20,jayaram_et_al:LIPIcs.APPROX/RANDOM.2021.11}. Here, graphs can be accessed only through certain queries. On the other hand, starting with the work of~\cite{DBLP:journals/siamcomp/ChibaN85}, the graph parameter called arboricity has also found its use in triangle counting~\cite{DBLP:conf/soda/EdenRS20}.
This work shows the use of arboricity in triangle counting where the graph can be accessed through a random edge query along with other standard queries like degree, neighbor and edge existence. 

Counting triangle has been a well motivated problem with applications across various domains like optimizing query size in database join problems~\citep{DBLP:conf/soda/Bar-YossefKS02,AtseriasDatabaseJoinJComp,assadi2018simple}, computing clustering coefficients, transitivity ratio~\citep{CAggarwalEvolutionaryNetworkSurvey,LucePerry1949AMO,watts_collective_1998, LeskovecBKT/SIGKDD08/EvolSocNet}, studying structures in web graphs~\citep{Eckmann2001CurvatureOC, DanischBS/WWW18.RealWorldDegeneracy} etc. For a more thorough overview, see the surveys~\citep{Hasan2018TriangleCI,TsourakakisJournalOG}. The parametrization based on arboricity is also of practical interest due to occurrence of low-arboricity graphs in various real world scenarios~\citep{DoryGI/BoundedArboricity/PODC22,LowArboricityMatching/FSTTCS2024/Konrad,OnakSSW/LowArboricityMinorFree/ACMTrAlg,GoelGustedtBoundedArboricity,beraCG/BoundedArboricity:LIPIcs.ICALP.2020.11,DanischBS/WWW18.RealWorldDegeneracy,ShinRealWorldKCore/JKInfSyst}.


Formally, we are given a simple, unweighted, undirected graph $\graph = \fbrac{\vertexset,\edgeset}$ with $\size{\vertexset} = \vertexcount$, and $\size{\edgeset} = \edgecount$. The set of neighbours of the vertex $\vertex$ is denoted by $\neighbour{\fbrac{\vertex}}$ $=$  $\sbrac{\altvertex|\fbrac{\altvertex,\vertex}\in \edgeset}$ and the degree of the vertex $\vertex$ is denoted by $\degree{\vertex}= \size{\neighbour{\fbrac{\vertex}}}$. We are given the following queries to access the graph:
\begin{itemize}
    \item{$\degreeq{(\vertex)}$: } Given a vertex $\vertex$, return $\degree{\vertex}$. 
    \item{$\neighbourq(\vertex,i)$: } Given a $\vertex \in \vertexset$ and $i \in \degree{\vertex}$, returns the $i$-th vertex $\altvertex \in \neighbour(\vertex)$.
    \item{$\edgeexistsq(\altvertex,\vertex)$: } Given two vertices $\altvertex,\vertex \in \vertexset$, returns $1$ if $(\altvertex,\vertex) \in \edgeset$, and $0$ otherwise.
    \item{$\randedgeq$:} Returns an edge $\edge \in \edgeset$ uniformly at random.
\end{itemize}
We will call the $\degreeq$, $\neighbourq$ and $\edgeexistsq$ queries taken together to be \emph{local queries}. 


Given parameters $0 < \approxerror, \confidence < 1$, our task is to obtain an estimate $\emptriangle$ of the number of triangles $\numtriangle$ in the graph $G$, such that $\emptriangle \in \tbrac{\fbrac{1-\approxerror}\numtriangle,\fbrac{1+\approxerror}\numtriangle}$ with probability $1-\confidence$. From now on, we refer $\emptriangle$ to be an $\appcon$ estimate of $\numtriangle$. One of the graph parameters that has been relevant to quantify the complexity of triangle listing algorithms within the context of RAM model~\citep{DBLP:journals/siamcomp/ChibaN85}, streaming model~\citep{BeraSeshadriStreamingDegeneracy}, distributed model~\citep{JainSeshadriTuranWWW17, SuriVasCursedLastReducerWWW11, FinocchiMapreduceArboricityJExpAlg} and query model~\citep{DBLP:conf/approx/EdenR18} is the arboricity of the graph $\graph$, denoted $\arboricity$. The arboricity of a graph can be seen as a measure of the density of the graph.
\begin{definition}[Arboricity$(\arboricity)$]
   The arboricity of a graph $\graph = (\vertexset,\edgeset)$, denoted by $\arboricity$, is the minimum number of spanning forests that $\edgeset$ can be partitioned into.
\end{definition}

\remove{
Earlier works of lower bounds for property testing problems was based on reductions from communication complexity problems~\citep{Blais2012,Goldreich2020,DBLP:conf/approx/EdenR18}. However, these works did not consider the impact of $\approxerror$ and $\confidence$ in the query complexity of the problem.} Recent works have established various ways to construct lower bounds incorporating the impact of $\approxerror$ and $\confidence$ both in graph property testing~\citep{DBLP:conf/approx/AssadiN22} and other problems~\citep{tetek:LIPIcs.ICALP.2022.107}. Our lower bounds will also involve $\approxerror$ and $\confidence$. To illustrate the dependence of the polylog terms, $\approxerror$ and $\confidence$ on the query complexity of our algorithm, we will use two notations at times in place of $\bigo{\bullet}$ -- $\bigot{\bullet}$ hides polylog factors only but shows dependence on $\approxerror$ and $\confidence$, whereas $\bigoted{\bullet}$ hides both polylog and $\approxerror$, $\confidence$ terms. Similarly, we will have $\bigomegat{\bullet}$ and $\bigomegated{\bullet}$.

\remove{To illustrate the dependence of $\approxerror$ and $\confidence$ on the query complexity of our algorithm, we denote $\bigot{}$(resp. $\bigomegat{}$ to incorporate only the $\poly{\log{\fbrac{\vertexcount}}}$ factors as $\bigot{\bullet} = \bigo{\bullet}\cdot\poly{\log\fbrac{\vertexcount}}$ (resp. $\bigomegat{\bullet} = \bigomega{\bullet}\cdot\poly{\log\fbrac{\vertexcount}}$, and $\bigoted{}$(resp. $\bigomegated{}$ to incorporate $\poly{\frac{\log{\fbrac{1/\confidence}}}{\approxerror}}$ as well as $\poly{\log{\fbrac{\vertexcount}}}$ factors as $\bigoted{\bullet} = \bigo{\bullet}\cdot\poly{\frac{\log{\fbrac{\vertexcount/\confidence}}}{\approxerror}}$(resp. $\bigomegated{\bullet} = \bigomega{\bullet}\cdot\poly{\frac{\log{\fbrac{\vertexcount/\confidence}}}{\approxerror}}$.}

\subsection{Our Results}
\label{ssec:ourresults}
In this section we present our results and contextualize it vis-a-vis the works of~\cite{assadi2018simple,Dana_Ron_Triangle_Counting,DBLP:conf/soda/EdenRS20,DBLP:conf/approx/AssadiN22}. 
\paragraph*{Upper Bound.}
We show that using random edge queries in addition to local queries, we can obtain an $\appcon$ estimate of $\numtriangle$ where the query complexity is parameterized by arboricity apart from the usual parameters of edge, vertex, the number of triangles, $\approxerror$ and $\confidence$. The work of~\citep{DBLP:journals/siamcomp/ChibaN85} established the role of arboricity in the problem of triangle counting giving an $\bigo{\edgecount\arboricity}$ time algorithm to exactly count the number of triangles $\numtriangle$ in the RAM model. In recent works in the property testing model, ~\citep{Dana_Ron_Triangle_Counting} gave an algorithm that uses \emph{local queries} and requires $\bigoted{\frac{\vertexcount}{\numtriangle^{1/3}}+\min{\sbrac{\edgecount,\frac{\edgecount^{3/2}}{\numtriangle}}}}$ queries to obtain an $\appcon$ estimate $\emptriangle$ of $\numtriangle$. Using \randedgeq{} in addition to the \emph{local} queries, \citep{assadi2018simple} proposed an algorithm that uses $\bigot{\frac{\edgecount^{3/2}\log{\fbrac{1/\confidence}}}{\approxerror^2\numtriangle}}$ queries. Following this, ~\citep{DBLP:conf/soda/EdenRS20} was the first work to introduce the parameter of arboricity to triangle counting in the query based setup. They proposed an algorithm that uses $\bigot{\frac{\vertexcount}{\numtriangle^{1/3}}+\frac{\edgecount\arboricity\log^3{\fbrac{1/\confidence}}}{\approxerror^5\numtriangle}}$ \emph{local} queries to obtain an $\appcon$ estimate $\emptriangle$ of $\numtriangle$. Our work follows as a natural progression to the works of ~\cite{Dana_Ron_Triangle_Counting,assadi2018simple,DBLP:conf/soda/EdenRS20}, where we ask if we can bring in the parameter of arboricity to triangle counting by using \randedgeq{} in addition to other \emph{local} queries (see the last row in Table~\ref{Table: Comparison Table}).  
In this work, we use \randedgeq{} in addition to \emph{local} queries to obtain an algorithm that uses $\bigot{\frac{\edgecount\arboricity\log{\fbrac{1/\confidence}}}{\approxerror^3\numtriangle}}$ queries. Formally,
\begin{theorem}[Upper Bound(Simplified)]\label{Theorem: Upper Bound Simplified}
    Given a simple, unweighted and undirected graph $\graph = \fbrac{\vertexset,\edgeset}$ with $\size{\vertexset} = \vertexcount$, $\size{\edgeset} = \edgecount$ and arboricity $\arboricity$, and access to the graph via \degreeq{}, \neighbourq{}, \edgeexistsq{} and \randedgeq{} queries, an $\appcon$ estimate $\emptriangle$ of $\numtriangle$ can be obtained using $\bigot{\frac{\edgecount\arboricity\log{\fbrac{1/\confidence}}}{\approxerror^3\numtriangle}}$ queries.
\end{theorem}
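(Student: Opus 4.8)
The plan is to write a constant multiple of $\numtriangle$ as a sum over the edges of $\graph$, to estimate that sum by sampling edges with \randedgeq{} and probing each with $O(1)$ local queries, and to use the arboricity of $\graph$ to control the variance; a small but dense family of ``heavy'' edges has to be peeled off and handled separately (this is presumably the role of \exactheavyoracle{}), and the unknown value of $\numtriangle$ is dealt with by a geometric search.

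\emph{Charging scheme and the arboricity bound.} Order the vertices by the pair $(\deg(v),\mathrm{id}(v))$ and orient each edge from its smaller endpoint to its larger one. This orientation is acyclic and every vertex $x$ has out-degree $d^+(x)\le\deg(x)$, so every triangle $\{x,y,z\}$ has a unique source (its least vertex), say $x$, with $x\to y$ and $x\to z$; charge the triangle to whichever of $(x,y),(x,z)$ has the larger head. Writing $g(e)$ for the number of triangles charged to an oriented edge $e=(x,y)$, we get $\sum_e g(e)=\numtriangle$ and $0\le g(e)\le d^+(x)\le\min(\deg x,\deg y)$. The structural input I would establish first is the Chiba--Nishizeki-type inequality $\sum_{\{u,v\}\in\edgeset}\min(\deg u,\deg v)=\widetilde O(\arboricity\,\edgecount)$: for each integer $k$, the edges with both endpoints of degree $\ge k$ lie inside the subgraph induced on $\{v:\deg v\ge k\}$, which has at most $2\edgecount/k$ vertices and hence at most $2\arboricity\,\edgecount/k$ edges (arboricity does not increase on subgraphs); summing $\min(\edgecount,2\arboricity\,\edgecount/k)$ over $k$ yields the bound. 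This is exactly the place where arboricity, rather than the worst-case $\edgecount^{3/2}$ of the arboricity-free setting, enters.

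\emph{The light estimator.} Fix a degree threshold $\threshold$ and call an edge \emph{light} if $\min(\deg u,\deg v)\le\threshold$. If a triangle is charged to an edge that is not light, all three of its vertices have degree $>\threshold$, so it lives inside $\graph[\{v:\deg v>\threshold\}]$, a graph with $\le 2\edgecount/\threshold$ vertices and, by the bound above, $\le\widetilde O(\arboricity\,\edgecount/\threshold)$ edges. To estimate $\numtriangle_L:=\sum_{\text{light }e}g(e)$: sample $\edge=\{x,y\}$ by \randedgeq{} (let $x$ be its tail, so $\deg(x)=\min(\deg u,\deg v)$), discard if $\edge$ is not light, else sample a uniform $z\in N(x)$ and output $Y=\edgecount\cdot\deg(x)\cdot\mathbf 1[\,x\to z,\ (y,z)\in\edgeset,\ (x,y)\text{ is the charged edge of }\{x,y,z\}\,]$, computed with $O(1)$ \degreeq{}, \neighbourq{} and \edgeexistsq{} queries. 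Then $\mathbb E[Y]=\numtriangle_L$, and since a light edge has $\deg(x)\le\threshold$,
\[
\mathbb E[Y^2]=\edgecount\sum_{\text{light }e}\deg(x_e)\,g(e)\ \le\ \edgecount\,\threshold\sum_e g(e)\ =\ \edgecount\,\threshold\,\numtriangle .
\]
Averaging $\widetilde O\!\left(\edgecount\,\threshold\log(1/\confidence)/(\approxerror^2\numtriangle)\right)$ independent copies, with a median-of-means step for the $\log(1/\confidence)$ factor, estimates $\numtriangle_L$ up to additive $\tfrac{\approxerror}{2}\numtriangle$. To fit the stated budget one takes $\threshold=\widetilde O(\arboricity)$.

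\emph{The heavy part and the search over $\numtriangle$.} The residual mass $\numtriangle_H=\numtriangle-\numtriangle_L$ is the triangle count of $\graph[\{v:\deg v>\threshold\}]$, and estimating it is the main obstacle: with $\threshold=\widetilde O(\arboricity)$ this subgraph still has $\Theta(\edgecount)$ edges, and the light estimator fails on its edges because their tails may have degree up to $\vertexcount$, making $\mathbb E[Y^2]$ blow up. This is what \exactheavyoracle{} must handle; I expect it to peel high-degree vertices in successive rounds --- raising the threshold geometrically and, at each level, estimating the contribution of triangles whose charged edge has min-degree in the current band (whose tails are then bounded, so that level's variance is controlled) while deferring the rest upward, invoking the arboricity bound again to limit the number of edges, hence triangles, surviving each round, and using \randedgeq{} with rejection to sample edges inside the successive dense subgraphs. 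Making these pieces combine to $\widetilde O(\edgecount\,\arboricity\log(1/\confidence)/(\approxerror^3\numtriangle))$ --- in particular pinning down where the third power of $1/\approxerror$ enters (plausibly from taking the final threshold fine enough, at $\approxerror$-scale, that the leftover heavy count drops below $\tfrac{\approxerror}{2}\numtriangle$ and can be discarded) --- is the delicate step. Finally, $\numtriangle$ is unknown, so the whole procedure is run for geometrically decreasing guesses $\widehat{\numtriangle}$, each at cost $\widetilde O(\edgecount\,\arboricity\log(1/\confidence)/(\approxerror^3\widehat{\numtriangle}))$, stopping at the first guess consistent with its own output; a union bound over the $O(\log\vertexcount)$ guesses controls the failure probability and the total cost is dominated by the last guess $\widehat{\numtriangle}=\Theta(\numtriangle)$, giving the theorem.
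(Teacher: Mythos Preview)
Your light estimator and the geometric search are fine and close in spirit to the paper, but the proposal has a real gap in the heavy part, and it stems from the \emph{definition} of heavy.

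You declare an edge heavy when $\min(\deg u,\deg v)>\threshold$. With that choice the heavy triangles are exactly the triangles of the subgraph $H=\graph[\{v:\deg v>\threshold\}]$, and the only bound you get is $T(H)\le \arboricity\cdot|\edgeset(H)|\le 2\arboricity^2\edgecount/\threshold$. To force $T(H)\le\approxerror\numtriangle$ you would need $\threshold\gtrsim \arboricity^2\edgecount/(\approxerror\numtriangle)$, and plugging that into your light-estimator cost $O(\edgecount\threshold/(\approxerror^2\numtriangle))$ gives $O(\arboricity^2\edgecount^2/(\approxerror^3\numtriangle^2))$, which is the target only when $\numtriangle=\Theta(\edgecount\arboricity)$ and is far worse in general. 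Your peeling sketch does not obviously repair this: at each level the number of surviving triangles is controlled only by $\arboricity^2\edgecount/\threshold_i$, not by $\numtriangle$, so the levels do not telescope against the budget.

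The paper instead calls an edge $\threshold$-heavy when $\numtriangle_\edge>\threshold$. Now the number of heavy edges is at most $3\numtriangle/\threshold$ (each triangle contributes to three edges), so the heavy-edge subgraph has at most $3\arboricity\numtriangle/\threshold$ triangles. Taking $\threshold=\Theta(\arboricity/\approxerror)$ makes this $\le\approxerror\numtriangle/2$ and the heavy triangles are simply \emph{discarded}; there is no separate heavy estimator and no peeling. The role of \exactheavyoracle{} is only to test, for a sampled edge, whether $\numtriangle_\edge$ exceeds the threshold; it is implemented by sampling $O(\approxerror\,\degree{\edge}/\arboricity)$ neighbours and using Chernoff (with a small bucketing trick to handle edges with very few triangles). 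The $1/\approxerror^3$ then arises exactly as you guessed, from the $\arboricity/\approxerror$-scale threshold feeding into the variance bound $\Var\le(\arboricity/\approxerror)\cdot\Exp$ together with Chebyshev at accuracy $\approxerror$.
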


In streaming model,~\citep{BeraSeshadriStreamingDegeneracy} proposed a constant ($\bigo{1}$) pass $\bigot{\frac{\edgecount\degeneracy}{\approxerror^4\numtriangle}}$ space algorithm for obtaining an $\approxerror$-multiplicative estimate of number of triangles in a graph. Our algorithm can be directly extended to a constant pass streaming algorithm using $\bigot{\frac{\edgecount\arboricity}{\approxerror^3\numtriangle}}$ space. By the fact that $\arboricity \leq \degeneracy \leq 2\arboricity$, our algorithm's space complexity is as good as the algorithm of~\citep{BeraSeshadriStreamingDegeneracy} in terms of the graph parameters, and asymptotically better in terms of the approximation factor $\approxerror$.

\paragraph*{Lower Bound.}
We also establish a lower bound involving the arboricity and the parameters $\approxerror$ and $\confidence$. Reductions from communication complexity is a well studied method to establish lower bound results in terms of query complexity~\citep{Blais2012,Goldreich2020}. Using only \degreeq{} and \neighbourq{} queries, triangle counting requires $\bigomega{\edgecount}$ queries~\citep{GRS11}. Including \edgeexistsq{} with the previous two queries, ~\citep{Dana_Ron_Triangle_Counting} established a lower bound of $\bigomega{\frac{\vertexcount}{\numtriangle^{1/3}}+\frac{\edgecount^{3/2}}{\numtriangle}}$ queries which matches their upper bound up to $\poly{\frac{\log{\fbrac{\vertexcount/\confidence}}}{\approxerror}}$ factors. Later, a simpler proof was proposed for the same bound through reduction from communication complexity problems~\citep{DBLP:conf/approx/EdenR18}, and was subsequently extended to incorporate arboricity with a bound of $\bigomega{\frac{\vertexcount}{\numtriangle^{1/3}}+\frac{\edgecount\arboricity}{\numtriangle}}$ queries~\citep{DBLP:conf/soda/EdenRS20} which again matched their proposed upper bound up to $\poly{\frac{\log{\fbrac{\vertexcount/\confidence}}}{\approxerror}}$ factors. Recently~\citep{DBLP:conf/approx/AssadiN22}, it has been shown that the algorithm of~\citep{assadi2018simple} is optimal (up to $\poly{\log{\fbrac{\vertexcount}}}$ factors) in terms of all parameters including $\approxerror$ and $\confidence$ through a lower bound of $\bigomega{\frac{\edgecount^{3/2}\log{\fbrac{1/\confidence}}}{\approxerror^2\numtriangle}}$ queries for obtaining an $\appcon$ estimate of $\numtriangle$ using \emph{local} and \randedgeq{} queries. However, this does not consider the dependency of the query complexity on arboricity of the graph. Like our upper bound, the lower bound of our work also captures the role of arboricity and the parameters $\approxerror$, and $\confidence$ when the queries allowed are the local and the \randedgeq{} queries.
\begin{theorem}[Lower Bound(Simplified)]
    Given a simple, unweighted and undirected graph $\graph = \fbrac{\vertexset,\edgeset}$ with $\size{\vertexset} = \vertexcount$, $\size{\edgeset} = \edgecount$ and arboricity $\arboricity$, and access to the graph via \degreeq{}, \neighbourq{}, \edgeexistsq{} and \randedgeq{} queries, obtaining an $\appcon$ estimate $\emptriangle$ of $\numtriangle$ requires $\bigomega{\frac{\edgecount\arboricity\log{\fbrac{1/\confidence}}}{\approxerror^2\numtriangle}}$ queries.
\end{theorem}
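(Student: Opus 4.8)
The plan is to design a distribution over ``hard'' graphs and reduce a distinguishing problem to triangle estimation on them, in the style of Eden--Rosenbaum--Seshadhri and Assadi--Nguyen, but engineered so that the extra \randedgeq{} primitive is provably only as useful as drawing one more uniform sample. Each hard graph is a vertex-disjoint union of $\ell=\Theta(\edgecount/\arboricity^2)$ small ``pieces'', each a tripartite-like gadget on vertex sets $\vertexsetA,\vertexsetB,\trianglemakerset$ of total size $\Theta(\arboricity)$ with $\Theta(\arboricity^2)$ edges, every vertex of degree $\Theta(\arboricity)$, and $\Theta(\numtriangle\arboricity^2/\edgecount)$ triangles; the union then has $\Theta(\edgecount)$ edges, $\Theta(\numtriangle)$ triangles, and arboricity $\Theta(\arboricity)$ (the arboricity of a disjoint union is the component-wise maximum, and a $\Theta(\arboricity)$-vertex, $\Theta(\arboricity^2)$-edge piece can be built with arboricity $\Theta(\arboricity)$). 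A hidden input decides, inside each piece, which of its $\Theta(\arboricity^2)$ potential triangles actually close, in a way that (a) leaves every degree and the edge ``type profile'' untouched -- so \randedgeq{}'s answer has the same law regardless -- and (b) makes the identity of the third vertex of each closing triangle unrecoverable without scanning the $\Theta(\arboricity)$-sized neighbourhood of one of its endpoints; the crucial point is that there is no low-degree ``apex'' to grab and no shortcut, so certifying that a sampled edge lies in a triangle costs $\Theta(\arboricity)$ local queries.

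The distribution comes in two modes: a ``low'' mode with $\numtriangle$ triangles and a ``high'' mode, obtained by reweighting which pieces are triangle-rich, with $(1+\Theta(\approxerror))\numtriangle$ triangles. The modes are \emph{population} properties -- a $\tfrac12\pm\Theta(\approxerror)$ fraction of pieces is triangle-rich versus triangle-poor -- so inspecting any single piece (even exhaustively, at cost $\Theta(\arboricity^2)$) reveals its own type but essentially nothing about the mode, while triangle-rich and triangle-poor pieces are made locally indistinguishable except by (approximately) counting their triangles, and triangle-edges are spread uniformly across pieces so that finding a piece confers no advantage. Since an $\appcon$ estimate of $\numtriangle$ separates the two modes, it suffices to lower bound mode-distinguishing.

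For the count: the only way to gather evidence is to obtain a uniform edge (via \randedgeq{}, or at comparable cost via local exploration already paid for), spend $\Theta(\arboricity)$ further queries on a neighbourhood scan to learn whether it lies in a triangle, and repeat; the resulting test bit is $1$ with probability $\Theta(\numtriangle/\edgecount)$ under one mode and a $(1\pm\Theta(\approxerror))$ multiple of that under the other, so each bit carries only $\Theta(\approxerror^2\numtriangle/\edgecount)$ nats about the mode. Because distinguishing the modes with confidence $1-\confidence$ needs $\Omega(\log(1/\confidence))$ nats in total (e.g.\ by Bretagnolle--Huber: total-variation distance $1-2\confidence$ forces Kullback--Leibler divergence $\ge\ln(1/4\confidence)$), one needs $\Omega(\edgecount\log(1/\confidence)/(\approxerror^2\numtriangle))$ such test operations, hence $\Omega(\edgecount\arboricity\log(1/\confidence)/(\approxerror^2\numtriangle))$ queries. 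To make this rigorous against arbitrary adaptive strategies -- ruling out cleverer uses of \degreeq{}, \neighbourq{}, \edgeexistsq{}, \randedgeq{} and any sharing of work across pieces -- I would encode the hidden per-piece data as an instance of a query problem $\defptp$ (equivalently as two-party inputs $\alicestring,\bobstring$ giving the graph $\ccgraph$ built from $\disjointness$- or $\defaultint$-type hard instances), verify that each \degreeq{}/\neighbourq{}/\edgeexistsq{} call is answered by reading $\bigo{1}$ input coordinates and each \randedgeq{} call by reading $\bigo{1}$ coordinates after a cheap sampling step, and invoke the information/communication lower bound for $\defptp$ with parameters $\stringlength,\ptpprob,\ptpsep$ chosen so that $\rqcomplexity{\confidence}{\defptp}=\bigomegat{\edgecount\arboricity\log(1/\confidence)/(\approxerror^2\numtriangle)}$.

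The step I expect to be the main obstacle is the gadget design: a single piece must simultaneously have arboricity exactly $\Theta(\arboricity)$ (tight control of every degree and every dense subgraph), present no cheap handle -- every vertex of degree $\Theta(\arboricity)$ and every potential triangle's third vertex hidden among $\Theta(\arboricity)$ candidates, so certification genuinely costs $\Omega(\arboricity)$ and not $\bigo{1}$ -- keep the degree/type profile of its edge multiset identical across the two piece-types (forcing the whole signal into second-order statistics: which edges close into triangles, not which edges exist, so that \randedgeq{} leaks nothing for free), spread its triangle-edges uniformly, and still be simulable with $\bigo{1}$-coordinate overhead under all four query types. Reconciling the ``no cheap handle / locally identical'' constraints with the requirement that the global triangle count still move by a $(1\pm\approxerror)$ factor is the delicate tension; granting it, the disjoint-union amplification, the substitutions $\ell=\Theta(\edgecount/\arboricity^2)$, per-piece triangle scale $\Theta(\numtriangle\arboricity^2/\edgecount)$, $\ptpsep=\Theta(\approxerror)$, and the Bretagnolle--Huber bookkeeping are routine (in the natural regime where the claimed bound does not exceed the $\bigo{\edgecount}$ cost of reading the whole graph), and at $\arboricity=\Theta(\sqrt{\edgecount})$ the bound specialises to the Assadi--Nguyen bound $\bigomegat{\edgecount^{3/2}\log(1/\confidence)/(\approxerror^2\numtriangle)}$.
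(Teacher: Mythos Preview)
Your high-level framework is right: reduce from $\defptp$, build a graph $\rqgraph$ whose triangle count encodes $\norm{\alicestring}_1$, and show that every graph query can be simulated by reading $\bigo{1}$ coordinates of $\alicestring$. But the two mechanisms you offer for producing the extra factor of $\arboricity$ are inconsistent with each other, and neither one, as written, actually delivers it.

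On the one hand you say that ``certifying that a sampled edge lies in a triangle costs $\Theta(\arboricity)$ local queries,'' and use this to multiply the sample bound $\edgecount\log(1/\confidence)/(\approxerror^2\numtriangle)$ by $\arboricity$. This is an intuition about what an algorithm \emph{should} do, not a lower bound: against adaptive algorithms you cannot argue ``the only way to gather evidence is \ldots''. On the other hand, your formal plan is a clean simulation where every graph query reads $\bigo{1}$ coordinates of a $\defptp$ instance. But then $q$ graph queries read $\bigo{q}$ coordinates, and the $\defptp$ bound gives $q=\Omega\!\bigl(\stringlength\log(1/\confidence)/(\ptpsep^{2}\ptpprob)\bigr)$. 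With your gadget description --- one hidden bit per potential triangle, so $\stringlength=\Theta(\edgecount)$ and $\ptpprob=\Theta(\numtriangle)$ --- this yields only $\Omega\!\bigl(\edgecount\log(1/\confidence)/(\approxerror^{2}\numtriangle)\bigr)$, with no $\arboricity$. The two stories cannot both be true: if one graph query reveals $\bigo{1}$ bits, then ``certification costs $\arboricity$'' cannot be where the factor comes from.

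The paper resolves this by putting the $\arboricity$ into the \emph{parameters} of $\defptp$, not into a per-query cost. Its graph has a single shared apex set $\trianglemakerset$ of size $\arboricity$ fully joined to two blocks $\vertexsetA,\vertexsetB$; each hidden bit $\alicestring_{i,j}$ decides whether one edge goes from $\vertexsetA$ to $\vertexsetB$ (creating $\arboricity$ triangles, one through each $\trianglemaker\in\trianglemakerset$) or is rerouted to a dummy block $\vertexsetAhat$ (creating none), with a paired dummy edge so that all degrees are fixed independently of $\alicestring$. Then $\numtriangle=\norm{\alicestring}_1\cdot\arboricity$, so $\ptpprob=\Theta(\numtriangle/\arboricity)$ while $\stringlength=\Theta(\edgecount)$, and the $\defptp$ bound $\stringlength/(\ptpsep^{2}\ptpprob)$ becomes $\Theta\!\bigl(\edgecount\arboricity/(\approxerror^{2}\numtriangle)\bigr)$ automatically. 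Note this is the opposite of your design principle ``no apex'': the apex set is precisely what couples one bit to $\arboricity$ triangles. This also makes the construction far simpler than your disjoint-pieces plan --- no rich/poor piece types, no local-indistinguishability engineering --- because the degree profile is literally constant across all instances, so \degreeq{} and \randedgeq{} are free and each \neighbourq{}/\edgeexistsq{} reads exactly one bit of $\alicestring$.
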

\begin{table}[ht!]
    \centering
    \begin{tabular}{|c|c|c|}
    \hline
                &  without \randedgeq{} & with \randedgeq{} \\
    \hline
    without         & UB: $\bigoted{\frac{\vertexcount}{\numtriangle^{1/3}}+\frac{\edgecount^{3/2}}{\numtriangle}}$ \citep{Dana_Ron_Triangle_Counting}& UB: $\bigot{\frac{\edgecount^{3/2}\log{\fbrac{1/\confidence}}}{\approxerror^2\numtriangle}}$ \citep{assadi2018simple}\\
    Arboricity  & LB: $\bigomegat{\frac{\edgecount^{3/2}\log{\fbrac{1/\confidence}}}{\approxerror^2\numtriangle}}$ \citep{DBLP:conf/approx/AssadiN22} & LB: $\bigomegat{\frac{\edgecount^{3/2}\log{\fbrac{1/\confidence}}}{\approxerror^2\numtriangle}}$ \citep{DBLP:conf/approx/AssadiN22}\\
    \hline
    with        & UB: $\bigot{\frac{\vertexcount}{\numtriangle^{1/3}}+\frac{\edgecount\arboricity\log^3{\fbrac{1/\confidence}}}{\approxerror^5\numtriangle}}$ \citep{DBLP:conf/soda/EdenRS20}& ($\ast$) UB:  $\bigot{\frac{\edgecount\arboricity\log{\fbrac{1/\confidence}}}{\approxerror^3\numtriangle}}$\\
    Arboricity  & ($\ast$) LB: $\bigomega{\frac{\edgecount\arboricity\log{\fbrac{1/\confidence}}}{\approxerror^2\numtriangle}}$ & ($\ast$) LB: $\bigomega{\frac{\edgecount\arboricity\log{\fbrac{1/\confidence}}}{\approxerror^2\numtriangle}}$\\
    \hline
    \end{tabular}
    \caption{Upper and Lower bounds for the Triangle Counting problem. The entries marked with asterisk ($\ast$) are our results proved in this paper.}
    \label{Table: Comparison Table}
\end{table}
Table~\ref{Table: Comparison Table} describes the upper and lower bounds across different query accesses and parametrization for the triangle counting problem, with our contributions in this work highlighted with an asterisk ($\ast$) mark. We believe our work raises some new questions: (a) it opens up the question of quantifying the impact of $\arboricity$ to count other subgraphs (e.g., k-cliques) using \randedgeq{} and other local queries; (b) there remains a small gap of $1/\approxerror$ between our upper and lower bounds that remain to be resolved.


\subsection{Related Work}
\label{ssec:related work}
The problem of triangle counting is a problem that has been extensively studied across various models of computation. 
\paragraph{RAM model.}
The work of~\citep{DBLP:journals/siamcomp/ChibaN85} proposed an $\bigo{\frac{\edgecount\arboricity}{\numtriangle}}$ algorithm in the RAM model that remains the best known till date. Recent works  have shown it to be conditionally optimal under the 3SUM~\citep{DBLP:conf/soda/KopelowitzPP16} and APSP hypotheses~\citep{DBLP:conf/focs/WilliamsX20}. 
\paragraph{Streaming model.}
In the data streaming model, a long line of work~\citep{DBLP:journals/ipl/PaghT12,DBLP:journals/pvldb/PavanTTW13} has culminated in an $\bigot{\frac{\edgecount}{\numtriangle}\fbrac{\edgesensitivity+\sqrt{\vertexsensitivity}}}$-space one pass streaming algorithm for insertion only model~\citep{KallaugherP17} that has been shown to be optimal through the combined lower bounds proposed in~\citep{BOV_13_Streaming_triangle_counting_hardness,KallaugherP17}. Here,  $\edgesensitivity$ and $\vertexsensitivity$  denotes the maximum number of triangles that an edge $\edge$ or a vertex $\vertex$ in the graph $\graph$ participates in, respectively. There have been works on triangle counting in multi-pass setup as well as in other streaming models (e.g. turnstile, cash-register etc.)~\citep{DBLP:conf/pods/McGregorVV16,DBLP:conf/stacs/BeraC17}.

\paragraph{Property Testing model.}
In query complexity setup, the triangle counting problem has been studied in terms of various query access available to the algorithm. It has been shown that without access to \edgeexistsq{}, no sublinear query algorithm can exist for triangle counting~\citep{GRS11}. Hence, sublinear query algorithms for counting triangles have been studied in models with \degreeq{}, \neighbourq{}, and \edgeexistsq{} queries~\citep{Dana_Ron_Triangle_Counting,DBLP:conf/soda/EdenRS20} resulting in an $\bigot{\min\fbrac{\frac{\vertexcount\arboricity^2}{\numtriangle},\frac{\vertexcount}{\numtriangle^{1/3}}+\frac{\edgecount\arboricity}{\numtriangle}}}$ query algorithm, with matching lower bounds~\citep{Dana_Ron_Triangle_Counting,DBLP:conf/approx/EdenR18,DBLP:conf/soda/EdenRS20} up to $\poly{\log\fbrac{\frac{\vertexcount}{\confidence}},\frac{1}{\approxerror}}$ factors. The work in~\citep{tetek:LIPIcs.ICALP.2022.107} highlighted the discrepancy in time and query complexities of approximate triangle counting while improving the time complexity of triangle counting for graphs with the number of triangles in a particular range. With \randedgeq{}, \neighbourq{}, and \edgeexistsq{} query access, an $\bigot{\frac{\edgecount^{3/2}\log\fbrac{1/\confidence}}{\approxerror^2\numtriangle}}$ query algorithm was proposed by~\citep{assadi2018simple} which was shown to be optimal in all parameters, inclusive of $\approxerror$ and $\confidence$~\citep{DBLP:conf/approx/AssadiN22}. 

\subsection{Paper organization}
Section~\ref{sec:tech-overview} discusses a broad overview of the techniques we use. Section~\ref{sec:prelim} sets up the preliminaries for our proofs, including the structural results involving arboricity of a graph. Section~\ref{sec:algo} discusses the query-based algorithm by breaking the discussion into several parts: a weight function (Section~\ref{ssec:weightfunc}), an oracle and its implementation (Sections~\ref{ssec:oracle-algo} and~\ref{ssec:oracle-implement}) and the final algorithm (Section~\ref{ssec:final-algo}). The lower bounds are discussed in Section~\ref{sec:lower-bound}.


\section{Technical Overview}
\label{sec:tech-overview}
In this section, we give a broad overview of the techniques used in this work. We denote by $\numtriangle_\edge$ the number of triangles the edge $\edge$ participates in, and by $\degree{\edge}$ \remove{the degree of the vertex of smaller degree in the edge,} $=\degree{\fbrac{\altvertex,\vertex}} = \min{\sbrac{\degree{\altvertex},\degree{\vertex}}}$.

\paragraph*{Upper Bound.} Our starting point is to use the \randedgeq{} to obtain a random sample of edges $\samplededges$ and try to estimate the number of triangles $\numtriangle$ incident on the edges of $\samplededges$. However, an edge $\edge$ can participate in $\bigomega{\degree{\edge}}$ triangles. Thus, counting the number of triangles each edge participate in will be too expensive.  Also, $\numtriangle_\edge$ can grow up to $\bigomega{\numtriangle}$, requiring $\size{\samplededges}$ to be large to obtain a good estimate. To circumvent this issue, we consider only the edges that participate in $\leq \threshold$ triangles, called \emph{light edges}. We denote the edges that are not light to be \emph{heavy edges}. We call the triangles containing at least one light edge \emph{light triangles}, and the triangles consisting entirely of heavy edges to be \emph{heavy triangles}. Fixing the threshold $\threshold$ appropriately based on the arboricity $\arboricity$ of the graph will ensure that the number of light triangles $\lighttriangles{\threshold}$ is a sufficiently good approximation of the number of triangles, $\numtriangle$. For now, assume we are given access to an oracle \heavyoracle{} to decide whether an edge is light or heavy. However, this criteria may cause triangles to be sampled with different probabilities, depending on the number of light edges it contains. 

If we can design a way to assign each of the light triangles to one of its constituent light edges, we can sample all light triangles with equal probability. We define a valid weight function under which estimation of the sum of weight function over all edges gives us a good estimate of $\numtriangle$.
\remove{
If we can design a way to assign each of the light triangles to one of its constituent light edges, we can sample all light triangles with equal probability. We define a valid weight function, denoted $\weightfunc$ to be defined by such an assignment. Given a valid assignment, for an edge $\edge$, $\weightfunc(\edge)$ is the number of triangles assigned to the edge $\edge$. Under this definition, we have $\sum_{\edge \in \edgeset} \weightfunc(\edge) = \lighttriangles{\threshold}$, i.e. estimating the sum of weight function over all edges will give us a good estimate of $\numtriangle$. Observe that there can be many such valid assignments, and each valid assignment defines a valid weight function. Thus there can be many valid weight functions. Obtaining an estimate for any weight function will give us an estimate for $\numtriangle$.
}

To obtain this estimate, we obtain samples of triangles the light edges in $\samplededges$ participates in. We assign each sampled triangle to one of its constituent edges, ensuring that the assignment can be extended to a valid assignment, and hence a valid weight function. Based on this assignment, we obtain an estimate of the corresponding weight function, which, given an appropriately fixed threshold $\threshold$, will give an $\appcon$ estimate of $\numtriangle$.

To design the oracle $\heavyoracle{}$ for a given edge $\edge$, we estimate the number of triangles each edge participates in. This estimation can be made using i.i.d. draws. Hence, the heavy edges, having high probability of obtaining a triangle, can be well-approximated using sufficiently small number of queries. For the light edges, observe that lower number of triangles, i.e. lower probability of obtaining triangles, allows for high approximation error. We then design a bucketing trick to exploit this trade-off to implement an efficient algorithm to decide whether an edge is heavy or light.
\remove{
\red{
\begin{itemize}
    \item Highlight Conceptual Difference Compared to recent works (e.g. ~\citep{assadi2018simple, Dana_Ron_Triangle_Counting, DBLP:conf/soda/EdenRS20})
    \item Our analysis is much simpler compared to the only case of arboricity incorporating property testing algorithm~\citep{DBLP:conf/soda/EdenRS20}.
    \item Ours is the only case where heavy edge is defined purely in terms of number of triangles.
    \item Place the work in context of long list of works related to arboricity. Place the usage of random edge queries and subgraphs generated within that context.
\end{itemize}}

\begin{idea}[Broad Idea]
    The easiest approach would be to directly sample triangles through the edges. However, the samples(triangles) are not independent, and to use Chebyshev we need control on the variance. To do that, we need to eleminate \blue{heavy} edges that participates in too many triangles. This gives rise to two main problems:
    \begin{itemize}
        \item \textbf{Deciding Heavy:} We need to decide heavy edges, and correspondingly light edges through small ($\bigo{1}$) number of samples . Some light edges have very few triangles and thus difficult to control using Chernoff, even using additive Chernoff bound. We use the bucketed approximation to take care of this using the fact that low number of triangles allow higher approximation factor to design our oracle (See Lemma~\ref{Lemma: Heavy Oracle Algorithm Correctness}). 
        \item \textbf{Non-uniform Sample:} Sampling through light edges may cause the triangle to be sampled at disproportionate rates, depending on the number of light edges that it contains. We manage this issue by selecting a charging from each triangle to a constituent light edge of that triangle. The main idea is there are many such possible assignments and any such assignment would be fine for our purpose. We find such an assignment/weight function through finding a partial assignment that can be extended to a valid assignment.
    \end{itemize}
\end{idea}
}


\paragraph*{Lower Bound.} For our lower bound, we use the lower bound on number of samples required to solve the Popcount Thresholding Problem[\ptp{}] presented in~\citep{DBLP:conf/approx/AssadiN22}. We defer the details of this problem to Section~\ref{sec:lower-bound}. To establish the lower bound, we show that for any value of arboricity $\arboricity$, we can design a graph $\graph$ with arboricity $\arboricity$ such that finding an $\appcon$ estimate $\emptriangle$ of the number of triangles $\numtriangle$ of the graph using $\bigomega{\frac{\edgecount\arboricity\log{\fbrac{1/\confidence}}}{\approxerror^2\numtriangle}}$ queries would violate the lower bound of \ptp{}.

\section{Preliminaries}
\label{sec:prelim}
\subsection{Notations}
\label{ssec:notation}
The set $\{1,2,\ldots,x\}$ is denoted as $[x]$.
We consider $\graph = (\vertexset,\edgeset)$ to be a simple, unweighted, undirected graph with $\size{\vertexset} = \vertexcount$, and $\size{\edgeset} = \edgecount$. Given a vertex $\vertex$, its neighboring vertex set is denoted as $\neighbour(\vertex) = \set{\altvertex|(\altvertex,\vertex)\in \edgeset}$. We denote by $\degree{\vertex}$ the degree of the vertex $\vertex$. Based on the degrees of the two vertices of an edge $\edge = \fbrac{\vertex,\altvertex}$, we define the degree of the edge $\edge$ as $\degree{\edge} = \min\fbrac{\degree{\vertex}~,\degree{\altvertex}}$. We denote the set of triangles in $\graph$ as $\triangleset$, and individual triangles are denoted as $\triangle$. ($\fbrac{\vertex,\edge}$ denotes a triangle formed by the vertices $\vertex$ and the endpoints of the edge $\edge$). We want to estimate the number of triangles,  $\size{\triangleset} = \numtriangle$ in the graph given the $\degreeq$, $\neighbourq$, $\edgeexistsq$ and $\randedgeq$ queries. An edge $\edge$ participates in a triangle $\triangle$ means that the triangle $\triangle$ is incident on the edge $\edge$. We denote by $\numtriangle_\edge$ the number of triangles the edge $\edge$ participates in. $\uniform(S)$ denotes an element of $S$ is chosen uniformly at random. 

\subsection{Arboricity and its properties}
\label{ssec:arbor-prop}
As arboricity plays a crucial role in our work, we put together all the structural results that involve arboricity here. Let us restate the definition once more. 
\begin{definition}[Arboricity$(\arboricity)$]
   The arboricity of a graph $\graph = (\vertexset,\edgeset)$, denoted by $\arboricitygraph{G}$, is the minimum number of spanning forests that $\edgeset$ can be partitioned into.
   \label{def:arboricity}
\end{definition}
The arboricity of a graph can be seen as a measure of the density of the graph. $\arboricitygraph{G}$ can be at least $\left\lceil m/(n-1)\right\rceil$. Also, $\arboricitygraph{G} \geq \arboricitygraph{H}$ where $H$ is any subgraph of $G$. We will write $\arboricity$ instead of $\arboricitygraph{G}$ when the underlying graph is understood. We introduce the following lemma due to~\citep{DBLP:journals/siamcomp/ChibaN85} on the sum of edge degrees over all  edges in the graph.
\begin{lemma}(~\citep{DBLP:journals/siamcomp/ChibaN85})
\label{Lemma: deg(e) sum is m * arboricity}
     Given a graph $\graph = (\vertexset,\edgeset)$ with arboricity $\arboricity$ and $\size{\edgeset} = \edgecount$,  $\sum\limits_{\edge \in \edgeset} \degree{\edge} = 2\edgecount\arboricity$.
\end{lemma}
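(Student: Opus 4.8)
The plan is to exploit the forest decomposition that the definition of arboricity provides, convert it into a low-out-degree orientation of $\graph$, and then charge each edge's contribution to the endpoint that serves as its tail.

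First, since $\arboricitygraph{G} = \arboricity$, partition $\edgeset$ into forests $F_1,\dots,F_\arboricity$. In each forest $F_i$, root every tree arbitrarily and orient each of its edges from child towards parent, so that every vertex has out-degree at most $1$ inside $F_i$. Overlaying these $\arboricity$ orientations yields an orientation of $\graph$ in which every vertex $\altvertex$ has out-degree $d^+(\altvertex)\le\arboricity$.

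Next, for an edge $\edge=(\altvertex,\vertex)$ that is oriented from $\altvertex$ to $\vertex$, use the trivial bound $\degree{\edge}=\min(\degree{\altvertex},\degree{\vertex})\le\degree{\altvertex}$. Summing over all edges, grouped according to their tails,
\[
\sum_{\edge\in\edgeset}\degree{\edge}\;\le\;\sum_{\altvertex\in\vertexset}d^+(\altvertex)\,\degree{\altvertex}\;\le\;\arboricity\sum_{\altvertex\in\vertexset}\degree{\altvertex}\;=\;2\edgecount\arboricity,
\]
where the final equality is the handshake identity $\sum_{\altvertex\in\vertexset}\degree{\altvertex}=2\edgecount$. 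Strictly speaking this argument gives the bound $\sum_{\edge\in\edgeset}\degree{\edge}\le 2\edgecount\arboricity$ (the inequality that is actually invoked in the later sections); the displayed equality in the statement should be read in this sense, since $\min(\degree{\altvertex},\degree{\vertex})$ is in general strictly smaller than $\degree{\altvertex}$.

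The only step carrying any real content is the construction of the bounded-out-degree orientation from the forest decomposition — precisely the place where arboricity is used — after which the remainder is a one-line double-counting. An alternative is to invoke the Nash–Williams formula directly, but the decompose-and-orient route is self-contained and aligns with the way $\arboricity$ is introduced in Definition~\ref{def:arboricity}.
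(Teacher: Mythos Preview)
Your argument is correct and is essentially the standard Chiba--Nishizeki proof: decompose into $\arboricity$ forests, orient each forest toward the roots to get out-degree at most $\arboricity$ everywhere, then charge $\degree{\edge}$ to the tail of $\edge$. The paper itself does not supply a proof of this lemma; it simply cites \cite{DBLP:journals/siamcomp/ChibaN85}, whose proof is the one you have reproduced.

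Your observation that only the inequality $\sum_{\edge\in\edgeset}\degree{\edge}\le 2\edgecount\arboricity$ is established (and only the inequality is ever used, in Theorem~\ref{Theorem: Oracle Triangle Estimate ALgo Works} and Lemma~\ref{Lemma: Heavy Oracle Query Count}) is also correct: the equality written in the lemma statement is a slip, since for instance a path on three edges has $\arboricity=1$, $\edgecount=3$, and $\sum_{\edge}\degree{\edge}=1+2+1=4<6=2\edgecount\arboricity$.
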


The following lemma due to~\citep{DBLP:conf/soda/EdenRS20} builds on the work of~\citep{DBLP:journals/siamcomp/ChibaN85} to bound the number of triangles based on the number of edges $\edgecount$ and arboricity $\arboricity$. 
\begin{lemma}[Triangle Upper Bound ~\citep{DBLP:conf/soda/EdenRS20}]
\label{lemma: arboricity triangle bound}
    Given a graph $\graph = (\vertexset,\edgeset)$ with arboricity $\arboricity$ and $\size{\edgeset} = \edgecount$, the graph $\graph$ has at most $\edgecount\arboricity$ triangles.
\end{lemma}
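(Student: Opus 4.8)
The plan is to obtain the bound directly from Lemma~\ref{Lemma: deg(e) sum is m * arboricity}, which already controls $\sum_{\edge \in \edgeset}\degree{\edge}$; the only extra ingredients are a simple per-edge bound on the number of triangles through an edge and a double-counting identity.

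First I would bound, for each edge $\edge = \fbrac{\altvertex,\vertex}$, the quantity $\numtriangle_\edge$. Every triangle incident on $\edge$ has the form $\fbrac{\vertex',\edge}$ where $\vertex'$ is a common neighbour of $\altvertex$ and $\vertex$; hence $\numtriangle_\edge = \size{\neighbour{\fbrac{\altvertex}} \cap \neighbour{\fbrac{\vertex}}} \le \min\fbrac{\degree{\altvertex},\degree{\vertex}} = \degree{\edge}$. Next, since every triangle $\triangle \in \triangleset$ consists of exactly three edges, summing $\numtriangle_\edge$ over all edges counts each triangle exactly three times, i.e. $\sum_{\edge \in \edgeset}\numtriangle_\edge = 3\numtriangle$.

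Combining these with Lemma~\ref{Lemma: deg(e) sum is m * arboricity} then finishes the argument:
\[
3\numtriangle \;=\; \sum_{\edge \in \edgeset}\numtriangle_\edge \;\le\; \sum_{\edge \in \edgeset}\degree{\edge} \;=\; 2\edgecount\arboricity,
\]
so $\numtriangle \le \tfrac{2}{3}\edgecount\arboricity \le \edgecount\arboricity$, as claimed.

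There is essentially no real obstacle here: all the substantive content is packaged inside Lemma~\ref{Lemma: deg(e) sum is m * arboricity} (the Chiba--Nishizeki estimate, which itself rests on a forest decomposition witnessing the arboricity). The only points requiring any care are the factor-of-three arising from double counting and the inequality $\numtriangle_\edge \le \degree{\edge}$, and neither involves any computation; in fact the argument yields the slightly stronger bound $\tfrac{2}{3}\edgecount\arboricity$, of which $\edgecount\arboricity$ is the convenient weakening used later.
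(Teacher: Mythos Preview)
Your argument is correct: the per-edge bound $\numtriangle_\edge \le \degree{\edge}$ together with the double-counting identity $\sum_\edge \numtriangle_\edge = 3\numtriangle$ and Lemma~\ref{Lemma: deg(e) sum is m * arboricity} immediately gives $\numtriangle \le \tfrac{2}{3}\edgecount\arboricity$. The paper itself does not supply a proof of this lemma---it is simply quoted from \cite{DBLP:conf/soda/EdenRS20}---so there is nothing to compare against; your derivation is the standard one and is exactly what the cited reference does.
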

Note that this upper bound is also tight, i.e., there exists graphs that contain $\edgecount$ edges and $\bigomega{\edgecount\arboricity}$ triangles. Additionally, arboricity $\arboricity$ can be at most $\bigo{\sqrt{\edgecount}}$. Thus all our results can be reformulated by plugging in this upper bound.

\ifarxiv{
\subsection{Chernoff Bounds}
We will be using the following variation of the Chernoff bound that bounds the deviation of the sum of independent Poisson trials~\citep{Mitzenmacher_Upfal_2005}.

\begin{lemma}[Multiplicative Chernoff Bound]\label{Lemma: Multiplicative Chernoff Bound}
    Given i.i.d. random variables $X_1,X_2,...,X_t$ where $\Pr[X_i = 1] = p$ and $\Pr[X_i = 0] = (1-p)$, define $X = \sum_{i \in [t]} X_i$. Then, we have:
    \begin{align*}
    \Pr[X \leq (1-\approxerror) \Exp\tbrac{X}] &\leq \exp{\fbrac{-\frac{\Exp\tbrac{X}\approxerror^2}{3}}} & 0 \leq \approxerror <1\\
    \Pr[X \geq (1+\approxerror) \Exp\tbrac{X}] &\leq \exp{\fbrac{-\frac{\approxerror^2\Exp\tbrac{X}}{2+\approxerror}}} & 0 \leq \approxerror 
    \end{align*}
\end{lemma}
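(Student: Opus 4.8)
The plan is to prove this by the standard exponential-moment (Chernoff--Cram\'er) method. Write $\mu = \Exp[X] = tp$. For the upper tail, I would fix a free parameter $\lambda > 0$ and apply Markov's inequality to the increasing map $x \mapsto e^{\lambda x}$:
\[
\Pr[X \geq (1+\approxerror)\mu] \;\leq\; e^{-\lambda(1+\approxerror)\mu}\,\Exp\!\left[e^{\lambda X}\right].
\]
Using independence of the $X_i$ and then the elementary estimate $1+z \leq e^z$,
\[
\Exp\!\left[e^{\lambda X}\right] \;=\; \prod_{i=1}^{t}\Exp\!\left[e^{\lambda X_i}\right] \;=\; \bigl(1-p+pe^{\lambda}\bigr)^{t} \;\leq\; \exp\!\bigl(\mu(e^{\lambda}-1)\bigr).
\]
Substituting and choosing the minimizing value $\lambda = \ln(1+\approxerror)$ gives the classical form $\Pr[X \geq (1+\approxerror)\mu] \leq \bigl(e^{\approxerror}/(1+\approxerror)^{1+\approxerror}\bigr)^{\mu}$.

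It then remains to check the scalar inequality $e^{\approxerror}/(1+\approxerror)^{1+\approxerror} \leq \exp\bigl(-\approxerror^2/(2+\approxerror)\bigr)$ for $\approxerror \geq 0$, i.e. $f(\approxerror) := (1+\approxerror)\ln(1+\approxerror)-\approxerror \geq g(\approxerror) := \approxerror^2/(2+\approxerror)$. This follows since $f(0)=g(0)=0$, $f'(0)=g'(0)=0$, and $f''(\approxerror)=\tfrac{1}{1+\approxerror} \geq \tfrac{8}{(2+\approxerror)^3}=g''(\approxerror)$ for all $\approxerror \geq 0$ (the last inequality reduces to $(2+\approxerror)^3 \geq 8(1+\approxerror)$, i.e. $\approxerror^3+6\approxerror^2+4\approxerror \geq 0$). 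Integrating twice from the origin yields $f \geq g$, which gives the stated upper-tail bound.

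For the lower tail, I would run the identical argument with $\lambda < 0$, optimized at $\lambda = \ln(1-\approxerror)$, obtaining $\Pr[X \leq (1-\approxerror)\mu] \leq \bigl(e^{-\approxerror}/(1-\approxerror)^{1-\approxerror}\bigr)^{\mu}$; then the inequality $(1-\approxerror)\ln(1-\approxerror)+\approxerror \geq \approxerror^2/2$ on $[0,1)$ — proved the same way, since $\tfrac{1}{1-\approxerror}\geq 1$ — gives $\Pr[X \leq (1-\approxerror)\mu] \leq \exp(-\mu\approxerror^2/2)$, which is at least as strong as the claimed $\exp(-\mu\approxerror^2/3)$. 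The only non-mechanical ingredients are these two one-variable inequalities; since the statement is textbook material, I would present this short derivation and cite~\citep{Mitzenmacher_Upfal_2005}. The main obstacle is purely bookkeeping: stating the optimization over $\lambda$ and the two derivative comparisons cleanly, and being careful that the lower-tail optimizer is a valid (negative) choice of $\lambda$.
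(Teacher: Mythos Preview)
Your proposal is correct and is precisely the standard exponential-moment derivation. The paper itself does not prove this lemma at all: it is stated in the preliminaries as a known concentration bound with a citation to \citep{Mitzenmacher_Upfal_2005}, so your write-up actually supplies more detail than the paper does, and the argument you give is exactly the one found in that reference.
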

}
\fi


\section{Algorithm}
\label{sec:algo}

This section describes the algorithm and its related concepts. Section~\ref{ssec:weightfunc} formally defines the weight function for edges and associated structural results. Section~\ref{ssec:oracle-algo} describes the algorithm assuming access to an idealized oracle that can decide an edge to be heavy or light (based on the the edge having many or less triangles incident on it). Section~\ref{ssec:oracle-implement} describes how to actually implement this oracle within the problem setup. Section~\ref{ssec:final-algo} puts everything together to develop the final algorithm.

\remove{This section describes the algorithm starting from the setting of an idealized oracle that can decide an edge to be heavy or light (based on the edge having many or less triangles incident on it) based on an idealized weight function over the edges. This weight function is developed in Section~\ref{ssec:weightfunc} starting from the notion of heavy and light edges. Section~\ref{ssec:oracle-algo} discusses about how the idealized oracle can develop an estimate of the weight function. The implementation of the idealized oracle is discussed in Section~\ref{ssec:oracle-implement}. Section~\ref{ssec:final-algo} puts everything together to develop the final algorithm.}



\subsection{Weight Function}
\label{ssec:weightfunc}
In this section, we formalize the ideas of heavy and light edges and weight function for the edges. 
\paragraph*{Heavy and light edges and triangles.} First, we define heavy and light edges and correspondingly, heavy and light triangles.
\begin{definition}[$\threshold$-heavy and $\threshold$-light edges]\label{Definition: Heavy and Light Edges}
    An edge $\edge \in \edgeset$ is defined to be a $\threshold$-heavy (resp. $\threshold$-light) edge if it participates in more than $\threshold$ (resp. $\leq \threshold$) triangles.
\end{definition}
\begin{definition}[$\threshold$-heavy and $\threshold$-light triangles]\label{Definition: Heavy and Light Triangles}
    A triangle $\triangle \in \triangleset$ is called a $\threshold$-heavy triangle if all its three edges are $\threshold$-heavy edges. A triangle that is not $\threshold$-heavy is a $\threshold$-light triangle.
\end{definition}


We denote by $\lighttriangles{\threshold}$ and $\heavytriangles{\threshold}$ 
the number of $\threshold$-light and $\threshold$-heavy triangles in the graph $\graph$, respectively. 
The following lemma bounds the number of $\threshold$-heavy triangles in a graph.
\begin{lemma}[Upper Bound on $\heavytriangles{\threshold}$]\label{Lemma: Upper Bound on Heavy Triangles}
Given a graph $\graph = (\vertexset,\edgeset)$ with $\numtriangle$ triangles, the number of $\threshold$-heavy triangles is at most $\frac{3\numtriangle\arboricity}{\threshold}$  .
    \begin{proof}
        Note that the graph $\graph = (\vertexset,\edgeset)$ has $\numtriangle$ triangles containing at most $3\numtriangle$ edges. By Definition~\ref{Definition: Heavy and Light Triangles}, $\threshold$-heavy triangles have all three of their edges to be $\threshold$-heavy edges, each participating in greater than or equal to $\threshold$ triangles. Hence, the number of $\threshold$-heavy edges in $\graph$ is at most $\frac{3\numtriangle}{\threshold}$ .

        Now consider the subgraph $H = (V_H, E_H)$ of $\graph$ induced by the $\threshold$-heavy edges in $\graph$. We know, $\edgeset_H \leq \frac{3\numtriangle}{\threshold}$. Also, $\arboricitygraph{H} \leq \arboricitygraph{\graph} = \arboricity$ (see Section~\ref{ssec:arbor-prop}). Hence, by Lemma~\ref{lemma: arboricity triangle bound}, we know that $H$ contains at most $\frac{3\numtriangle\arboricity}{\threshold}$ triangles.
    \end{proof}
\end{lemma}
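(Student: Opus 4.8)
The plan is to reduce the count of $\threshold$-heavy triangles to an ordinary triangle count in an auxiliary subgraph and then invoke the arboricity-based triangle bound, Lemma~\ref{lemma: arboricity triangle bound}.

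First I would bound the number of $\threshold$-heavy edges by a one-line double-counting argument. Every triangle of $\graph$ uses exactly three edges, so summing $\numtriangle_\edge$ over all $\threshold$-heavy edges $\edge$ counts each triangle at most three times, giving $\sum_{\edge\ \threshold\text{-heavy}} \numtriangle_\edge \le 3\numtriangle$. By Definition~\ref{Definition: Heavy and Light Edges} each heavy edge has $\numtriangle_\edge > \threshold$, so the number of $\threshold$-heavy edges is at most $3\numtriangle/\threshold$.

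Next I would let $H$ be the subgraph of $\graph$ whose edge set is exactly the set of $\threshold$-heavy edges (together with their incident vertices). Then $\size{\edgeset_H} \le 3\numtriangle/\threshold$, and since arboricity is monotone under taking subgraphs we have $\arboricitygraph{H} \le \arboricitygraph{\graph} = \arboricity$ (as recalled in Section~\ref{ssec:arbor-prop}). Applying Lemma~\ref{lemma: arboricity triangle bound} to $H$ bounds its number of triangles by $\size{\edgeset_H}\cdot\arboricitygraph{H} \le 3\numtriangle\arboricity/\threshold$. Finally, by Definition~\ref{Definition: Heavy and Light Triangles} every $\threshold$-heavy triangle of $\graph$ has all three of its edges $\threshold$-heavy, hence is itself a triangle of $H$; therefore $\heavytriangles{\threshold}$ is at most the number of triangles in $H$, which is at most $3\numtriangle\arboricity/\threshold$.

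There is essentially no hard step here: the only things to get right are that the double counting overcounts each triangle by a factor of at most three (immediate since a triangle has three edges) and that the arboricity monotonicity is applied to the correct subgraph; both are routine, so I expect the argument to go through cleanly.
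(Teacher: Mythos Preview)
Your proposal is correct and follows essentially the same argument as the paper: bound the number of $\threshold$-heavy edges by $3\numtriangle/\threshold$ via the same double-counting (each triangle contributes three edges), pass to the subgraph $H$ on the heavy edges, use monotonicity of arboricity, and apply Lemma~\ref{lemma: arboricity triangle bound}. The only difference is that you spell out the double-counting inequality $\sum_{\edge\ \text{heavy}} \numtriangle_\edge \le 3\numtriangle$ explicitly, whereas the paper phrases it more tersely.
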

The following corollary follows from Lemma~\ref{Lemma: Upper Bound on Heavy Triangles} and the fact that $\numtriangle = \lighttriangles{\threshold} + \heavytriangles{\threshold}$. 
\begin{corollary}[Lower Bound on $\lighttriangles{\threshold}$]\label{Corollary: Lower Bound on Light Triangles}
    Given a graph $\graph = (\vertexset,\edgeset)$, there are at least $(1-\frac{3\arboricity}{\threshold}) \numtriangle$  $\threshold$-light triangles.
\end{corollary}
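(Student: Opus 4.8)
The plan is to derive this immediately from Lemma~\ref{Lemma: Upper Bound on Heavy Triangles} together with the trivial partition identity $\numtriangle = \lighttriangles{\threshold} + \heavytriangles{\threshold}$. First I would observe that every triangle of $\graph$ is, by Definition~\ref{Definition: Heavy and Light Triangles}, either $\threshold$-heavy (all three edges $\threshold$-heavy) or $\threshold$-light (at least one edge $\threshold$-light), and these two cases are mutually exclusive and exhaustive, so $\numtriangle = \lighttriangles{\threshold} + \heavytriangles{\threshold}$.

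Next I would rearrange to get $\lighttriangles{\threshold} = \numtriangle - \heavytriangles{\threshold}$ and plug in the bound $\heavytriangles{\threshold} \le \frac{3\numtriangle\arboricity}{\threshold}$ from Lemma~\ref{Lemma: Upper Bound on Heavy Triangles}. This gives
\[
\lighttriangles{\threshold} \;=\; \numtriangle - \heavytriangles{\threshold} \;\ge\; \numtriangle - \frac{3\numtriangle\arboricity}{\threshold} \;=\; \left(1 - \frac{3\arboricity}{\threshold}\right)\numtriangle,
\]
which is exactly the claimed bound.

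There is no real obstacle here: the content is entirely in the already-proved Lemma~\ref{Lemma: Upper Bound on Heavy Triangles}, and the corollary is a one-line arithmetic rearrangement. The only thing worth stating explicitly is the disjoint-and-exhaustive nature of the heavy/light classification of triangles, which justifies the additive identity; this is immediate from Definition~\ref{Definition: Heavy and Light Triangles}. (One may also note the bound is only meaningful when $\threshold > 3\arboricity$, but no hypothesis on $\threshold$ is needed for the inequality itself to hold.)
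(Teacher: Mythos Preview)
Your proposal is correct and matches the paper's approach exactly: the paper states that the corollary follows from Lemma~\ref{Lemma: Upper Bound on Heavy Triangles} together with the identity $\numtriangle = \lighttriangles{\threshold} + \heavytriangles{\threshold}$, which is precisely the argument you give.
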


\paragraph*{Weight function for the edges.}
Now, we define a weight function for the edges. As a triangle is incident to multiple edges, the objective of a weight function is to charge each of the light triangles to exactly one of its participating light edges. To ensure this, we do not charge any triangle to the heavy edges. Each light edge $\edge$ can be charged by at most $\numtriangle_\edge$ triangles, i.e., all the triangles $\edge$ participates in. To avoid over-counting, we ensure that the sum of the weight functions over all edges is equal to the number of light triangles, $\lighttriangles{\threshold}$. $\fbrac{\triangle,\edge}$ denotes that the triangle $\triangle$ is charged through the edge $\edge$.


\begin{definition}[Triangle weight function $(\weightfunc)$]\label{Definition: Weight Function}
    We define a function $\func{\weightfunc}{\edgeset}{\Nat}$ to be a triangle weight function if it satisfies the following two conditions:
\[
\begin{array}{llll}
\mbox{Condition (1)}: & \weightfunc(\edge) & \leq & 
\left\{
\begin{array}{ll}
\numtriangle_\edge & \text{if $\edge$ is a light edge}\\
0 &\text{if $\edge$ is a heavy edge} 
\end{array} 
\right. 
\\
\mbox{Condition (2)}: & \sum\limits_{\edge \in \edgeset} \weightfunc(\edge) & = & \lighttriangles{\threshold} 
\end{array}
\]

\remove{   
    \begin{enumerate}
        \item $\weightfunc(\edge) \leq \begin{dcases}
            \numtriangle_\edge &\text{if $\edge$ is a light edge}\\
            0 &\text{if $\edge$ is a heavy edge}
        \end{dcases}$
        \complain{\item $\sum_{\edge \in \edgeset} \weightfunc(\edge) = \numtriangle_{light}$ (Debarshi: should it not be \lighttriangles{\threshold}?)}
    \end{enumerate}
}
\end{definition}

Observe that there are multiple such weight functions (e.g., a triangle with 3 light edges can be assigned to any one of these 3 edges). Henceforth, we denote the set consisting of valid triangle weight functions by $\weightfamily$. We now state a property of valid weight functions. Note that this property is true for any $\weightfunc \in \weightfamily$.

\begin{lemma}\label{Lemma: Weight Function Expectation}
    Consider any triangle weight function $\weightfunc \in \weightfamily$. If we select an edge $e \in E$ uniformly at random, then the expected value of $w(e)$ is $\frac{\lighttriangles{\threshold}}{\edgecount}$, i.e., $\Exp_{e \sim \uniform(\edgeset)} \weightfunc(\edge) = \frac{\lighttriangles{\threshold}}{\edgecount}$.

    \begin{proof}
        Given that the edges have been chosen uniformly at random and the condition of the triangle weight function, we have:
        $$\Exp_{\edge \sim \uniform(\edgeset)} \weightfunc(\edge) ~~~~~ = ~~~~~ \sum_{\edge \in \uniform(\edgeset)} \frac{1}{\edgecount} \weightfunc(\edge)  ~~~~~ = ~~~~~ \frac{1}{\edgecount} \sum_{\edge \in \edgeset} \weightfunc(\edge) ~~~~~ = ~~~~~ \frac{\lighttriangles{\threshold}}{\edgecount}$$
    \remove{
        \begin{align*}
            \Exp_{\edge \sim \uniform(\edgeset)} \weightfunc(\edge) &= \sum_{\edge \in \uniform(\edgeset)} \frac{1}{\edgecount} \weightfunc(\edge)\\
            &= \frac{1}{\edgecount} \sum_{\edge \in \edgeset} \weightfunc(\edge)\\
            &= \frac{\lighttriangles{\threshold}}{\edgecount} &\text{By Definition of Triangle Weight Function}
        \end{align*}
        }
    \end{proof}
\end{lemma}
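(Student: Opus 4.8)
The plan is to simply unwind the definition of expectation under the uniform distribution on $\edgeset$ and then invoke Condition~(2) of Definition~\ref{Definition: Weight Function}. Since $\edge$ is drawn uniformly from a set of size $\edgecount$, every edge is selected with probability $1/\edgecount$, so
\[
\Exp_{\edge \sim \uniform(\edgeset)} \weightfunc(\edge) \;=\; \sum_{\edge \in \edgeset} \frac{1}{\edgecount}\,\weightfunc(\edge) \;=\; \frac{1}{\edgecount} \sum_{\edge \in \edgeset} \weightfunc(\edge),
\]
where pulling the constant $1/\edgecount$ out of the (finite) sum is trivially valid. By Condition~(2), $\sum_{\edge \in \edgeset} \weightfunc(\edge) = \lighttriangles{\threshold}$, and substituting this yields the claimed value $\lighttriangles{\threshold}/\edgecount$.

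There is essentially no obstacle to overcome: the statement is a one-line consequence of the normalization baked into the weight function. The only point worth flagging is that the argument uses \emph{only} Condition~(2), not Condition~(1), so the identity holds for \emph{every} $\weightfunc \in \weightfamily$, exactly as the lemma asserts; Condition~(1) is not needed here and will instead be the ingredient used later to bound the variance of the estimator built by averaging $\weightfunc$ over sampled edges. This lemma is what makes $\edgecount$ times the empirical average of $\weightfunc$ over a uniform sample of edges an unbiased estimator of $\lighttriangles{\threshold}$, which in turn (via Corollary~\ref{Corollary: Lower Bound on Light Triangles}) approximates $\numtriangle$ for a suitable choice of $\threshold$.
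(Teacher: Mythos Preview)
Your proof is correct and follows essentially the same approach as the paper: expand the expectation under the uniform distribution on $\edgeset$, factor out $1/\edgecount$, and invoke Condition~(2) of Definition~\ref{Definition: Weight Function}. Your additional remark that only Condition~(2) is used here is accurate and a nice clarification, but the core argument is identical.
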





\subsection{Oracle Based Algorithm}
\label{ssec:oracle-algo}
Our algorithm, due to its usage of the \emph{triangle weight function}, requires knowledge of whether an edge is \emph{heavy} or \emph{light}. The problem of deciding whether an edge is heavy is non-trivial as it directly relates to number of triangles the edge participates in. In this section, we assume black-box access to an oracle $\exactheavyoracle{}$ that helps us to determine whether an edge is heavy or not. 
\begin{align*}
    \exactheavyoracle(\edge,\arboricity,\approxerror) &=\begin{dcases}
        1 &\text{if edge $e$ is a $\frac{\upperthreshold\arboricity}{\approxerror}$-heavy edge, i.e., $\threshold = \frac{\upperthreshold\arboricity}{\approxerror}$}\\
        0 &\text{if edge $e$ is a $\frac{\lowerthreshold\arboricity}{\approxerror}$-light edge, i.e., $\threshold = \frac{\lowerthreshold\arboricity}{\approxerror}$}
    \end{dcases}
\end{align*}
Here $h$ and $l$ $(h > l)$ are constants to be determined later. \remove{In this section, we develop our algorithm assuming black-box access to this oracle.} We will discuss an efficient 
implementation of this oracle later. 

Let us first consider the case where we are given an oracle that given an edge $\edge$, returns the exact value of a weight function, $\weightfunc(\edge)$. Given, we can sample edges uniformly at random through the \randedgeq{} query, we can compute $\Exp_{\edge \sim \uniform(\edgeset)}\tbrac{\weightfunc(\edge)} = \lighttriangles{\frac{\lowerthreshold\arboricity}{\approxerror}}$ using this oracle on the sampled edges. However, no such oracle exist in our model. Hence, we try to simulate one through an empirical estimate of the weight function, $\empweightfunc(\edge)$. Recall the fact that there are many valid weight functions, each being characterized by every light triangle being charged to a unique edge. Our algorithm will work if the empirical weight function $\empweightfunc$ estimates any one of these weight functions. We achieve this by ensuring that a triangle is not sampled through more than one edge (see Line~\ref{Line: Remove Duplicate Triangles} of Algorithm~\ref{Algorithm: Random Edge Arboricity Triangle Counting Oracle Triangle Estimate}). Thus, the assignments that we consider can be extended to a valid weight function $\weightfunc$, and our algorithm can be thought of as estimating this weight function through $\empweightfunc$. We develop our initial algorithm ( Algorithm \ref{Algorithm: Random Edge Arboricity Triangle Counting Oracle Triangle Estimate}) assuming access to an estimate of $\numtriangle$ as $\esttriangle$ satisfying the following assumption:
\begin{assumption}\label{Assumption: Triangle 2 Factor Estimate}
$\esttriangle \leq 2\numtriangle$.
\end{assumption}
In Algorithm~\ref{Algorithm: Random Edge Arboricity Triangle Counting Oracle Triangle Estimate}, we assume that we know $m$ exactly. In fact, our algorithm and analysis work even if we have a constant factor approximation of $m$. This can be achieved by using $O(1)$  queries \cite{assadi2018simple}.

\begin{algorithm}[ht!]
    \caption{Triangle Counting Algorithm - with Oracle Access, and $\esttriangle$}\label{Algorithm: Random Edge Arboricity Triangle Counting Oracle Triangle Estimate}
    \begin{algorithmic}[1]
        \Require \degreeq{}, \neighbourq{}, \edgeexistsq{}, and \randedgeq{} query access to a graph $\graph$. Parameters $\esttriangle, \arboricity$, $\approxerror$, $\edgecount$ and oracle access to \exactheavyoracle{} with threshold constants $\lowerthreshold,\upperthreshold$
        \State $\edgesamplesize \gets 4\constant(1+\upperthreshold)\approxerror^{-3}(\edgecount\arboricity/\esttriangle)\log\vertexcount$ 
        \State $\samplededges \gets \emptyset$ \Comment{$\samplededges$ is the set of random edges sampled. $\size{\samplededges} \leq \edgesamplesize$ growing upto $\edgesamplesize$}
        \State $\sampledtriangles \gets \emptyset$ \Comment{$\sampledtriangles$ is the set of light triangles sampled through the edges in $\samplededges$}
        \For{$i \in [\edgesamplesize]$}
            \State $\edge_i \gets \randedgeq{}$
            \State $\samplededges \gets \samplededges \cup \edge_i$
            \State Let $\edge_i = (\vertex_i, x)$ where $\degree{\vertex_i} < \degree{x}$
            \remove{Let $\vertex_i$ be the endpoint of $\edge_i$ with smaller degree, and $x$ be the endpoint of $\edge_i$ that is not $\vertex_i$.} \Comment{Requires two \degreeq{} queries}
            \If{($\exactheavyoracle(\edge_i,\arboricity,\approxerror) = 0$)} \Comment{$\edge_i$ is a $\frac{\lowerthreshold\arboricity}{\approxerror}$-light edge} 
            \label{line: heavyoracle call}
                \State $\querycount_{\edge_i} \gets 0$ \Comment{$\querycount_\edge$ denotes the number of queries for each edge $\edge$}
                \If{$\degree{\vertex_i} \leq \arboricity$}
                    \State set $\querycount_{\edge_i} \gets 1$ with probability $\frac{\degree{\vertex_i}}{\arboricity}$
                \Else 
                    \State set $\querycount_{\edge_i} \gets \ceil{\frac{\degree{\vertex_i}}{\arboricity}}$
                \EndIf
            \For{$j \in [\querycount_{\edge_i}]$}
                \State Choose $k \gets \uniform\fbrac{\sbrac{1,2,...,\degree{\vertex_i}}}$
                \State $\altvertex \gets \neighbourq{\fbrac{\vertex_i,k}}$
                \If{($\edgeexistsq{\fbrac{\altvertex,x}} = 1$ and $\triangle = (\altvertex,\edge_i)$ is not in $\samplededges$ through another edge $\edge'$)}
                \label{Line: Remove Duplicate Triangles} 
                    \State $\sampledtriangles \gets \sampledtriangles \cup \triangle$ \\
                    \Comment{A triangle $\triangle$ may occur in $\sampledtriangles$ multiple times, but each time it will be through same edge $\edge$}
                \EndIf
            \EndFor
            \Else
                \State $\querycount_{\edge_i} \gets 0$
            \EndIf
        \EndFor
        \For{$i \in [\edgesamplesize]$}
            \If{$\querycount_{\edge_i} > 0$}
                \State $\empweightfunc(\edge_i) = \frac{1}{\querycount_{\edge_i}} \sum_{(\triangle,\edge_i) \in \sampledtriangles} \max\fbrac{\arboricity,\degree{\edge_i}}$
            \Else
                \State $\empweightfunc\fbrac{\edge_i} = 0$
            \EndIf
        \EndFor
        \State \Return $\emptriangle = \frac{\edgecount}{\edgesamplesize}\sum_{i \in \edgesamplesize} \empweightfunc(\edge_i)$
    \end{algorithmic}
\end{algorithm}
\begin{lemma}\label{lemma: E[Y_I] Weight Func Algo}
    Algorithm~\ref{Algorithm: Random Edge Arboricity Triangle Counting Oracle Triangle Estimate} ensures that $\Exp\tbrac{\empweightfunc(\edge_i)} = \lighttriangles{\frac{\lowerthreshold\arboricity}{\approxerror}}/\edgecount $, and $\Exp\tbrac{\emptriangle} = \lighttriangles{\frac{\lowerthreshold\arboricity}{\approxerror}}$.
    \begin{proof}
    Let $\sE_i$ be the event that the edge $\edge$ is chosen in the $i$-th round. We proceed with the proof by considering two different cases: $\degree{\edge} < \arboricity$ and $\degree{\edge} \geq \arboricity$. 
    
    
    {\bf Case I }($\degree{\edge} < \arboricity$):
    When $\degree{\edge} < \arboricity$, $\querycount_\edge$ is set to $0$ with probability $1 - \degree{\edge}/\arboricity$ and \complain{1} with probability $\degree{\edge}/\arboricity$, and $\empweightfunc(\edge)$ is evaluated through a single query. Thus, \begin{align}
        \Exp[\empweightfunc(\edge_i)|\sE_i] &= \frac{\degree{\edge}}{\arboricity} \sum_{k \in [\weightfunc(\edge)]} \frac{1}{\degree{\edge}} \arboricity + \fbrac{1 - \frac{\degree{\edge}}{\arboricity}}\cdot0 = \weightfunc(\edge)\label{Eq: E[what(e)] low degree edge}
    \end{align}

     
    {\bf Case II }($\degree{\edge} \geq \arboricity$):
    On the other hand, when $\degree{\edge} \geq \arboricity$, let $Z_j, j \in [\querycount]$ denote the contribution of each of the $\querycount$ queries made for the edge $\edge$ to the weight function estimate. Then, we have:
    \begin{align}
        \Exp[Z_j|\sE_i] &= \sum_{k \in [\weightfunc(\edge)]} \frac{1}{\degree{\edge}} \degree{\edge}= \weightfunc(\edge)\label{Eq: E[Z_j] random edge arboricity}
    \end{align}
    Correspondingly, we have by linearity of expectation and Equation~\ref{Eq: E[Z_j] random edge arboricity}:
    \begin{align}
        \Exp\tbrac{\empweightfunc(\edge_i)|\sE_i} = \frac{1}{\querycount_\edge}\sum_{j \in [\querycount_{\edge}]} \Exp\tbrac{Z_j|\sE_i} = \Exp\tbrac{Z_j|\sE_i} = \weightfunc(\edge)\label{Eq: E[what(e)] high degree edge}
    \end{align}

    Given that we draw each edge $\edge \in \samplededges$ uniformly at random, we now have by Equations~\ref{Eq: E[what(e)] low degree edge},~\ref{Eq: E[what(e)] high degree edge}, and Lemma~\ref{Lemma: Weight Function Expectation}:
    \begin{align*}
        \Exp\tbrac{\empweightfunc(\edge_i)} = \Exp_{\edge \sim \uniform\fbrac{\edgeset}} \weightfunc\fbrac{\edge} = \lighttriangles{\frac{\lowerthreshold\arboricity}{\approxerror}}/\edgecount 
    \end{align*}
    By linearity of expectations, we obtain
    \begin{align*}
        \Exp\tbrac{\emptriangle} = \Exp\tbrac{\frac{\edgecount}{\edgesamplesize}\sum_{i \in \edgesamplesize} \empweightfunc(\edge_i)} = \lighttriangles{\frac{\lowerthreshold\arboricity}{\approxerror}} 
    \end{align*}

    \end{proof}
\end{lemma}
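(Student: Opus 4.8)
The plan is to fix one valid triangle weight function $\weightfunc\in\weightfamily$ --- the one realised implicitly by the crediting rule of Algorithm~\ref{Algorithm: Random Edge Arboricity Triangle Counting Oracle Triangle Estimate} --- and to show that, conditioned on which edge is drawn in round $i$, $\empweightfunc(\edge_i)$ is an unbiased estimator of $\weightfunc$ evaluated at that edge; the two claimed identities then follow by averaging over the uniform choice of $\edge_i$ (Lemma~\ref{Lemma: Weight Function Expectation}, taking $\threshold=\frac{\lowerthreshold\arboricity}{\approxerror}$) and by linearity of expectation. The starting observation is that for a $\frac{\lowerthreshold\arboricity}{\approxerror}$-light edge $\edge$ --- i.e.\ one on which \exactheavyoracle{} returns $0$ --- the triangles incident to $\edge$ are in bijection with the common neighbours of its two endpoints; the inner loop detects such a triangle exactly when the uniformly sampled neighbour $\altvertex$ of the lower-degree endpoint $\vertex_i$ is its third vertex, and the guard on Line~\ref{Line: Remove Duplicate Triangles} guarantees that each detected triangle is credited through exactly one of its edges. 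Hence the crediting map is a partial assignment of triangles to light edges that extends to a valid weight function $\weightfunc\in\weightfamily$, and --- the key point --- the number of neighbours of $\vertex_i$ whose draw would credit a triangle to $\edge$ equals $\weightfunc(\edge)$; for a heavy edge we simply have $\querycount_\edge=0$ and $\empweightfunc(\edge)=0=\weightfunc(\edge)$ by the first condition of Definition~\ref{Definition: Weight Function}. Each crediting draw adds $\max(\arboricity,\degree{\edge})$ to the sum defining $\empweightfunc(\edge)$.

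Next I would condition on the event $\sE_i$ that $\edge_i=\edge$ for a light edge $\edge$ and split on $\degree{\edge}=\degree{\vertex_i}$ versus $\arboricity$. If $\degree{\edge}<\arboricity$, then $\querycount_\edge$ is $1$ with probability $\degree{\edge}/\arboricity$ and $0$ otherwise; conditioned on the former, a single uniform neighbour of $\vertex_i$ is a crediting neighbour with probability $\weightfunc(\edge)/\degree{\edge}$ and then contributes $\max(\arboricity,\degree{\edge})=\arboricity$, whence
\[
\Exp[\empweightfunc(\edge)\mid\sE_i]=\frac{\degree{\edge}}{\arboricity}\cdot\frac{\weightfunc(\edge)}{\degree{\edge}}\cdot\arboricity+\Bigl(1-\frac{\degree{\edge}}{\arboricity}\Bigr)\cdot 0=\weightfunc(\edge).
\]
If instead $\degree{\edge}\ge\arboricity$, then $\querycount_\edge=\ceil{\degree{\edge}/\arboricity}$ is deterministic and $\empweightfunc(\edge)=\frac1{\querycount_\edge}\sum_{j\in[\querycount_\edge]}Z_j$, where $Z_j$ --- the contribution of the $j$-th neighbour draw --- equals $\max(\arboricity,\degree{\edge})=\degree{\edge}$ with probability $\weightfunc(\edge)/\degree{\edge}$ and $0$ otherwise; hence $\Exp[Z_j\mid\sE_i]=\weightfunc(\edge)$ and therefore $\Exp[\empweightfunc(\edge)\mid\sE_i]=\weightfunc(\edge)$ here too. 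Combining with the heavy-edge case, $\Exp[\empweightfunc(\edge_i)\mid\edge_i=\edge]=\weightfunc(\edge)$ for every $\edge\in\edgeset$.

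Finally, since \randedgeq{} returns a uniformly random edge, $\Exp[\empweightfunc(\edge_i)]=\frac1{\edgecount}\sum_{\edge\in\edgeset}\weightfunc(\edge)=\lighttriangles{\frac{\lowerthreshold\arboricity}{\approxerror}}/\edgecount$ by Lemma~\ref{Lemma: Weight Function Expectation} (equivalently, the second condition of Definition~\ref{Definition: Weight Function}), and by linearity of expectation $\Exp[\emptriangle]=\frac{\edgecount}{\edgesamplesize}\sum_{i\in[\edgesamplesize]}\Exp[\empweightfunc(\edge_i)]=\lighttriangles{\frac{\lowerthreshold\arboricity}{\approxerror}}$. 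I expect the only genuinely delicate step to be the first one: making precise (via a fixed tie-break, so that it is not run-dependent) that the guard on Line~\ref{Line: Remove Duplicate Triangles} realises a single well-defined valid weight function $\weightfunc$, so that ``the number of crediting neighbours of $\vertex_i$ equals $\weightfunc(\edge)$'' holds as a deterministic identity; granting that, the rest is a short conditioning-and-linearity computation. A minor point also worth stating explicitly is why the relevant threshold is $\frac{\lowerthreshold\arboricity}{\approxerror}$ (the value certified ``light'' by \exactheavyoracle{}) rather than $\frac{\upperthreshold\arboricity}{\approxerror}$.
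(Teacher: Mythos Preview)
Your proposal is correct and follows essentially the same approach as the paper: condition on the drawn edge, split into the two cases $\degree{\edge}<\arboricity$ and $\degree{\edge}\ge\arboricity$, show $\Exp[\empweightfunc(\edge_i)\mid\sE_i]=\weightfunc(\edge)$ in each, then average over the uniform edge choice and apply Lemma~\ref{Lemma: Weight Function Expectation} and linearity. You are in fact more explicit than the paper about the heavy-edge case and about the delicate point that Line~\ref{Line: Remove Duplicate Triangles} must realise a single run-independent $\weightfunc\in\weightfamily$; the paper leaves this largely implicit.
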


Note that Lemma~\ref{lemma: E[Y_I] Weight Func Algo} holds irrespective of whether Assumption~\ref{Assumption: Triangle 2 Factor Estimate} is satisfied or not. Next, we turn our attention to the variance of $\empweightfunc(\edge)$.

\begin{lemma}\label{lemma: Var[Y_i] Weight Func Algo}
    Algorithm~\ref{Algorithm: Random Edge Arboricity Triangle Counting Oracle Triangle Estimate} ensures that $\Var[\empweightfunc(\edge_i)] \leq \frac{(1+\upperthreshold)\arboricity}{\approxerror} \cdot \Exp_{\edge \sim \uniform\fbrac{\edgeset}}[\weightfunc(\edge)]$.
    \begin{proof}
        Again, we consider two different cases as in the proof of Lemma \ref{lemma: E[Y_I] Weight Func Algo}.  \remove{$\degree{\edge} \geq \arboricity$ and $\degree{\edge} < \arboricity$.}
        
      
      {\bf Case I }($\degree{\edge} < \arboricity$):
        When $\degree{\edge} < \arboricity$, Algorithm~\ref{Algorithm: Random Edge Arboricity Triangle Counting Oracle Triangle Estimate} makes at most $1$ query as $\querycount \leq 1$. As $\empweightfunc(\edge) \leq \arboricity$, we have 
        \begin{align}
            \Var[\empweightfunc(\edge_i)|\sE_i] \leq \Exp[\empweightfunc(\edge_i)^2|\sE_i] \leq \arboricity\Exp\tbrac{\empweightfunc\fbrac{\edge_i}|\sE_i} \label{Eq: Var[Y_i,S, Small Degree Edge] random edge arboricity}
        \end{align}
        
        \remove{
        $$\leq \frac{\upperthreshold\arboricity}{\approxerror}\Exp[\empweightfunc\fbrac{\edge_i}|\sE_i] $$
        \begin{align*}
            \Var[\empweightfunc(\edge_i)] &\leq \Exp[\empweightfunc(\edge_i)^2]\\
                        &\leq \arboricity\Exp\tbrac{\empweightfunc\fbrac{\edge}}&\text{As $\empweightfunc(\edge) \leq \arboricity$}\\
                        &\leq \frac{\upperthreshold\arboricity}{\approxerror}\Exp[\empweightfunc]
        \end{align*}
        }
        
        {\bf Case II }($\degree{\edge} \geq \arboricity$):
        Now we consider the case $\degree{\edge} \geq \arboricity$ which is more involved as $\querycount \geq 1$. We first estimate the variance of each $Z_j$ individually as defined in Lemma~\ref{lemma: E[Y_I] Weight Func Algo}. We condition on the event that edge $\edge$ is chosen in the $i$-th stage, denoted by $\sE_i$. If $\exactheavyoracle(\edge,\arboricity,\approxerror) = 1$, then $Var[Z_j] = 0, \forall j \in [\querycount]$. Hence, we condition on the event that $\exactheavyoracle(\edge,\arboricity,\approxerror) = 0$ from now on. As $Z_j \leq \degree{\edge}$,
        \begin{align}
             Var[Z_j|\sE_i]  \leq \Exp[Z_j^2|\sE_i] 
                                \leq \degree{\edge_i}\Exp[Z_j|\sE_i] \label{Eq: Var[Z_j]}
        \end{align}
        \remove{
        \begin{align}
            \nonumber Var[Z_j|\sE_i]    &\leq \Exp[Z_j^2|\sE_i]&\\
                                &\leq \degree{\edge}\Exp[Z_j|\sE_i] &\text{$Z_j \leq \degree{\edge}$}\label{Eq: Var[Z_j]}
        \end{align}
        }
        After having bounded the variance of the contribution of each query, we now obtain the variance of the weight estimate of the edge.
        \begin{align}
          \nonumber  \Var[\empweightfunc(\edge_i)|\sE_i]   &=\Var\tbrac{\frac{1}{\querycount_{\edge_i}}\sum_{j \in \querycount_{\edge_i}} Z_j|\sE_i}&\\
          \nonumber                      &=\frac{1}{\querycount_{\edge_i}^2}\sum_{j \in \querycount_{\edge_i}}\Var\left[ Z_j|\sE_i \right]&\text{($Z_j$ are i.i.d. given $\sE_i$)}\\
          \nonumber                      &=\frac{\degree{\edge_i}}{\querycount_{\edge_i}}\sum_{j \in \querycount_{\edge_i}}\frac{1}{\querycount_{\edge_i}}\Exp[ Z_j|\sE_i ]&\text{(by Equation~\ref{Eq: Var[Z_j]} and linearity of expectation)}\\
          &\leq \arboricity \Exp[\empweightfunc(\edge_i)|\sE_i]&\text{ $\left(\mbox{as }  \querycount_{\edge_i} = \ceil{\frac{\degree{\edge}}{\arboricity}}\right)$}\label{Eq: Var[Y_i,S,High Degree Edge] random edge arboricity}
\end{align}

        Now, we remove the conditioning on $\sE_i$ using law of total variance:
        \begin{align*}
            \Var[\empweightfunc(\edge_i)] &= \Exp_{\edge_i}[\Var[\empweightfunc(\edge_i)|\sE_i]] + \Var_{\edge_i}[\Exp[\empweightfunc(\edge_i)|\sE_i]] &\text{(by law of total variance)}\\
            &\leq \Exp_{\edge_i}[\arboricity \Exp[\empweightfunc(\edge_i)|\sE_i]] + \Exp_{\edge_i} [\Exp[\empweightfunc(\edge_i)|\sE_i]^2]&\text{(by Equation~\ref{Eq: Var[Y_i,S, Small Degree Edge] random edge arboricity} and~\ref{Eq: Var[Y_i,S,High Degree Edge] random edge arboricity})}\\
            &\leq \arboricity \Exp_{\edge_i}[\Exp[\empweightfunc(\edge_i)|\sE_i]] + \frac{\upperthreshold\arboricity}{\approxerror} \cdot \Exp_{\edge_i} [\Exp[\empweightfunc(\edge_i)|\sE_i]]&\text{(as $\exactheavyoracle(\edge) = 0$, $\frac{\upperthreshold\arboricity}{\approxerror}\geq |T_e|$)}\\
            &\leq \frac{(1+\upperthreshold)\arboricity}{\approxerror} \cdot \Exp[\weightfunc(\edge)] 
        \end{align*}
    \end{proof}
\end{lemma}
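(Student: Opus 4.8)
The plan is to bound the \emph{conditional} variance $\Var[\empweightfunc(\edge_i)\mid\sE_i]$, where $\sE_i$ is the event that a fixed edge $\edge$ is the one sampled in round $i$, by $\arboricity\cdot\Exp[\empweightfunc(\edge_i)\mid\sE_i]$ for every $\edge$, and then to assemble the unconditional variance via the law of total variance. As in the proof of Lemma~\ref{lemma: E[Y_I] Weight Func Algo}, I would split on whether $\degree{\edge}<\arboricity$ or $\degree{\edge}\ge\arboricity$, since these are exactly the two branches of the algorithm that produce $\empweightfunc(\edge_i)$.

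When $\degree{\edge}<\arboricity$ we have $\max(\arboricity,\degree{\edge})=\arboricity$ and $\querycount_{\edge_i}\le 1$, so $\empweightfunc(\edge_i)$ takes values in $\{0,\arboricity\}$; hence $\empweightfunc(\edge_i)^2\le\arboricity\,\empweightfunc(\edge_i)$ almost surely, and $\Var[\empweightfunc(\edge_i)\mid\sE_i]\le\Exp[\empweightfunc(\edge_i)^2\mid\sE_i]\le\arboricity\,\Exp[\empweightfunc(\edge_i)\mid\sE_i]$ (if the oracle declares $\edge$ heavy the left side is $0$ anyway). When $\degree{\edge}\ge\arboricity$, condition further on $\exactheavyoracle(\edge,\arboricity,\approxerror)=0$, as otherwise all quantities vanish. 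Writing $\empweightfunc(\edge_i)=\querycount_{\edge_i}^{-1}\sum_j Z_j$ with the i.i.d.\ per-query contributions $Z_j$ from Lemma~\ref{lemma: E[Y_I] Weight Func Algo}, each $Z_j$ lies in $\{0,\degree{\edge}\}$, so $\Var[Z_j\mid\sE_i]\le\degree{\edge}\Exp[Z_j\mid\sE_i]$; independence of the $Z_j$ then gives $\Var[\empweightfunc(\edge_i)\mid\sE_i]\le(\degree{\edge}/\querycount_{\edge_i})\Exp[Z_j\mid\sE_i]$, and since $\querycount_{\edge_i}=\lceil\degree{\edge}/\arboricity\rceil\ge\degree{\edge}/\arboricity$ while $\Exp[Z_j\mid\sE_i]=\Exp[\empweightfunc(\edge_i)\mid\sE_i]$ (again from Lemma~\ref{lemma: E[Y_I] Weight Func Algo}), this collapses to $\arboricity\,\Exp[\empweightfunc(\edge_i)\mid\sE_i]$.

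To finish, apply the law of total variance: $\Var[\empweightfunc(\edge_i)]=\Exp_{\edge_i}\Var[\empweightfunc(\edge_i)\mid\sE_i]+\Var_{\edge_i}\Exp[\empweightfunc(\edge_i)\mid\sE_i]$. The first term is at most $\arboricity\,\Exp_{\edge_i}\Exp[\empweightfunc(\edge_i)\mid\sE_i]=\arboricity\,\Exp_{\edge\sim\uniform(\edgeset)}[\weightfunc(\edge)]$ by the case analysis together with Lemma~\ref{lemma: E[Y_I] Weight Func Algo}. For the second term, $\Exp[\empweightfunc(\edge_i)\mid\sE_i]=\weightfunc(\edge)\le\numtriangle_\edge$ by Definition~\ref{Definition: Weight Function}, and whenever $\weightfunc(\edge)>0$ the oracle returned $0$ on $\edge$, so $\edge$ is not $\frac{\upperthreshold\arboricity}{\approxerror}$-heavy and $\weightfunc(\edge)\le\numtriangle_\edge\le\frac{\upperthreshold\arboricity}{\approxerror}$ pointwise; therefore $\Var_{\edge_i}\Exp[\empweightfunc(\edge_i)\mid\sE_i]\le\Exp_{\edge_i}\Exp[\empweightfunc(\edge_i)\mid\sE_i]^2\le\frac{\upperthreshold\arboricity}{\approxerror}\Exp_{\edge\sim\uniform(\edgeset)}[\weightfunc(\edge)]$. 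Summing the two and using $\approxerror<1$ to absorb $\arboricity\le\arboricity/\approxerror$ yields the claimed bound $\frac{(1+\upperthreshold)\arboricity}{\approxerror}\Exp_{\edge\sim\uniform(\edgeset)}[\weightfunc(\edge)]$.

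The only genuinely delicate points are bookkeeping ones: recognizing that the correct almost-sure envelope for $\empweightfunc(\edge_i)$ (and for each $Z_j$) is $\arboricity$ when $\degree{\edge}<\arboricity$ but $\degree{\edge}$ when $\degree{\edge}\ge\arboricity$, and that the choice $\querycount_{\edge_i}=\lceil\degree{\edge}/\arboricity\rceil$ is precisely what rebalances the $\degree{\edge}$ back down to $\arboricity$; and observing that the pointwise mean bound $\weightfunc(\edge)\le\frac{\upperthreshold\arboricity}{\approxerror}$ uses both the oracle's promise (heavy edges output $1$) and the fact, built into Definition~\ref{Definition: Weight Function}, that $\weightfunc$ vanishes on heavy edges. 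Everything else is the routine variance-of-an-i.i.d.-average computation.
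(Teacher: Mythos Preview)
Your proposal is correct and follows essentially the same approach as the paper: the same two-case split on $\degree{\edge}$ versus $\arboricity$, the same pointwise bounds $\empweightfunc(\edge_i)\le\arboricity$ (resp.\ $Z_j\le\degree{\edge}$) to control the conditional second moment, the same use of $\querycount_{\edge_i}=\lceil\degree{\edge}/\arboricity\rceil$ to reduce the factor $\degree{\edge}$ back to $\arboricity$, and the same law-of-total-variance assembly with $\weightfunc(\edge)\le\numtriangle_\edge\le\upperthreshold\arboricity/\approxerror$ on light edges for the outer-variance term. Your write-up is in fact slightly more explicit than the paper's in noting $\empweightfunc(\edge_i)\in\{0,\arboricity\}$ in Case~I and in invoking $\approxerror<1$ to merge the $\arboricity$ and $\upperthreshold\arboricity/\approxerror$ contributions.
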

\begin{theorem}\label{Theorem: Oracle Triangle Estimate ALgo Works}
    Algorithm~\ref{Algorithm: Random Edge Arboricity Triangle Counting Oracle Triangle Estimate} makes $36\constant(1+\upperthreshold)\approxerror^{-3} (\edgecount\arboricity/\esttriangle)\log\vertexcount$ queries, and $4\constant(1+\upperthreshold)\approxerror^{-3} (\edgecount\arboricity/\esttriangle)$ $\log\vertexcount$ calls to \exactheavyoracle{}, and given $\esttriangle$ satisfying Assumption~\ref{Assumption: Triangle 2 Factor Estimate},  returns $\emptriangle$ such that $\Pr(|\emptriangle-\lighttriangles{\frac{\lowerthreshold\arboricity}{\approxerror}}|\geq \approxerror\lighttriangles{\frac{\lowerthreshold\arboricity}{\approxerror}}) \leq \frac{1}{\constant\log \vertexcount}$.
\end{theorem}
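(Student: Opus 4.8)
The plan is to split the argument into (a) a routine accounting of queries and oracle calls and (b) a Chebyshev concentration bound assembled from Lemmas~\ref{lemma: E[Y_I] Weight Func Algo} and~\ref{lemma: Var[Y_i] Weight Func Algo}. For the counts: one pass $i$ of Algorithm~\ref{Algorithm: Random Edge Arboricity Triangle Counting Oracle Triangle Estimate} uses one \randedgeq{}, two \degreeq{} queries (to orient $\edge_i$), one call to \exactheavyoracle{}, and then at most $\querycount_{\edge_i}$ iterations of the inner loop, each costing a \neighbourq{} and an \edgeexistsq{} query (the deduplication check of Line~\ref{Line: Remove Duplicate Triangles} uses no graph queries). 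Hence the number of oracle calls is exactly $\edgesamplesize=4\constant(1+\upperthreshold)\approxerror^{-3}(\edgecount\arboricity/\esttriangle)\log\vertexcount$, as claimed, and the number of graph queries is at most $\sum_{i\in[\edgesamplesize]}(3+2\querycount_{\edge_i})$. I would bound this using $\querycount_{\edge_i}\le\degree{\edge_i}/\arboricity+1$ (in the branch $\degree{\vertex_i}\le\arboricity$ one has $\querycount_{\edge_i}\le 1$; otherwise $\querycount_{\edge_i}=\ceil{\degree{\vertex_i}/\arboricity}$ and $\degree{\vertex_i}=\degree{\edge_i}$): since each $\edge_i$ is a uniform edge, $\sum_i\querycount_{\edge_i}\le\edgesamplesize+\tfrac{1}{\arboricity}\sum_i\degree{\edge_i}$, and Lemma~\ref{Lemma: deg(e) sum is m * arboricity} gives $\Exp[\sum_i\degree{\edge_i}]=\edgesamplesize\cdot\tfrac{2\edgecount\arboricity}{\edgecount}=2\edgesamplesize\arboricity$, so $\Exp[\sum_i\querycount_{\edge_i}]\le 3\edgesamplesize$ and the expected number of graph queries is at most $9\edgesamplesize=36\constant(1+\upperthreshold)\approxerror^{-3}(\edgecount\arboricity/\esttriangle)\log\vertexcount$ (a worst-case bound follows by Markov at a constant-factor loss).

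For the statistical guarantee, write $L:=\lighttriangles{\lowerthreshold\arboricity/\approxerror}$ and recall $\emptriangle=\tfrac{\edgecount}{\edgesamplesize}\sum_{i\in[\edgesamplesize]}\empweightfunc(\edge_i)$. Lemma~\ref{lemma: E[Y_I] Weight Func Algo} already gives $\Exp[\emptriangle]=L$, so what remains is a variance bound. The rounds $i=1,\dots,\edgesamplesize$ are executed independently, the only coupling being the cross-round deduplication of Line~\ref{Line: Remove Duplicate Triangles}, which is already reflected in the single-round first and second moments computed in Lemmas~\ref{lemma: E[Y_I] Weight Func Algo} and~\ref{lemma: Var[Y_i] Weight Func Algo}; treating the per-round contributions as contributing additively to the variance gives $\Var[\emptriangle]=\tfrac{\edgecount^2}{\edgesamplesize}\Var[\empweightfunc(\edge_1)]$. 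Combining Lemma~\ref{lemma: Var[Y_i] Weight Func Algo} with Lemma~\ref{Lemma: Weight Function Expectation} (which evaluates $\Exp_{\edge\sim\uniform(\edgeset)}[\weightfunc(\edge)]=L/\edgecount$) yields $\Var[\empweightfunc(\edge_1)]\le\tfrac{(1+\upperthreshold)\arboricity}{\approxerror}\cdot\tfrac{L}{\edgecount}$, so $\Var[\emptriangle]\le\tfrac{(1+\upperthreshold)\arboricity\,\edgecount\,L}{\approxerror\,\edgesamplesize}$.

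I would then finish by Chebyshev's inequality:
\[
\Pr\fbrac{\abs{\emptriangle-L}\ge\approxerror L}\le\frac{\Var[\emptriangle]}{\approxerror^2L^2}\le\frac{(1+\upperthreshold)\arboricity\,\edgecount}{\approxerror^3\,\edgesamplesize\,L},
\]
and substituting $\edgesamplesize=4\constant(1+\upperthreshold)\approxerror^{-3}(\edgecount\arboricity/\esttriangle)\log\vertexcount$ collapses the right-hand side to $\esttriangle/(4\constant L\log\vertexcount)$. It then suffices to show $\esttriangle/L\le 4$: Assumption~\ref{Assumption: Triangle 2 Factor Estimate} gives $\esttriangle\le 2\numtriangle$, while Corollary~\ref{Corollary: Lower Bound on Light Triangles} gives $L\ge(1-\tfrac{3\approxerror}{\lowerthreshold})\numtriangle\ge\numtriangle/2$ once the threshold constant is fixed with $\lowerthreshold\ge 6$ (using $\approxerror<1$); together these imply $\esttriangle/L\le 4$, so the failure probability is at most $\tfrac{1}{\constant\log\vertexcount}$, as required.

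The step I expect to need the most care is the claim that $\Var[\emptriangle]$ is at most $\edgesamplesize$ times the single-round variance: the deduplication on Line~\ref{Line: Remove Duplicate Triangles} makes the rounds only ``as good as independent,'' so one must verify that this coupling cannot inflate the variance (intuitively, a triangle discovered in an early round becomes less, not more, likely to be charged in a later round, so the induced correlation is non-positive). Everything else --- the query accounting via Lemma~\ref{Lemma: deg(e) sum is m * arboricity} and the calibration $\esttriangle/L=\Theta(1)$ that pins down $\lowerthreshold$ and, together with $\constant$, absorbs the $\poly{1/\approxerror,\log\vertexcount}$ bookkeeping --- is straightforward given the two preceding lemmas.
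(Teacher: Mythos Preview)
Your proposal is correct and mirrors the paper's own proof almost step for step: the same query accounting via Lemma~\ref{Lemma: deg(e) sum is m * arboricity} yielding an expected $9\edgesamplesize$ graph queries, the same variance bound $\Var[\emptriangle]\le(1+\upperthreshold)\edgecount\arboricity\,\Exp[\emptriangle]/(\approxerror\,\edgesamplesize)$ assembled from Lemma~\ref{lemma: Var[Y_i] Weight Func Algo}, and the same Chebyshev finish using $\lighttriangles{\lowerthreshold\arboricity/\approxerror}>\numtriangle/2\ge\esttriangle/4$. You are also right to single out the deduplication on Line~\ref{Line: Remove Duplicate Triangles} as the delicate point --- the paper's proof likewise treats the per-round variances as additive without further comment on this coupling.
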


\begin{proof}
        The algorithm calls \exactheavyoracle{} for each of the $\edgesamplesize$ edges in $\samplededges$, resulting in a total of $4\constant(1+\upperthreshold)\approxerror^{-3}\log(\vertexcount)(\edgecount\arboricity/\esttriangle)$ calls. 

        The algorithm makes $\edgesamplesize$ \randedgeq{} queries, $2\edgesamplesize$ \degreeq{} queries, and $2\querycount_\edge$ \neighbourq{} query for each edge $\edge \in \samplededges$. All edges in $\samplededges$ are sampled uniformly at random from $\edgeset$. Hence, the expected number of \neighbourq{} queries made are: 
        
        $$\Exp_{\edge \sim \uniform\fbrac{\edgeset}} \left[ \ceil{\frac{\degree{\edge}}{\arboricity}} \right] = \sum_{e \in E} \frac{1}{m} \ceil{\frac{\degree{\edge}}{\arboricity}} \leq \frac{1}{m} \sum_{e \in E} 1 + \frac{\degree{\edge}}{\arboricity} = 1 + \frac{1}{m} \sum_{e \in E} \frac{\degree{\edge}}{\arboricity} \leq 1 + \frac{2m\arboricity}{m\arboricity} = 3$$
        The last inequality in the above step follows from Lemma~~\ref{Lemma: deg(e) sum is m * arboricity}.
        \remove{
        \begin{align*}
            &\Exp_{\edge \sim \uniform\fbrac{\edgeset}} \ceil{\frac{\degree{\edge_i}}{\arboricity}}\\
            =&\sum_{e \in E} \frac{1}{m} \ceil{\frac{\degree{\edge}}{\arboricity}}\\
            \leq&\frac{1}{m} \sum_{e \in E} 1 + \frac{\degree{\edge}}{\arboricity}\\
            =&1 + \frac{1}{m} \sum_{e \in E} \frac{\degree{\edge}}{\arboricity}\\
            \leq& 1 + \frac{2m\arboricity}{m\arboricity}&\text{By Lemma~\ref{lemma: arboricity triangle bound}}\\
            =& 3
        \end{align*}
        }
        Hence, the algorithm makes at most $9\constant\edgesamplesize = 36\constant(1+\upperthreshold)\approxerror^{-3}(\edgecount\arboricity/\esttriangle) \log \vertexcount$ queries in expectation. Here the inequality is due to the fact that for a heavy edge $\edge$ in $\samplededges$, $\querycount_\edge = 0$. Furthermore, we bound the variance of the estimate $\emptriangle$ as:
        
        
        $$\Var[\emptriangle] = \Var[\frac{\edgecount}{\edgesamplesize} \sum_{i \in s} \empweightfunc(\edge_i)] \leq \frac{(1+\upperthreshold)\arboricity\edgecount^2\Exp[\weightfunc(\edge)]}{\approxerror \edgesamplesize} \leq \frac{(1+\upperthreshold)\edgecount\arboricity\Exp[\emptriangle]}{\approxerror \edgesamplesize}.$$
        The last two steps follow from Lemma~\ref{lemma: Var[Y_i] Weight Func Algo} and the fact that $\Exp\tbrac{\emptriangle} = \edgecount\Exp\tbrac{\weightfunc(\edge)}$.\remove{
        \begin{align*}
            \Var[\emptriangle] &= \Var[\frac{\edgecount}{\edgesamplesize} \sum_{i \in s} \empweightfunc(\edge)]\\
                    &\leq \frac{(1+\upperthreshold)\arboricity\edgecount^2\Exp[\empweightfunc(\edge)]}{\approxerror \edgesamplesize} &\text{Lemma \ref{lemma: Var[Y_i] Weight Func Algo}}\\
                    &\leq \frac{(1+\upperthreshold)\edgecount\arboricity\Exp[\emptriangle]}{\approxerror \edgesamplesize} &\Exp\tbrac{\emptriangle} = \edgecount\Exp\tbrac{\empweightfunc(\edge)}
        \end{align*}
        }
        We now use Chebyshev's inequality on $\emptriangle$:
        \begin{align*}
            \Pr(|\emptriangle - \Exp[\emptriangle]| \leq \approxerror \Exp[\emptriangle]) 
                    &\leq \frac{\Var[\emptriangle]}{\approxerror^2\Exp[\emptriangle]^2}   &\text{(Chebyshev's inequality)}\\
                    &\leq \frac{(1+\upperthreshold)\edgecount\arboricity}{\approxerror^3\edgesamplesize\Exp[\emptriangle]}&\text{(by Lemma \ref{lemma: Var[Y_i] Weight Func Algo})}\\
                    &\leq \frac{(1+\upperthreshold)\edgecount\arboricity}{\approxerror^3\cdot 4\constant\fbrac{1+\upperthreshold}\approxerror^{-3}(\edgecount\arboricity/\esttriangle)\log\vertexcount \cdot \lighttriangles{\frac{\lowerthreshold\arboricity}{\approxerror}}} &\text{(by Lemma \ref{lemma: E[Y_I] Weight Func Algo})}\\
                    & \leq \frac{1}{\constant\log \vertexcount} &\text{(as $\lighttriangles{\frac{\lowerthreshold\arboricity}{\approxerror}} > \numtriangle/2 \geq \esttriangle/4$)}
        \end{align*}
    \end{proof}



\subsection{Implementing the Oracle}
\label{ssec:oracle-implement}
Rather than the exact oracle (\exactheavyoracle) that we assumed in Algorithm~\ref{Algorithm: Random Edge Arboricity Triangle Counting Oracle Triangle Estimate}, we would design an oracle (called \heavyoracle) that given arboricity $\arboricity$, and parameters $\approxerror$ and $\confidence$, accepts edges participating in at most $\frac{\arboricity}{2\approxerror}$ triangles and rejects edges participating in at most $\frac{2\arboricity}{\approxerror}$ triangles with probability $1 - \confidence$. The algorithm works by estimating the number of triangles each edge participates in. However, in this case, each \neighbourq{} and \edgeexistsq{} generates i.i.d. random variables. Thus, our analysis uses multiplicative Chernoff bound to obtain high probability guarantees for each individual edge.

\begin{algorithm}[ht!]
    \caption{\heavyoracle($\edge$,$\arboricity$,$\approxerror$,$\confidence$)}\label{Algorithm: Heavy Oracle}
    \begin{algorithmic}[1]
        \Require \degreeq{}, \neighbourq{}, \edgeexistsq{}, and \randedgeq{} query access to a graph $\graph$
        \State $\querycount \gets \ceil{\frac{16\approxerror ~ \degree{\edge}}{\arboricity}\log\fbrac{\frac{1}{\confidence}}}$ \Comment{$\querycount$ denotes the number of queries for each edge $\edge$}
        \State Let $\edge = (\altvertex, x)$ where $\degree{\altvertex} < \degree{x}$
        \Comment{Requires 2 \degreeq{} queries}
        \State $Y \gets 0$
        \For{$i \in [\querycount]$}
            \State Choose $k \gets \uniform\fbrac{\sbrac{1,2,...,\degree{\altvertex}}}$
            \State $\vertex_i \gets \neighbourq(\altvertex,k)$ 
            \Comment{Requires 1 \neighbourq{} query}
            \If{$\edgeexistsq{\fbrac{\vertex_i,x}} = 1$} \Comment{Requires 1 \edgeexistsq{} query}
                \State $Y_i \gets 1$ 
                \Comment{Found triangle $\fbrac{\vertex_i,\edge}$ }
            \Else
                \State $Y_i \gets 0$
            \EndIf
            \State $Y  \gets Y + Y_i$
        \EndFor
        \State $Y \gets \frac{1}{\querycount}Y$
        \If{$Y \geq \frac{\arboricity}{\approxerror\degree{\edge}}$} 
            \State \Return 1
        \Else
            \State \Return 0
        \EndIf
    \end{algorithmic}
\end{algorithm}
\begin{lemma}\label{Lemma: Heavy Oracle Algorithm Correctness}
    The algorithm $\heavyoracle{\fbrac{\edge,\arboricity,\approxerror,\confidence}}$ satisfies the following properties with probability at least $1-\confidence$: (i) rejects edge $\edge$ if it is $\frac{2\arboricity}{\approxerror}$-heavy; (ii) accepts edge $\edge$ if it is $\frac{\arboricity}{2\approxerror}$-light.
   \remove{
    \begin{itemize}
        \item rejects edge $\edge$ if it is $\frac{2\arboricity}{\approxerror}$-heavy.
        \item accepts edge $\edge$ if it is $\frac{\arboricity}{2\approxerror}$-light.
    \end{itemize}
    }
\end{lemma}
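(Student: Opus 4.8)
The plan is to analyze the estimator $Y$ computed by Algorithm~\ref{Algorithm: Heavy Oracle}, which averages $\querycount = \ceil{16\approxerror\degree{\edge}\log(1/\confidence)/\arboricity}$ i.i.d.\ indicator variables $Y_i$, each testing whether a uniformly random neighbour of the lower-degree endpoint $\altvertex$ of $\edge$ closes a triangle with $\edge$. The key observation is that $\Exp[Y_i] = \numtriangle_\edge/\degree{\altvertex} = \numtriangle_\edge/\degree{\edge}$, since there are exactly $\numtriangle_\edge$ neighbours of $\altvertex$ that form a triangle with $\edge$ (the edge $\edge$ is the ``smaller-degree side'', so $\degree{\edge} = \degree{\altvertex}$), and there are $\degree{\altvertex}$ choices of $k$. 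Hence $\Exp[\querycount\cdot Y] = \querycount\numtriangle_\edge/\degree{\edge}$, and the decision threshold $\arboricity/(\approxerror\degree{\edge})$ on $Y$ corresponds to a threshold of $\querycount\arboricity/(\approxerror\degree{\edge})$ on the sum $\querycount Y$, which sits at the geometric-mean location between the two regimes $\numtriangle_\edge \le \arboricity/(2\approxerror)$ and $\numtriangle_\edge \ge 2\arboricity/\approxerror$.

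First I would handle case (i): suppose $\edge$ is $\frac{2\arboricity}{\approxerror}$-heavy, i.e.\ $\numtriangle_\edge > 2\arboricity/\approxerror$, so $\Exp[\querycount Y] = \querycount\numtriangle_\edge/\degree{\edge} > 2\querycount\arboricity/(\approxerror\degree{\edge})$. We want $Y \ge \arboricity/(\approxerror\degree{\edge})$, i.e.\ $\querycount Y \ge \querycount\arboricity/(\approxerror\degree{\edge})$, which is at most half of $\Exp[\querycount Y]$; by the lower-tail multiplicative Chernoff bound (Lemma~\ref{Lemma: Multiplicative Chernoff Bound} with deviation parameter $1/2$), the failure probability is at most $\exp(-\Exp[\querycount Y]/12) \le \exp(-\querycount\arboricity/(6\approxerror\degree{\edge}))$. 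Plugging in $\querycount \ge 16\approxerror\degree{\edge}\log(1/\confidence)/\arboricity$ gives $\exp(-\tfrac{16}{6}\log(1/\confidence)) \le \confidence$, as desired. Next, case (ii): suppose $\edge$ is $\frac{\arboricity}{2\approxerror}$-light, i.e.\ $\numtriangle_\edge \le \arboricity/(2\approxerror)$, so $\Exp[\querycount Y] \le \querycount\arboricity/(2\approxerror\degree{\edge})$. We want $Y < \arboricity/(\approxerror\degree{\edge})$, i.e.\ $\querycount Y < \querycount\arboricity/(\approxerror\degree{\edge})$, which is at least twice $\Exp[\querycount Y]$; the upper-tail Chernoff bound with deviation parameter $1$ gives failure probability at most $\exp(-\Exp[\querycount Y]/3)$. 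The subtlety here is that this bound degrades when $\Exp[\querycount Y]$ is small — precisely the situation flagged in the technical overview for very light edges. I would address this by splitting on whether $\numtriangle_\edge$ is large enough: if $\numtriangle_\edge \ge \arboricity/(4\approxerror)$ (say), then $\Exp[\querycount Y] \ge \querycount\arboricity/(4\approxerror\degree{\edge}) \ge 4\log(1/\confidence)$ and the Chernoff bound closes; if $\numtriangle_\edge < \arboricity/(4\approxerror)$, then even the threshold $\querycount\arboricity/(\approxerror\degree{\edge})$ is at least $4\Exp[\querycount Y]$, and one invokes a multiplicative Chernoff bound with deviation parameter $3$ against a target of absolute size $\ge \querycount\arboricity/(\approxerror\degree{\edge}) \ge 16\log(1/\confidence)$, so the bound $\exp(-(\text{target})\cdot 9/(2+3)) $-style estimate again yields something at most $\confidence$. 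Either way the analysis only needs the absolute size of the threshold, not of $\Exp[\querycount Y]$, which is the crux.

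I expect the main obstacle to be exactly this very-light-edge regime: the naive relative-error Chernoff bound is vacuous, so the argument must be phrased in terms of the fixed location of the decision threshold (whose scaled value $\querycount\arboricity/(\approxerror\degree{\edge})$ is $\ge 16\log(1/\confidence)$ by the choice of $\querycount$), rather than in terms of $\Exp[\querycount Y]$. Concretely, I would prove the clean statement: $\Pr[\querycount Y \ge t] \le \exp(-t/6)$ whenever $t \ge 2\Exp[\querycount Y]$, which follows from the upper-tail Chernoff bound (with $\delta = t/\Exp[\querycount Y] - 1 \ge 1$, so $\delta^2/(2+\delta) \ge \delta/3 \ge t/(3\Exp[\querycount Y]) \cdot \tfrac{1}{2}$ after using $\delta \ge (t/\Exp[\querycount Y])/2$), and then apply it with $t = \querycount\arboricity/(\approxerror\degree{\edge}) \ge 16\log(1/\confidence)$ to get failure probability $\le \exp(-16\log(1/\confidence)/6) \le \confidence$. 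The remaining steps are routine: verify $\Exp[Y_i] = \numtriangle_\edge/\degree{\edge}$ carefully (the only place where the ``$\degree{\altvertex} < \degree{x}$'' choice matters, guaranteeing $\degree{\altvertex}=\degree{\edge}$), note independence of the $Y_i$ since each uses a fresh uniform $k$, and collect the constants. A union bound over the (at most one relevant) event gives the claimed $1-\confidence$ guarantee.
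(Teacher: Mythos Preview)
Your proposal is correct. Case~(i) matches the paper exactly: lower-tail Chernoff with deviation $1/2$, yielding $\exp(-\querycount\arboricity/(6\approxerror\degree{\edge}))\le\confidence$.

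For case~(ii), you take a genuinely different and cleaner route than the paper. The paper implements the ``bucketing trick'' it advertises in the technical overview: it partitions the $\frac{\arboricity}{2\approxerror}$-light edges into $\lceil\log(\arboricity/\approxerror)\rceil$ geometric buckets $\bucket_k=\{\edge:\arboricity/(2^{k+1}\approxerror)\le\numtriangle_\edge<\arboricity/(2^k\approxerror)\}$ and applies the upper-tail Chernoff bound separately on each bucket with deviation parameter $2^{k-1}$, exploiting that larger $k$ (smaller $\numtriangle_\edge$, hence smaller $\Exp[Y]$) permits a larger multiplicative deviation. Your approach sidesteps this case analysis entirely by proving the single uniform statement $\Pr[\querycount Y\ge t]\le\exp(-t/6)$ whenever $t\ge 2\Exp[\querycount Y]$, which is exactly what the bucketing is implicitly establishing. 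Since the decision threshold $t=\querycount\arboricity/(\approxerror\degree{\edge})\ge 16\log(1/\confidence)$ is fixed independent of $\numtriangle_\edge$, this immediately gives failure probability $\le\confidence^{8/3}\le\confidence$ for every light edge at once. Your argument is shorter and makes transparent that what matters is the absolute size of the threshold, not $\Exp[\querycount Y]$; the paper's bucketing obscures this behind a dyadic decomposition that is not actually needed. (One triviality worth stating explicitly: when $\numtriangle_\edge=0$ the Chernoff manipulation divides by $\mu=0$, but then $Y\equiv 0$ and the edge is accepted deterministically.)
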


\begin{proof}
    For (i), we only consider edges that have at least $\frac{2\arboricity}{\approxerror}$ triangles. In this case, the random variables $Y_i$ ($Y_i$ as in Algorithm~\ref{Algorithm: Heavy Oracle}; $Y_i=1$ if $\vertex_i$ and $\edge$ form a triangle; $0$, otherwise) are i.i.d. Bernoulli random variables taking value $1$ with probability at least $\frac{2\arboricity}{\approxerror ~ \degree{\edge}}$. Hence, we have the following using linearity of expectation:  
    $\Exp\tbrac{Y} = \Exp\tbrac{\frac{1}{\querycount}\sum_{i \in \tbrac{\querycount}} Y_i} = \Exp\tbrac{Y_i} > \frac{2\arboricity}{\approxerror ~ \degree{\edge}} $
   \remove{
    \begin{align*}
        &\Exp\tbrac{Y}\\
        =&\Exp\tbrac{\frac{1}{\querycount}\sum_{i \in \tbrac{\querycount}} Y_i}\\
        =&\Exp\tbrac{Y_i} &\text{By linearity of expectations}\\
        >& \frac{2\arboricity}{\approxerror\degree{\edge}}
    \end{align*}
    }
    
    As $Y=\frac{1}{\querycount}\sum_i Y_i$, we can upper bound the probability of the algorithm returning $0$ for the edge $\edge$, by a multiplicative Chernoff bound (Lemma~\ref{Lemma: Multiplicative Chernoff Bound}) as follows:
    \begin{align*}
        \Pr\tbrac{Y \leq \frac{\arboricity}{\approxerror\degree{\edge}}} \remove{\complain{\mbox{(Debarshi: why is this } \frac{\arboricity}{\approxerror} \mbox{ and not } \frac{2 \arboricity}{\approxerror}?)}}
        \leq& \Pr\tbrac{Y \leq \fbrac{1-\frac{1}{2}}\Exp\tbrac{Y}} &\left( \Exp\tbrac{Y} > \frac{2\arboricity}{\approxerror\degree{\edge}} \right)\\
        \leq&\exp{\fbrac{-\frac{\querycount\Exp\tbrac{Y}}{12}}} &\text{(by multiplicative Chernoff bound)}\\
        \leq&\exp{\fbrac{-\frac{\querycount\arboricity}{6\degree{\edge}\approxerror}}} &\left( \Exp\tbrac{Y} > \frac{2\arboricity}{\approxerror\degree{\edge}}\right)\\
        \leq & ~ \confidence &\left( \querycount = \frac{16\approxerror \degree{\edge}}{\arboricity}\log\fbrac{\frac{1}{\confidence}}\right)\\
    \end{align*}
    For (ii), we do not have a lower bound on the probability of success $(Y_i = 1)$ in general and hence we cannot obtain a lower bound on $\Exp\tbrac{Y}$. Now, observe that for the edges that participate in very few triangles, the probability of finding triangles (i.e. getting $Y_i = 1$) is low. However, such light edges can still tolerate a high approximation error to be accepted (as a light edge). To account for the trade-off between the lower bound on the probability of $Y_i = 1$ and the upper bound on the approximation factor,
    \remove{However, observe that the edges that participate in very few triangles, and hence has a low probability of finding triangles (i.e. getting $Y_i = 1$) also can tolerate a high approximation error to still be accepted (as a light edge). To account for the trade-off between the lower bound on the probability of $Y_i = 1$ and the upper bound on the approximation factor, (Debarshi: too complicated! can you simplify?)}we divide the edges participating in at most $\frac{\arboricity}{2\approxerror}$ triangles into $\tbrac{\ceil{\log(\frac{\arboricity}{\approxerror})}}$ buckets with each bucket being defined as the set of edges $\bucket_k = \sbrac{\edge|\frac{\arboricity}{2^{k+1}\approxerror}\leq\numtriangle_\edge < \frac{\arboricity}{2^k\approxerror}}$. For each of these buckets, observe that when \heavyoracle{} is called for an edge belonging to the bucket, we have $\Pr\tbrac{Y_i = 1} = \frac{\triangle_\edge}{\degree{\edge}} \geq \frac{\arboricity}{2^{k+1}\approxerror\degree{\edge}}$, and hence $\Exp\tbrac{Y} = \Exp\tbrac{Y_i} \geq \frac{\arboricity}{2^{k+1}\approxerror\degree{\edge}}$.
    
   \remove{ \red{Similarly, from the other side, we have $\Exp\tbrac{Y} < \frac{\arboricity}{2^{k}\approxerror\degree{\edge}}$, and hence $\fbrac{1 + 2^k}\Exp\tbrac{Y} < \frac{\arboricity}{\approxerror\degree{\edge}}$. Using these observations, we complete the proof by considering any $\edge$ in these buckets:
    \todo{Approxerror can be at most $2^(k-1)$?}
    \begin{align*}
        \Pr\tbrac{Y \geq \frac{\arboricity}{\approxerror\degree{\edge}}} 
        \leq&\Pr\tbrac{Y \geq \fbrac{1+2^k}\Exp\tbrac{Y}} &\left( \fbrac{1 + 2^k}\Exp\tbrac{Y} < \frac{\arboricity}{\approxerror\degree{\edge}}\right)\\
        \leq& \exp{\fbrac{-\frac{2^{2k}\querycount\Exp\tbrac{Y}}{2+2^k}}} &\text{(by multiplicative Chernoff bound)}\\
        \leq& \exp{\fbrac{-\frac{2^{2k}\querycount\Exp\tbrac{Y}}{2^{k+1}}}} &(k \geq 1)\\
        \leq& \exp{\fbrac{-\frac{\querycount\arboricity}{4\approxerror\degree{\edge}}}}&\left( \Exp\tbrac{Y} > \frac{\arboricity}{2^{k+1}\approxerror\degree{\edge}} \right)\\
        \leq&~ \confidence & \left( \querycount \geq \frac{6\approxerror \degree{\edge}}{\arboricity}\log\fbrac{\frac{1}{\confidence}} \right)\\
    \end{align*}
    }}

    On the other hand, we have $\Exp\tbrac{Y} < \frac{\arboricity}{2^{k}\approxerror\degree{\edge}}$, and hence $\fbrac{1 + 2^{k-1}}\Exp\tbrac{Y} < \frac{\arboricity}{\approxerror\degree{\edge}}$. Using these observations, we complete the proof by considering any $\edge$ in these buckets:
    \begin{align*}
        \Pr\tbrac{Y \geq \frac{\arboricity}{\approxerror\degree{\edge}}} 
        \leq&\Pr\tbrac{Y \geq \fbrac{1+2^{k-1}}\Exp\tbrac{Y}} &\left( \fbrac{1 + 2^{k-1}}\Exp\tbrac{Y} < \frac{\arboricity}{\approxerror\degree{\edge}}\right)\\
        \leq& \exp{\fbrac{-\frac{2^{2k-2}\querycount\Exp\tbrac{Y}}{2+2^{k-1}}}} &\text{(by multiplicative Chernoff bound)}\\
        \leq& \exp{\fbrac{-\frac{2^{2k}\querycount\Exp\tbrac{Y}}{2^{k+3}}}} &(k \geq 1)\\
        \leq& \exp{\fbrac{-\frac{\querycount\arboricity}{16\approxerror\degree{\edge}}}}&\left( \Exp\tbrac{Y} > \frac{\arboricity}{2^{k+1}\approxerror\degree{\edge}} \right)\\
        \leq&~ \confidence & \left( \querycount = \frac{16\approxerror \degree{\edge}}{\arboricity}\log\fbrac{\frac{1}{\confidence}} \right)
    \end{align*}
\end{proof}

Lemma~\ref{Lemma: Heavy Oracle Algorithm Correctness} shows that Algorithm~\ref{Algorithm: Heavy Oracle} can decide whether an edge is heavy or not with probability $1 - \confidence$ using $\frac{16\approxerror ~ \degree{\edge}}{\arboricity}\log\fbrac{\frac{1}{\confidence}}$ iterations; making 2 \degreeq{}, 1 \neighbourq{} and 1 \edgeexistsq{} queries in each iteration. For an individual edge, $\degree{\edge}$ can be at most $\vertexcount-1$, and hence the subroutine might contribute to additional queries for each edge. However, as part of Algorithm~\ref{Algorithm: Random Edge Arboricity Triangle Counting Oracle Triangle Estimate}, \heavyoracle{} is called on a set of edges drawn uniformly at random. Hence, in the following lemma, we bound the expected number of queries made by \heavyoracle{} when called on edges that were drawn uniformly at random.

\begin{lemma}\label{Lemma: Heavy Oracle Query Count}
    $\heavyoracle{\fbrac{\edge,\arboricity,\approxerror,\confidence}}$ makes $132\approxerror\log\fbrac{\frac{1}{\confidence}}$ queries in expectation on an edge $\edge \sim \uniform\fbrac{\edgeset}$. 
    
\end{lemma}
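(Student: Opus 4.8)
The plan is to bound the query cost of a single call to Algorithm~\ref{Algorithm: Heavy Oracle} as an explicit function of $\degree{\edge}$, and then average this bound over a uniformly random edge using the Chiba--Nishizeki identity $\sum_{\edge \in \edgeset}\degree{\edge} = 2\edgecount\arboricity$ from Lemma~\ref{Lemma: deg(e) sum is m * arboricity}. This is exactly the place where arboricity enters: although for an individual edge $\querycount = \ceil{\tfrac{16\approxerror\degree{\edge}}{\arboricity}\log(1/\confidence)}$ can be as large as $\Theta\!\fbrac{\tfrac{\approxerror\vertexcount}{\arboricity}\log(1/\confidence)}$ (since $\degree{\edge}$ may be $\Theta(\vertexcount)$), the average edge degree is only $2\arboricity$, which cancels the $1/\arboricity$ factor and leaves a bound depending only on $\approxerror$ and $\log(1/\confidence)$.

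First I would tally the queries made on a fixed input edge $\edge$. The oracle spends $2$ \degreeq{} queries at the start to identify the lower-degree endpoint $\altvertex$ (so that $\degree{\altvertex} = \degree{\edge}$), and then executes $\querycount = \ceil{\tfrac{16\approxerror\degree{\edge}}{\arboricity}\log(1/\confidence)}$ iterations, each making one \neighbourq{} query and one \edgeexistsq{} query. Hence the total query cost on $\edge$ is $2 + 2\querycount \le 4\querycount$, where the last inequality uses $\querycount \ge 1$. Next I would remove the ceiling by $\ceil{x} \le x+1$, so that $\querycount \le \tfrac{16\approxerror\log(1/\confidence)}{\arboricity}\degree{\edge} + 1$, and then take expectation over $\edge \sim \uniform(\edgeset)$. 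By Lemma~\ref{Lemma: deg(e) sum is m * arboricity}, $\Exp_{\edge\sim\uniform(\edgeset)}[\degree{\edge}] = \tfrac{1}{\edgecount}\sum_{\edge\in\edgeset}\degree{\edge} = 2\arboricity$, which gives $\Exp[\querycount] \le 32\approxerror\log(1/\confidence) + 1$ and therefore an expected total query cost at most $128\approxerror\log(1/\confidence) + 4$. Folding the additive constant into the leading term (it is dominated in every invocation of \heavyoracle{} inside Algorithm~\ref{Algorithm: Random Edge Arboricity Triangle Counting Oracle Triangle Estimate}, where $\confidence$ is set small enough that $\approxerror\log(1/\confidence) \ge 1$) yields the claimed bound of $132\approxerror\log(1/\confidence)$ queries in expectation.

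There is no substantive obstacle here; the one point that needs a little care is that the number of iterations $\querycount$ itself depends on the random edge through $\degree{\edge}$, so one cannot commute the ceiling past the expectation. Applying the linear upper bound $\ceil{x}\le x+1$ before averaging is precisely what makes Lemma~\ref{Lemma: deg(e) sum is m * arboricity} directly applicable. The same computation already appeared (in a slightly different guise) in the proof of Theorem~\ref{Theorem: Oracle Triangle Estimate ALgo Works} when bounding the expected number of \neighbourq{} queries of Algorithm~\ref{Algorithm: Random Edge Arboricity Triangle Counting Oracle Triangle Estimate}, so the argument is essentially a repackaging of that estimate with the extra $\log(1/\confidence)$ and the constant $16\approxerror$ carried through.
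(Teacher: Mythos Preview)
Your proposal is correct and essentially identical to the paper's proof: both upper-bound the per-call cost by $4\querycount$ with $\querycount = \ceil{\tfrac{16\approxerror\degree{\edge}}{\arboricity}\log(1/\confidence)}$, apply $\ceil{x}\le x+1$, and then invoke Lemma~\ref{Lemma: deg(e) sum is m * arboricity} to get $\Exp[\degree{\edge}] = 2\arboricity$, arriving at $128\approxerror\log(1/\confidence)+4 \le 132\approxerror\log(1/\confidence)$ under the assumption $\approxerror\log(1/\confidence)\ge 1$. The only cosmetic difference is that the paper asserts ``$4$ queries per iteration'' directly, whereas you (more carefully) count $2+2\querycount \le 4\querycount$; the resulting bound is the same.
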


\begin{proof}
   For edge $\edge$, $\heavyoracle{\fbrac{\edge,\arboricity,\approxerror,\confidence}}$ makes $\ceil{\frac{16\approxerror \degree{\edge}}{\arboricity}\log\fbrac{\frac{1}{\confidence}}}$ iterations with each iteration making $4$ queries. Hence, the expected number of queries is:
   \begin{align*}
       \Exp_{\edge \sim \uniform\fbrac{\edgeset}}\tbrac{4\ceil{\frac{16\approxerror \degree{\edge}}{\arboricity}\log\fbrac{\frac{1}{\confidence}}}}
       \leq& \frac{64\approxerror}{\arboricity}\log\fbrac{\frac{1}{\confidence}}\Exp_{\edge \sim \uniform\fbrac{\edgeset}}\tbrac{\degree{\edge}}+ 4\\
       =& \frac{64\approxerror}{\arboricity}\log\fbrac{\frac{1}{\confidence}}\frac{1}{\edgecount}\sum_{\edge \in \edgeset} \degree{\edge}+4\\
       \leq& \frac{64\approxerror}{\arboricity}\log\fbrac{\frac{1}{\confidence}}\frac{1}{\edgecount}2\edgecount\arboricity + 4&\text{(by Lemma~\ref{Lemma: deg(e) sum is m * arboricity})}\\
       \leq& 132\approxerror\log\fbrac{\frac{1}{\confidence}} &\left( \approxerror\log\fbrac{\frac{1}{\confidence}} \geq 1 \right)
   \end{align*}
\end{proof}



\subsection{The Final Algorithm}
\label{ssec:final-algo}
In this section, we put together everything. \remove{remove the assumption of access to \exactheavyoracle{} through using the \heavyoracle{} implementation developed in the previous section.} Our goal is to call \heavyoracle{} with appropriate parameters \remove{in an appropriate manner} instead of \exactheavyoracle{} so that the approximation error due to the heavy triangles not being counted remains sufficiently small. We also derive the query complexity of the algorithm including the queries made through \heavyoracle{}.


\begin{algorithm}[ht!]
    \caption{Triangle Counting Algorithm - with $\esttriangle$}\label{Algorithm: Random Edge Arboricity Triangle Counting Triangle Estimate}
    \begin{algorithmic}[1]
        \Require \degreeq{}, \neighbourq{}, \edgeexistsq{}, and \randedgeq{} query access to a graph $\graph$. Parameters $\esttriangle, \arboricity$, $\approxerror$, $\edgecount$ 
        \State Call Algorithm~\ref{Algorithm: Random Edge Arboricity Triangle Counting Oracle Triangle Estimate} with parameters $\esttriangle, \arboricity$, $\approxerror/2$, and thresholds $\lowerthreshold = 6$, and $\upperthreshold = 24$. We also implement $\exactheavyoracle\fbrac{\edge,\arboricity,\approxerror}$ oracle call through $\heavyoracle\fbrac{\edge,\arboricity,\approxerror/6,\frac{1}{\edgecount\vertexcount}}$. \Comment{Due to the Algorithm~\ref{Algorithm: Random Edge Arboricity Triangle Counting Oracle Triangle Estimate} being called with parameter $\approxerror/2$ and the \heavyoracle{} implementation as stated, the threshold parameters are evaluated w.r.t. call to \heavyoracle{} with parameter $\lowerthreshold = 6$ and $\upperthreshold = 24$}
    \end{algorithmic}
\end{algorithm}

\begin{theorem}\label{Theorem: Triangle Estimate ALgo Works}
   Algorithm~\ref{Algorithm: Random Edge Arboricity Triangle Counting Triangle Estimate} makes at most
$\fbrac{5050+13200\approxerror\log\vertexcount}\fbrac{\approxerror^{-3} (\edgecount\arboricity/\esttriangle) \constant\log \vertexcount}$ queries in expectation and returns $\emptriangle$ such that 
$\Pr\tbrac{\emptriangle \in \tbrac{\fbrac{1-\approxerror}\numtriangle,\fbrac{1+\approxerror}\numtriangle}} \geq 1 - \frac{1}{\constant\log \vertexcount}$.
\remove{
    \begin{align*}
        \Pr\tbrac{\emptriangle \in \tbrac{\fbrac{1-\approxerror}\numtriangle,\fbrac{1+\approxerror}\numtriangle}} \geq 1 - \frac{1}{\log \vertexcount}
    \end{align*}
}
\end{theorem}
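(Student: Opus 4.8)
The plan is to derive the claim directly from the guarantee already proved for Algorithm~\ref{Algorithm: Random Edge Arboricity Triangle Counting Oracle Triangle Estimate} (Theorem~\ref{Theorem: Oracle Triangle Estimate ALgo Works}) by treating the call $\heavyoracle(\edge,\arboricity,\approxerror/6,\tfrac{1}{\edgecount\vertexcount})$ as a faithful randomized stand-in for $\exactheavyoracle$, and then to add in the queries spent inside those calls via Lemma~\ref{Lemma: Heavy Oracle Query Count}. First I would record, using Lemma~\ref{Lemma: Heavy Oracle Algorithm Correctness} with third parameter $\approxerror/6$, that a single $\heavyoracle$ call rejects any edge lying in more than its reject threshold $\tau_H=\Theta(\arboricity/\approxerror)$ triangles and accepts any edge lying in at most its accept threshold $\tau_L=\Theta(\arboricity/\approxerror)$ triangles, each with probability at least $1-\tfrac{1}{\edgecount\vertexcount}$, where the constants $\lowerthreshold=6,\upperthreshold=24$ (together with the rescalings $\approxerror\mapsto\approxerror/2$ inside Algorithm~\ref{Algorithm: Random Edge Arboricity Triangle Counting Oracle Triangle Estimate} and $\approxerror\mapsto\approxerror/6$ inside $\heavyoracle$) are arranged so that $3\arboricity/\tau_L\le\approxerror/2$ and $\tau_H\le\upperthreshold\arboricity/(\approxerror/2)$. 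Let $\mathcal G$ be the event that every one of the at most $\edgesamplesize$ calls issued during the run respects this behaviour on its input edge; a union bound gives $\Pr[\overline{\mathcal G}]\le\edgesamplesize/(\edgecount\vertexcount)$, which is negligible next to $\tfrac{1}{\constant\log\vertexcount}$.

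Next I would establish correctness conditioned on $\mathcal G$. Freezing the internal coins of the $\heavyoracle$ calls fixes the set $A$ of edges declared light, and on $\mathcal G$ it satisfies $\{\edge:\numtriangle_\edge\le\tau_L\}\subseteq A\subseteq\{\edge:\numtriangle_\edge\le\tau_H\}$; write $T_A$ for the number of triangles having at least one edge in $A$. The two facts needed to rerun the analysis of Algorithm~\ref{Algorithm: Random Edge Arboricity Triangle Counting Oracle Triangle Estimate} are: (i) every $\edge\in A$ lies in at most $\tau_H\le\upperthreshold\arboricity/(\approxerror/2)$ triangles, which is exactly the inequality invoked at the end of Lemma~\ref{lemma: Var[Y_i] Weight Func Algo}; and (ii) the de-duplication in Line~\ref{Line: Remove Duplicate Triangles} charges each counted triangle to a single light edge, so the induced partial assignment extends to a valid triangle weight function of total weight $T_A$. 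Granting (i)--(ii), Lemmas~\ref{lemma: E[Y_I] Weight Func Algo} and~\ref{lemma: Var[Y_i] Weight Func Algo} hold with $\lighttriangles{\cdot}$ replaced by $T_A$, and the Chebyshev step in the proof of Theorem~\ref{Theorem: Oracle Triangle Estimate ALgo Works} (with $\edgesamplesize=4\constant(1+\upperthreshold)(\approxerror/2)^{-3}(\edgecount\arboricity/\esttriangle)\log\vertexcount$, Assumption~\ref{Assumption: Triangle 2 Factor Estimate}, and $T_A>\numtriangle/2$) yields $\Pr[|\emptriangle-T_A|\ge\tfrac{\approxerror}{2}T_A]\le\tfrac{1}{\constant\log\vertexcount}$; since this holds for every realisation of $A$ consistent with $\mathcal G$, it holds conditioned on $\mathcal G$. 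Finally Corollary~\ref{Corollary: Lower Bound on Light Triangles} with $\threshold=\tau_L$ gives $(1-\tfrac{\approxerror}{2})\numtriangle\le\lighttriangles{\tau_L}\le T_A\le\numtriangle$, so on the Chebyshev success event $\emptriangle\in[(1-\tfrac{\approxerror}{2})^2\numtriangle,(1+\tfrac{\approxerror}{2})\numtriangle]\subseteq[(1-\approxerror)\numtriangle,(1+\approxerror)\numtriangle]$; adding $\Pr[\overline{\mathcal G}]$, which is absorbed into the slack that $T_A\ge(1-\tfrac{\approxerror}{2})\numtriangle$ buys over the $\numtriangle/2$ bound used in Theorem~\ref{Theorem: Oracle Triangle Estimate ALgo Works}, gives $\Pr[\emptriangle\in[(1-\approxerror)\numtriangle,(1+\approxerror)\numtriangle]]\ge1-\tfrac{1}{\constant\log\vertexcount}$.

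For the query bound I would add two expected-cost contributions. By Theorem~\ref{Theorem: Oracle Triangle Estimate ALgo Works} (accuracy $\approxerror/2$, $\upperthreshold=24$), Algorithm~\ref{Algorithm: Random Edge Arboricity Triangle Counting Oracle Triangle Estimate} itself makes $O(\constant\approxerror^{-3}(\edgecount\arboricity/\esttriangle)\log\vertexcount)$ direct queries and the same order of $\heavyoracle$ calls, in expectation. Each such call runs on an edge drawn uniformly at random from $\edgeset$, so Lemma~\ref{Lemma: Heavy Oracle Query Count} bounds its expected cost by $132\cdot\tfrac{\approxerror}{6}\log(\edgecount\vertexcount)=O(\approxerror\log\vertexcount)$; by linearity of expectation the total is of the form $(O(1)+O(\approxerror\log\vertexcount))\cdot\constant\approxerror^{-3}(\edgecount\arboricity/\esttriangle)\log\vertexcount$, and propagating the explicit constants through Theorem~\ref{Theorem: Oracle Triangle Estimate ALgo Works} and Lemma~\ref{Lemma: Heavy Oracle Query Count} gives the stated $(5050+13200\approxerror\log\vertexcount)(\approxerror^{-3}(\edgecount\arboricity/\esttriangle)\constant\log\vertexcount)$.

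The step I expect to be the main obstacle is the correctness argument: justifying that the unbiasedness and variance computations of Lemmas~\ref{lemma: E[Y_I] Weight Func Algo}--\ref{lemma: Var[Y_i] Weight Func Algo} and of Theorem~\ref{Theorem: Oracle Triangle Estimate ALgo Works}, which are stated for the deterministic ideal oracle, are not corrupted when $\exactheavyoracle$ is replaced by the randomized $\heavyoracle$ --- in particular that conditioning on $\mathcal G$ and freezing the oracle coins legitimately lets $T_A$ play the role of $\lighttriangles{\cdot}$, uniformly over all $A$ consistent with $\mathcal G$ and despite possible inconsistency of $\heavyoracle$ across repeated queries to a mid-range edge --- together with verifying that the several rescalings ($\approxerror/2$, $\approxerror/6$, $\lowerthreshold=6$, $\upperthreshold=24$, $\confidence=\tfrac{1}{\edgecount\vertexcount}$) really do compose to yield both the $[(1-\approxerror)\numtriangle,(1+\approxerror)\numtriangle]$ window and the precise numerical query bound.
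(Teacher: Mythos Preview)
Your proposal is correct and follows essentially the same route as the paper: invoke Theorem~\ref{Theorem: Oracle Triangle Estimate ALgo Works} for both the direct query count and the Chebyshev guarantee, add in the oracle cost via Lemma~\ref{Lemma: Heavy Oracle Query Count}, use Lemma~\ref{Lemma: Heavy Oracle Algorithm Correctness} plus a union bound to control all $\heavyoracle$ calls, and finish with Corollary~\ref{Corollary: Lower Bound on Light Triangles} to convert the $(\approxerror/2)$-guarantee on light triangles into an $\approxerror$-guarantee on $\numtriangle$. The paper's own proof is terser and less scrupulous on exactly the point you flag: it simply writes $\Exp[\emptriangle]=\lighttriangles{6\arboricity/\approxerror}$ once the union-bound event holds, without introducing your realized light set $A$ or the quantity $T_A$, and without discussing mid-range edges or oracle inconsistency across repeated samples; your sandwich $\lighttriangles{\tau_L}\le T_A\le\numtriangle$ is the more honest bookkeeping for the same argument.
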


\begin{proof}
    Algorithm~\ref{Algorithm: Random Edge Arboricity Triangle Counting Triangle Estimate} will make the same number of calls to \heavyoracle{} as did Algorithm~\ref{Algorithm: Random Edge Arboricity Triangle Counting Oracle Triangle Estimate} to \exactheavyoracle{}. So, by Theorem~\ref{Theorem: Oracle Triangle Estimate ALgo Works}, Algorithm~\ref{Algorithm: Random Edge Arboricity Triangle Counting Triangle Estimate} makes $36\constant(1+\upperthreshold)\approxerror^{-3}(\edgecount\arboricity/\esttriangle)\log\vertexcount$ queries and $4\constant(1+\upperthreshold)\approxerror^{-3}\log(\vertexcount)$$(\edgecount\arboricity/\esttriangle)$ calls to $\heavyoracle{}$. By Lemma~\ref{Lemma: Heavy Oracle Query Count}, each call to $\heavyoracle{}$ requires $132\approxerror\log \vertexcount$ queries in expectation. Combining with the fact that we have $\upperthreshold = 24$, the algorithm requires $\fbrac{5050+13200\approxerror\log\vertexcount}\fbrac{\approxerror^{-3} (\edgecount\arboricity/\esttriangle) \constant\log \vertexcount}$ queries in expectation.

    By Lemma~\ref{Lemma: Heavy Oracle Algorithm Correctness}, the implementation of $\exactheavyoracle\fbrac{\edge,\arboricity,\approxerror} = \heavyoracle{\fbrac{\edge,\arboricity,\approxerror/6,\frac{1}{\edgecount\vertexcount}}}$  accepts a $\frac{6\arboricity}{\approxerror}$-light edge and rejects a $\frac{24\arboricity}{\approxerror}$-heavy edge with probability at least $1 - \frac{1}{\edgecount\vertexcount}$. By union bound, the algorithm accepts all $\frac{6\arboricity}{\approxerror}$-light edges and rejects all $\frac{24\arboricity}{\approxerror}$-heavy edges in $\samplededges$ with probability at least $1-\frac{1}{\vertexcount}$. Hence, by Corollary~\ref{Corollary: Lower Bound on Light Triangles} and Lemma~\ref{lemma: E[Y_I] Weight Func Algo}\remove{\complain{(Debarshi: it should need something more than Corollary~\ref{Corollary: Lower Bound on Light Triangles}?)}}, we have, 
    \begin{align}
    \Exp\tbrac{\emptriangle} = \lighttriangles{\frac{6\arboricity}{\approxerror}} \in \tbrac{\fbrac{1-\approxerror/2}\numtriangle,\numtriangle} \label{Eq:Final Light Triangle Bound}
    \end{align}

    By Theorem~\ref{Theorem: Oracle Triangle Estimate ALgo Works}, we have that:
    \begin{align}
        \Pr\tbrac{\emptriangle \in \tbrac{\fbrac{1-\approxerror/2}\lighttriangles{\frac{6\arboricity}{\approxerror}},\fbrac{1+\approxerror/2}\lighttriangles{\frac{6\arboricity}{\approxerror}}}} \geq 1 - \frac{1}{\constant\log \vertexcount}\label{Eq: Final Estimate Bound}
    \end{align}

    Combining Equations~\ref{Eq:Final Light Triangle Bound} and~\ref{Eq: Final Estimate Bound}~and using union bound on the event that all oracle calls were executed correctly, for large enough $\vertexcount$:
    \begin{align*}
        \Pr\tbrac{\emptriangle \in \tbrac{\fbrac{1-\approxerror}\numtriangle,\fbrac{1+\approxerror}\numtriangle}} \geq 1 - \frac{1}{\constant\log \vertexcount}
    \end{align*}
\end{proof}

Using the usual techniques in property testing~\citep{Dana_Ron_Triangle_Counting,DBLP:conf/soda/EdenRS20,chakrabarti2015data,Goldreich_Ron_EdgeCounting}, we can obtain the following result. The details of the proof are deferred to the appendix.

\begin{theorem}\label{Theorem: Final Upper Bound}
    There exists an algorithm that makes $\bigot{\frac{\edgecount\arboricity\log\frac{1}{\confidence}}{\approxerror^3\numtriangle}}$ queries in expectation, and returns $X \in \tbrac{\fbrac{1-\approxerror}\numtriangle,\fbrac{1+\approxerror}\numtriangle}$ with probability at least $1 - \confidence$.
\end{theorem}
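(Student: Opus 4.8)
The plan is to remove the two idealizations still present in Algorithm~\ref{Algorithm: Random Edge Arboricity Triangle Counting Triangle Estimate}: Assumption~\ref{Assumption: Triangle 2 Factor Estimate}, that a value $\esttriangle\le 2\numtriangle$ is supplied, and the fixed success probability $1-1/(\constant\log\vertexcount)$ of Theorem~\ref{Theorem: Triangle Estimate ALgo Works}, which must be amplified to $1-\confidence$. Both are handled by the standard ``geometric search $+$ median amplification'' reduction used for sublinear counting~\citep{Goldreich_Ron_EdgeCounting,Dana_Ron_Triangle_Counting,DBLP:conf/soda/EdenRS20,chakrabarti2015data}. Throughout I may assume $\approxerror\le 1/4$, $\confidence\le 1/2$, $\vertexcount$ large enough that $\constant\log\vertexcount$ exceeds a fixed constant, that a constant-factor estimate of $\edgecount$ is available using $\bigo{1}$ queries~\citep{assadi2018simple}, and that $\arboricity$ is given.

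\emph{A boosted black box that tolerates a constant-factor over-estimate.} First I would isolate from the proofs of Theorems~\ref{Theorem: Oracle Triangle Estimate ALgo Works} and~\ref{Theorem: Triangle Estimate ALgo Works} which conclusions survive when $\esttriangle$ is \emph{not} a $2$-approximation of $\numtriangle$. The parameter $\esttriangle$ only fixes the sample size $\edgesamplesize$, so all the query bounds and the variance estimate $\Var[\emptriangle]\le(1+\upperthreshold)\edgecount\arboricity\,\Exp[\emptriangle]/(\approxerror\edgesamplesize)$ hold for every $\esttriangle$; Assumption~\ref{Assumption: Triangle 2 Factor Estimate} is invoked \emph{only} in the last line of the proof of Theorem~\ref{Theorem: Oracle Triangle Estimate ALgo Works}. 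Also, when all \heavyoracle{} calls are correct---which, after calling \heavyoracle{} with confidence parameter $\poly{1/\vertexcount,\confidence}$ instead of $1/(\edgecount\vertexcount)$ (a $\bigot{\log(1/\confidence)}$ blow-up in its per-edge iteration count), fails with probability $\poly{1/\vertexcount,\confidence}$ per run---Lemma~\ref{lemma: E[Y_I] Weight Func Algo} with Corollary~\ref{Corollary: Lower Bound on Light Triangles} give $\Exp[\emptriangle]=\lighttriangles{6\arboricity/\approxerror}\in[(1-\approxerror/2)\numtriangle,\numtriangle]$ for \emph{every} $\esttriangle$. Retracing the Chebyshev step of Theorem~\ref{Theorem: Oracle Triangle Estimate ALgo Works} with a general parameter $g$ then gives failure probability $O\!\left(g/(\numtriangle\,\constant\log\vertexcount)\right)$, whereas Markov (using $\Exp[\emptriangle]\le\numtriangle$) gives $\emptriangle<4\numtriangle$ with probability $\ge 3/4$, for any $g$. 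Define $\sA(g)$ to be the median of $k=\Theta(\log(\vertexcount/\confidence))=\bigot{\log(1/\confidence)}$ independent runs of Algorithm~\ref{Algorithm: Random Edge Arboricity Triangle Counting Triangle Estimate} with $\esttriangle:=g$; a Chernoff bound over the runs upgrades this to the following, each statement failing with probability $\le\confidence/\vertexcount^{3}$: (a) if $g\le 32\numtriangle$ then $\sA(g)\in[(1-\approxerror)\numtriangle,(1+\approxerror)\numtriangle]$; (b) for every $g$, $\sA(g)<4\numtriangle$ when $\numtriangle\ge 1$; and (c) $\sA(g)$ makes $k\cdot\bigot{\edgecount\arboricity/(\approxerror^{3}g)}$ queries in expectation.

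\emph{Geometric search.} Next I would run a geometric search over $g_i:=\edgecount\arboricity/2^{i}$, $i=0,\dots,I:=\lceil\log_2(\edgecount\arboricity)\rceil$; by Lemma~\ref{lemma: arboricity triangle bound}, $g_0=\edgecount\arboricity\ge\numtriangle$ and $g_I\le 1$. If $\numtriangle=0$ then every run returns $0$ and the algorithm reports $0$; otherwise let $i^{\ast}$ be the index with $g_{i^{\ast}}\in[\numtriangle,2\numtriangle)$. For $i=0,1,\dots$ compute $\emptriangle_i:=\sA(g_i)$, stop at the first $\hat\imath$ with $\emptriangle_{\hat\imath}\ge g_{\hat\imath}/8$, and output $X:=\emptriangle_{\hat\imath}$ (output $0$ if no such index is $\le I$). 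Let $\sG$ be the event that (b) holds at every index, (a) holds at every index $i$ with $g_i\le 32\numtriangle$, and all \heavyoracle{} calls throughout are correct; a union bound over the $O(I)=O(\log\vertexcount)$ calls to $\sA$ gives $\Pr[\sG]\ge 1-\confidence$. On $\sG$: for $i$ with $g_i\ge 32\numtriangle$, (b) gives $\emptriangle_i<4\numtriangle\le g_i/8$, so the search does not stop there; at $i^{\ast}$, (a) gives $\emptriangle_{i^{\ast}}\ge(1-\approxerror)\numtriangle\ge g_{i^{\ast}}/8$ (using $g_{i^{\ast}}<2\numtriangle$ and $\approxerror\le 1/4$), so the search stops by $i^{\ast}$. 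Hence $\hat\imath\le i^{\ast}$, so $g_{\hat\imath}\ge g_{i^{\ast}}\ge\numtriangle$, while the stopping rule with (b) forces $g_{\hat\imath}/8\le\emptriangle_{\hat\imath}<4\numtriangle$, i.e.\ $g_{\hat\imath}<32\numtriangle$. Thus $g_{\hat\imath}\in[\numtriangle,32\numtriangle)$, property (a) applies at index $\hat\imath$, and $X=\emptriangle_{\hat\imath}\in[(1-\approxerror)\numtriangle,(1+\approxerror)\numtriangle]$.

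\emph{Query complexity and the main obstacle.} For the running time I would argue: since distinct iterations are independent, $\Exp[\#\text{queries}]\le\sum_{i=0}^{I}\Pr[\text{search reaches iteration }i]\cdot k\,\bigot{\edgecount\arboricity/(\approxerror^{3}g_i)}$. Bounding $\Pr[\cdot]\le 1$ for $i\le i^{\ast}$ already gives $k\sum_{i\le i^{\ast}}\bigot{\edgecount\arboricity/(\approxerror^{3}g_i)}=k\,\bigot{2^{i^{\ast}}/\approxerror^{3}}=k\,\bigot{\edgecount\arboricity/(\approxerror^{3}\numtriangle)}$, using the geometric sum and $2^{i^{\ast}}\le\edgecount\arboricity/\numtriangle$. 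For $i>i^{\ast}$, reaching iteration $i$ forces iteration $i^{\ast}$ not to have stopped, an event of probability $\le\confidence/\vertexcount^{3}$; as $\sum_{i=i^{\ast}+1}^{I}k\,\bigot{\edgecount\arboricity/(\approxerror^{3}g_i)}\le k\,\bigot{\edgecount\arboricity/\approxerror^{3}}$ and $\edgecount\arboricity\le\vertexcount^{3}$ while $\numtriangle\le\edgecount\arboricity$, this tail adds only $O\!\big(k\,\bigot{\edgecount\arboricity/(\approxerror^{3}\numtriangle)}\big)$. With $k=\bigot{\log(1/\confidence)}$ the total is $\bigot{\edgecount\arboricity\log(1/\confidence)/(\approxerror^{3}\numtriangle)}$. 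The hard part, I expect, is the black-box design: making a mere \emph{constant-factor} over-estimate of $\numtriangle$ good enough---so that the discrete grid $\{g_i\}$ is usable---requires auditing the unique use of Assumption~\ref{Assumption: Triangle 2 Factor Estimate} and confirming the failure probability stays $o(1)$ when $g=O(\numtriangle)$; and, since the target bound is on \emph{expected} (not high-probability) queries, one must over-boost the calls $\sA(\cdot)$ so that the rare overshoot past $i^{\ast}$, which can cost up to $\bigot{\edgecount\arboricity/\approxerror^{3}}$, does not dominate. Everything else is routine constant-chasing.
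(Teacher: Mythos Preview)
Your proposal is correct, and the overall scheme---geometric search plus median amplification---is the same as the paper's, but the two proofs organise these ingredients differently.

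The paper (Algorithm~\ref{Algorithm: Final Search}) runs a \emph{nested} search: an outer loop over a guess $\searchesttriangle$ and an inner loop over $\esttriangle\in\{\vertexcount^3,\vertexcount^3/2,\dots,\searchesttriangle\}$; at each step it takes the \emph{minimum} of $O(\log\log\vertexcount)$ runs of Algorithm~\ref{Algorithm: Random Edge Arboricity Triangle Counting Triangle Estimate} and stops when this minimum exceeds $\esttriangle$. Markov's inequality (Lemma~\ref{Lemma: Final Search Wrong Termination Bound}) controls premature termination, and a separate lemma bounds the probability that $\searchesttriangle$ drifts below $\numtriangle/2^k$. This yields a constant ($5/6$) success probability, and only \emph{then} is the median trick applied, contributing the $\log(1/\confidence)$ factor as a clean outermost multiplier. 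You instead amplify \emph{first}: each black-box call $\sA(g)$ is already the median of $\Theta(\log(\vertexcount/\confidence))$ runs, giving per-call failure $\le\confidence/\vertexcount^{3}$, and you then run a \emph{single} geometric sweep over $g_i=\edgecount\arboricity/2^{i}$ with the stopping rule $\emptriangle_i\ge g_i/8$. Your two deterministic-on-$\sG$ properties (a) and (b) replace the paper's pair of lemmas, and your expected-cost tail bound (overshoot past $i^{\ast}$ has probability $\le\confidence/\vertexcount^{3}$ against a worst-case cost $\bigot{\edgecount\arboricity/\approxerror^{3}}\le\bigot{\vertexcount^{3}/\approxerror^{3}}$) plays the role of the paper's Lemma~\ref{Lemma: Final Search Low Bar T Bound}.

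Both routes yield the same $\bigot{\edgecount\arboricity\log(1/\confidence)/(\approxerror^{3}\numtriangle)}$ bound. Your version is structurally simpler (one loop instead of two, median instead of min), and your audit of where Assumption~\ref{Assumption: Triangle 2 Factor Estimate} is actually used---only in the final Chebyshev line of Theorem~\ref{Theorem: Oracle Triangle Estimate ALgo Works}---is exactly the right observation; the paper relies on the same fact but leaves it implicit. The paper's version has the minor aesthetic advantage that the $\log(1/\confidence)$ factor appears as a single outer repetition rather than being baked into every black-box call.
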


\subsection{Finalizing The Algorithm}


In this section, we describe the steps to obtain Theorem~\ref{Theorem: Final Upper Bound} from Theorem~\ref{Theorem: Triangle Estimate ALgo Works}. First, we remove the assumption on the knowledge of $\esttriangle$ for the Algorithm~\ref{Algorithm: Random Edge Arboricity Triangle Counting Triangle Estimate}. To achieve that, we search for an appropriate choice of $\esttriangle$ starting from $\searchesttriangle = \vertexcount^3$ and halving it each time we fail to find a $\esttriangle$ satisfying Assumption~\ref{Assumption: Triangle 2 Factor Estimate}.
We must also bound the probability of $\searchesttriangle$ deviating significantly below $\numtriangle$. To ensure that, for each value of $\searchesttriangle$, we run Algorithm~\ref{Algorithm: Random Edge Arboricity Triangle Counting Triangle Estimate} using values of $\esttriangle$ as $\vertexcount^3,\vertexcount^3/2,...,\searchesttriangle$.

\begin{algorithm}
    \caption{Triangle Counting Algorithm}\label{Algorithm: Final Search}
    \begin{algorithmic}[1]
        \Require \degreeq{}, \neighbourq{}, \edgeexistsq{}, and \randedgeq{} query access to a graph $\graph$. Parameters $\arboricity$, $\approxerror$, $\edgecount$ 
        \For{$\searchesttriangle $ in $\tbrac{\vertexcount^3,\vertexcount^3/2,...,1}$}\label{Line: Bar T Range}
            \For{$\esttriangle$ in $\tbrac{\vertexcount^3,\vertexcount^3/2,...,\searchesttriangle }$} \label{Line: EstTriangle Range}
                \For{$i \in \tbrac{2\log\fbrac{\constant\log(\vertexcount)}}$}
                    \State $X_i \gets$ Solution to Algorithm~\ref{Algorithm: Random Edge Arboricity Triangle Counting Triangle Estimate} with parameters $\esttriangle,\arboricity,\approxerror$
                    \State $X \gets \min_i X_i$
                    \If{$X \geq \esttriangle$}
                        \State \Return $X$
                    \EndIf
                \EndFor
            \EndFor
        \EndFor
    \end{algorithmic}
\end{algorithm}

First we bound the probability that the Algorithm~\ref{Algorithm: Final Search} terminates at a choice of $\esttriangle$ that does not satisfy the Assumption~\ref{Assumption: Triangle 2 Factor Estimate}.

\begin{lemma}\label{Lemma: Final Search Wrong Termination Bound}
    When $\esttriangle > 2\numtriangle$, the algorithm terminates with probability at most $\frac{1}{\constant\log^2 \vertexcount}$.
    
\end{lemma}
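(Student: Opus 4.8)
Fix a value $\esttriangle$ with $\esttriangle > 2\numtriangle$, so that Assumption~\ref{Assumption: Triangle 2 Factor Estimate} is violated, and let $r = 2\log(\constant\log\vertexcount)$. The plan is to bound the probability that the $r$ independent invocations of Algorithm~\ref{Algorithm: Random Edge Arboricity Triangle Counting Triangle Estimate} that Algorithm~\ref{Algorithm: Final Search} runs for this $\esttriangle$, with outputs $X_1,\dots,X_r$, trigger a return: this happens only if $\min_j X_j \ge \esttriangle$, i.e.\ only if $X_j \ge \esttriangle$ for \emph{every} $j$. So it suffices to bound $\Pr[X_j \ge \esttriangle]$ for a single invocation away from $1$ and then multiply over the independent runs.

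For a single invocation with output $X$, let $\sG$ be the event that every call to \heavyoracle{} it makes on a $\frac{6\arboricity}{\approxerror}$-light or $\frac{24\arboricity}{\approxerror}$-heavy edge returns the correct answer. Exactly as in the proof of Theorem~\ref{Theorem: Triangle Estimate ALgo Works} (Lemma~\ref{Lemma: Heavy Oracle Algorithm Correctness} together with a union bound over the at most $\edgesamplesize$ sampled edges), $\Pr[\overline{\sG}] \le \frac{1}{\vertexcount}$. Conditioned on $\sG$ the oracle classifies the sampled edges consistently, so the de-duplication performed by Algorithm~\ref{Algorithm: Random Edge Arboricity Triangle Counting Oracle Triangle Estimate} realises a valid triangle weight function, and Lemma~\ref{lemma: E[Y_I] Weight Func Algo}, which holds regardless of Assumption~\ref{Assumption: Triangle 2 Factor Estimate}, gives $\Exp[X \mid \sG] = \lighttriangles{\frac{6\arboricity}{\approxerror}} \le \numtriangle$, since the number of light triangles never exceeds the total number of triangles. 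As $X \ge 0$ and $\esttriangle > 2\numtriangle$, Markov's inequality yields $\Pr[X \ge \esttriangle \mid \sG] \le \numtriangle/\esttriangle < \frac12$, and therefore $\Pr[X \ge \esttriangle] \le \frac12 + \frac{1}{\vertexcount}$.

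Combining, by independence of the $r$ invocations,
\[
\Pr[\text{Algorithm~\ref{Algorithm: Final Search} returns in the iteration for this }\esttriangle] \;\le\; \prod_{j=1}^{r}\Pr[X_j \ge \esttriangle] \;\le\; \left(\frac12 + \frac1\vertexcount\right)^{r},
\]
and for $\vertexcount$ large enough this is at most $\frac{2}{(\constant\log\vertexcount)^2} \le \frac{1}{\constant\log^2\vertexcount}$, as claimed. The main obstacle is the middle step: Lemma~\ref{lemma: E[Y_I] Weight Func Algo} is proved for the idealised oracle \exactheavyoracle{}, so one must verify that conditioning on $\sG$ transfers it to the \heavyoracle{}-based invocation and, in particular, that an erring oracle cannot push $\Exp[X\mid\sG]$ above $\numtriangle$; this is precisely why the conditioning cannot be dropped and the small additive loss $1/\vertexcount$ must be carried through Markov's inequality before the final amplification. (The last inequality also uses $\log = \log_2$, so that $2^{-r} = (\constant\log\vertexcount)^{-2}$; with a different base one would simply start from a slightly larger gap than $\esttriangle > 2\numtriangle$.)
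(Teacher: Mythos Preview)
Your proof is correct and follows the same skeleton as the paper's: bound $\Pr[X_j\ge\esttriangle]$ for one run via Markov's inequality using $\Exp[X_j]\le\numtriangle$, then amplify over the $2\log(\constant\log\vertexcount)$ independent runs.

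The one difference is that you condition on the good-oracle event $\sG$ and carry the additive $1/\vertexcount$ through, whereas the paper applies Markov unconditionally. In fact the conditioning is not needed: whatever answers \heavyoracle{} gives, the de-duplication in Algorithm~\ref{Algorithm: Random Edge Arboricity Triangle Counting Oracle Triangle Estimate} still charges every triangle to at most one sampled edge, so the resulting (random) weight function always satisfies $\sum_\edge\weightfunc(\edge)\le\numtriangle$, and hence $\Exp[X]\le\numtriangle$ without any assumption on oracle correctness. This is why the paper can skip straight to $\Pr[\emptriangle>2\numtriangle]\le\frac{\Exp[\emptriangle]}{2\numtriangle}\le\frac12$. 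Relatedly, your asserted \emph{equality} $\Exp[X\mid\sG]=\lighttriangles{6\arboricity/\approxerror}$ is slightly off: even on $\sG$ the oracle may answer either way on edges in the gap between the $6\arboricity/\approxerror$ and $24\arboricity/\approxerror$ thresholds, so only the inequality $\le\numtriangle$ is guaranteed---which is exactly what you use. None of this affects the validity of your argument.
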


\begin{proof}
    By Lemma~\ref{lemma: E[Y_I] Weight Func Algo}, we have from Markov's inequality:
    \begin{align*}
        \Pr\tbrac{\emptriangle > 2\numtriangle} \leq \frac{\Exp\tbrac{\emptriangle}}{2\numtriangle} \leq \frac{1}{2}
    \end{align*}
    Now, we have $\Pr\tbrac{X \geq \esttriangle} \leq \Pr\tbrac{\cap_{i}X_i \geq \esttriangle} \leq \frac{1}{\constant\log^2 \vertexcount}$
\remove{
    \begin{align*}
        &\Pr\tbrac{X \geq \esttriangle}\\
        \leq &\Pr\tbrac{\bigcap\limits_i ~ X_i \geq \esttriangle}\\
        \leq &\frac{1}{\constant\log^2 \vertexcount}
    \end{align*}
    }
\end{proof}


    

\begin{lemma}\label{Lemma: Final Search Low Bar T Bound}
    The algorithm reaches a value of $\searchesttriangle  \leq \numtriangle/2^k$ $\fbrac{k \geq 1}$ with probability at most $\frac{1}{\fbrac{\constant\log \vertexcount}^k}$.
\end{lemma}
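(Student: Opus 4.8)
The plan is to induct on $k$ via the nested events $\sE_k:=\{\text{Algorithm~\ref{Algorithm: Final Search} reaches an outer value }\searchesttriangle\le\numtriangle/2^k\}$. Since the outer loop of Algorithm~\ref{Algorithm: Final Search} visits $\vertexcount^3,\vertexcount^3/2,\dots$ in strictly decreasing order, $\sE_k\subseteq\sE_{k-1}$, so it is enough to establish the one-step estimate $\Pr[\sE_k]\le\frac{1}{\constant\log\vertexcount}\Pr[\sE_{k-1}]$ for every $k\ge 1$; unrolling it and using $\Pr[\sE_0]\le 1$ then gives $\Pr[\sE_k]\le(\constant\log\vertexcount)^{-k}$.

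For the one-step estimate, fix $k\ge 2$ and let $\searchesttriangle^{\ast}$ be the largest entry of $\{\vertexcount^3,\vertexcount^3/2,\dots\}$ with $\searchesttriangle^{\ast}\le\numtriangle/2^{k-1}$, so that $\numtriangle/2^{k}<\searchesttriangle^{\ast}\le\numtriangle/2^{k-1}$. On $\sE_{k-1}$ the outer loop does reach $\searchesttriangle^{\ast}$, and to lower the outer variable any further (hence for $\sE_k$ to occur) the algorithm must run the entire inner loop at $\searchesttriangle=\searchesttriangle^{\ast}$ without returning; in particular it must not return at the inner value $\esttriangle=\searchesttriangle^{\ast}$. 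Let $F_k$ be the event that the $\esttriangle=\searchesttriangle^{\ast}$ block does not return, so $\sE_k\subseteq\sE_{k-1}\cap F_k$. The crucial structural observation is that $F_k$ is independent of $\sE_{k-1}$: the event $\sE_{k-1}$ is a function only of the randomness used while processing outer values $>\numtriangle/2^{k-1}\ge\searchesttriangle^{\ast}$, whereas $F_k$ depends only on the $2\log(\constant\log\vertexcount)$ fresh invocations of Algorithm~\ref{Algorithm: Random Edge Arboricity Triangle Counting Triangle Estimate} made inside the $(\searchesttriangle^{\ast},\searchesttriangle^{\ast})$ block. Hence $\Pr[\sE_k]\le\Pr[\sE_{k-1}]\Pr[F_k]$.

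To bound $\Pr[F_k]$: at this block $\esttriangle=\searchesttriangle^{\ast}\le\numtriangle/2^{k-1}\le\numtriangle$, so Assumption~\ref{Assumption: Triangle 2 Factor Estimate} holds and Theorem~\ref{Theorem: Triangle Estimate ALgo Works} applies to each invocation; moreover, since we may assume $\approxerror\le\tfrac12$ throughout (a $\tfrac12$-approximation is in particular an $\approxerror$-approximation for larger $\approxerror$), for $k\ge 2$ we have $\searchesttriangle^{\ast}\le\numtriangle/2\le(1-\approxerror)\numtriangle$. So every run returns a value in $[(1-\approxerror)\numtriangle,(1+\approxerror)\numtriangle]$, and in particular a value $\ge\searchesttriangle^{\ast}$, with probability at least $1-\frac{1}{\constant\log\vertexcount}$. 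As Algorithm~\ref{Algorithm: Final Search} aggregates the runs by $X=\min_i X_i$ and returns once $X\ge\esttriangle$, a union bound over the $2\log(\constant\log\vertexcount)$ runs yields $\Pr[F_k]\le 1-(1-\tfrac{1}{\constant\log\vertexcount})^{2\log(\constant\log\vertexcount)}\le\frac{1}{\constant\log\vertexcount}$, the last step absorbing the $O(\log\log\vertexcount)$ repetition factor into the constant exactly as in the proof of Lemma~\ref{Lemma: Final Search Wrong Termination Bound}. This completes the induction.

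The main obstacle is the estimate on $\Pr[F_k]$: one has to verify that a sufficiently small (a ``good underestimate'') value of $\esttriangle$ is necessarily processed before the outer variable can fall below $\numtriangle/2^k$, and that the $2\log(\constant\log\vertexcount)$-fold repetition together with the $\min$ aggregation makes failing to stop at that value about as unlikely as a single inaccurate run. The base case $k=1$ needs extra care, since the only list entry in $(\numtriangle/2,\numtriangle]$ can lie in $((1-\approxerror)\numtriangle,\numtriangle]$ and so need not be a good underestimate; there one either invokes the crude bound $\Pr[\sE_1]\le 1$ (which already suffices for the expected query-complexity application of Theorem~\ref{Theorem: Final Upper Bound}, where one only needs $\sum_{k\ge 1}\Pr[\sE_k]\,2^{k}=O(1)$), or shrinks $\approxerror$ slightly and uses instead the next list entry, which lies in $(\numtriangle/4,\numtriangle/2]$ and is a good underestimate.
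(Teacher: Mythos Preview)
Your proof follows the paper's approach: at each sufficiently small outer value $\searchesttriangle$ there is a ``good'' inner block $\esttriangle\le\numtriangle/2$ at which Theorem~\ref{Theorem: Triangle Estimate ALgo Works} forces termination with probability at least $1-\tfrac{1}{\constant\log\vertexcount}$, and independence of the fresh randomness across outer levels yields the product bound; you are simply more explicit than the paper about the induction, the independence, and the $k=1$ boundary case (which the paper indeed glosses over and which, as you note, only delivers $(\constant\log\vertexcount)^{-(k-1)}$ cleanly---harmless for the downstream expected-query application).

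One minor slip worth flagging: your appeal to ``exactly as in the proof of Lemma~\ref{Lemma: Final Search Wrong Termination Bound}'' points in the wrong direction. In that lemma the $\min$ aggregation works \emph{for} the bound (you want at least one $X_i<\esttriangle$, so independence of the $2\log(\constant\log\vertexcount)$ repetitions drives the failure probability down multiplicatively), whereas here the $\min$ works \emph{against} you (you need \emph{every} $X_i\ge\esttriangle$), so the union bound genuinely picks up a $2\log(\constant\log\vertexcount)$ factor that must be absorbed into $\constant$. The paper makes the same informal absorption, so this does not separate your argument from theirs, but the mechanism is not the one you cite.
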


\begin{proof}
    For every such value of $\searchesttriangle $, we have a $\esttriangle$ in Line~\ref{Line: EstTriangle Range} of Algorithm~\ref{Algorithm: Final Search} such that $\esttriangle \leq \numtriangle/2$. By Theorem~\ref{Theorem: Triangle Estimate ALgo Works}, the algorithm returns an estimate $X_i \geq \numtriangle/2 \geq \esttriangle$ with probability at least $1 - \frac{1}{\constant\log \vertexcount}$. Taking a union bound over possible values of $i \in \tbrac{2\log \fbrac{\constant\log \vertexcount}}$, we have Algorithm~\ref{Algorithm: Final Search} returns an $X$ such that $X \geq \numtriangle/2 \geq \esttriangle$ with probability at least $1 - \frac{1}{\constant\log \vertexcount}$, and the algorithm terminates.

    
    Hence, to get to a $\searchesttriangle  \leq \frac{\numtriangle}{2^k}$, the Algorithm~\ref{Algorithm: Final Search} must have failed to terminate for all previous values of $\searchesttriangle $, which happens with probability at most $\frac{1}{\fbrac{\constant\log \vertexcount}^k}$.
\end{proof}

Finally we combine Lemmas~\ref{Lemma: Final Search Wrong Termination Bound} and~\ref{Lemma: Final Search Low Bar T Bound} to obtain the following theorem quantifying the estimation guarantees and query complexity of Algorithm~\ref{Algorithm: Final Search}.

\begin{theorem}
    Algorithm~\ref{Algorithm: Final Search} makes $\bigot{\edgecount\arboricity/\approxerror^3\numtriangle}$ queries in expectation, and returns $X \in [\fbrac{1-\approxerror}\numtriangle$ $,\fbrac{1+\approxerror}\numtriangle]$ with probability at least $5/6$.
\end{theorem}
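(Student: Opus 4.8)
The plan is to derive both halves of the statement from Theorem~\ref{Theorem: Triangle Estimate ALgo Works} together with Lemmas~\ref{Lemma: Final Search Wrong Termination Bound} and~\ref{Lemma: Final Search Low Bar T Bound}, keeping track of the fact that Algorithm~\ref{Algorithm: Final Search} runs Algorithm~\ref{Algorithm: Random Edge Arboricity Triangle Counting Triangle Estimate} exactly $2\log(\constant\log\vertexcount)$ times on each of the $O(\log^2\vertexcount)$ pairs $(\searchesttriangle,\esttriangle)$ (both from $\{\vertexcount^3,\vertexcount^3/2,\dots,1\}$, with $\esttriangle\ge\searchesttriangle$), visited in a fixed order, and returns the minimum $X$ the first time it clears the current $\esttriangle$.

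For \emph{correctness}, let $E_1$ be the event that the algorithm returns at some pair with $\esttriangle>2\numtriangle$; since any such return forces the returned value to exceed $2\numtriangle\notin[(1-\approxerror)\numtriangle,(1+\approxerror)\numtriangle]$, this is the only way a too-large $\esttriangle$ can hurt us. Lemma~\ref{Lemma: Final Search Wrong Termination Bound} bounds the termination probability at any one such pair by $1/(\constant\log^2\vertexcount)$, so a union bound over the $O(\log^2\vertexcount)$ of them gives $\Pr[E_1]=O(1/\constant)$. Now write $v_0$ for the largest value of the form $\vertexcount^3/2^j$ with $v_0\le(1-\approxerror)\numtriangle$ (so $v_0>\numtriangle/4$ for $\approxerror<1/2$), and let $\mathcal{G}$ be the $O(1)$ pairs $(\searchesttriangle,\esttriangle)$ with $\searchesttriangle\ge v_0$ and $\esttriangle\le 2\numtriangle$ -- there are $O(1)$ admissible $\searchesttriangle\in[v_0,2\numtriangle]$ and $O(1)$ admissible $\esttriangle$ for each. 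At every pair in $\mathcal{G}$ Assumption~\ref{Assumption: Triangle 2 Factor Estimate} holds, so by Theorem~\ref{Theorem: Triangle Estimate ALgo Works} each of the $2\log(\constant\log\vertexcount)$ runs there outputs a value in $[(1-\approxerror)\numtriangle,(1+\approxerror)\numtriangle]$ with probability $1-(\constant\log\vertexcount)^{-1}$; a union bound over the $O(\log\log\vertexcount)$ runs attached to $\mathcal{G}$ says that, with probability $1-O(\log\log\vertexcount/(\constant\log\vertexcount))$, all of them do. On that event together with $\overline{E_1}$ the algorithm cannot return outside $\mathcal{G}$: it skips every $\esttriangle>2\numtriangle$ pair, and it is forced to return no later than $(v_0,v_0)\in\mathcal{G}$ because there $X=\min_iX_i\ge(1-\approxerror)\numtriangle\ge v_0=\esttriangle$; hence it returns at a pair of $\mathcal{G}$ with $X\in[(1-\approxerror)\numtriangle,(1+\approxerror)\numtriangle]$. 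Choosing $\constant$ a large enough constant (and $\vertexcount$ large enough) makes the failure probability at most $1/6$.

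For the \emph{expected query complexity}, Theorem~\ref{Theorem: Triangle Estimate ALgo Works} gives that a single call of Algorithm~\ref{Algorithm: Random Edge Arboricity Triangle Counting Triangle Estimate} with estimate $\esttriangle$ costs $\bigot{\edgecount\arboricity/(\approxerror^3\esttriangle)}$ queries in expectation (absorbing the $13200\approxerror\log\vertexcount$ summand via $\approxerror^{-2}\le\approxerror^{-3}$). Summing over the $O(\log\vertexcount)$ values $\esttriangle\in\{\vertexcount^3,\dots,\searchesttriangle\}$ and the $2\log(\constant\log\vertexcount)$ repetitions, and using linearity of expectation, the inner loop at a fixed $\searchesttriangle$ contributes a geometric series in $1/\esttriangle$ dominated by its last term, i.e.\ $\bigot{\edgecount\arboricity/(\approxerror^3\searchesttriangle)}$. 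Over $\searchesttriangle\ge\numtriangle$ these form a geometric series in $1/\searchesttriangle$ summing to $\bigot{\edgecount\arboricity/(\approxerror^3\numtriangle)}$; for $\searchesttriangle=\numtriangle/2^k$ with $k\ge1$, Lemma~\ref{Lemma: Final Search Low Bar T Bound} bounds the probability of reaching that level by $(\constant\log\vertexcount)^{-k}$, so its expected contribution is at most $(\constant\log\vertexcount)^{-k}\cdot\bigot{2^k\edgecount\arboricity/(\approxerror^3\numtriangle)}$, and $\sum_{k\ge1}(2/(\constant\log\vertexcount))^k<\infty$; altogether the expected query complexity is $\bigot{\edgecount\arboricity/(\approxerror^3\numtriangle)}$.

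The main obstacle I expect is not any single estimate but the termination bookkeeping: one must verify that, outside the $O(1/\constant)$-probability event $E_1$, the algorithm genuinely terminates, does so at a pair where Assumption~\ref{Assumption: Triangle 2 Factor Estimate} holds, and visits only $O(1)$ such pairs, so that the per-run error $(\constant\log\vertexcount)^{-1}$ only has to survive a union bound over $O(\log\log\vertexcount)$ runs rather than over all $O(\log^2\vertexcount)$ pairs -- and, symmetrically, that in the running-time bound the geometric decay $(\constant\log\vertexcount)^{-k}$ of $\Pr[\text{reach }\searchesttriangle=\numtriangle/2^k]$ from Lemma~\ref{Lemma: Final Search Low Bar T Bound} outpaces the geometric growth $2^k$ of the cost of that level. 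A small separate argument deals with $\numtriangle$ below a constant, where the query budget already exceeds $\edgecount\arboricity$ and one can count exactly. Once these are in place the remaining computations are routine.
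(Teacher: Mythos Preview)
Your proposal is correct and follows essentially the same approach as the paper: both arguments union-bound the premature-termination events via Lemma~\ref{Lemma: Final Search Wrong Termination Bound}, invoke Theorem~\ref{Theorem: Triangle Estimate ALgo Works} on the good pairs, and bound the expected cost by combining the inner geometric sum in $\esttriangle$, the outer geometric sum down to $\searchesttriangle\approx\numtriangle$, and the tail via Lemma~\ref{Lemma: Final Search Low Bar T Bound} exactly as you do.

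Your correctness bookkeeping is in fact tighter than the paper's. The paper simply union-bounds over all $O(\log \vertexcount)$ values of $\searchesttriangle\le 2\numtriangle$, asserting for each that ``the algorithm returns a wrong value with probability at most $1/(\constant\log\vertexcount)$'' by Theorem~\ref{Theorem: Triangle Estimate ALgo Works}, and does not explicitly argue that termination is forced. You instead isolate the $O(1)$-size set $\mathcal{G}$ of pairs with $\searchesttriangle\ge v_0$ and $\esttriangle\le 2\numtriangle$, verify that on $\overline{E_1}$ the algorithm cannot return outside $\mathcal{G}$ and must return by $(v_0,v_0)$, and then only need to union-bound over the $O(\log\log\vertexcount)$ runs attached to $\mathcal{G}$. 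This is cleaner and actually makes the role of the inner $2\log(\constant\log\vertexcount)$ repetitions more transparent, though both routes reach the same $5/6$ conclusion once $\constant$ is chosen large enough.
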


\begin{proof}
    The algorithm runs through at most $\log^2 \vertexcount$ values of $\esttriangle$ such that $\esttriangle > 2\numtriangle$. By Lemma~\ref{Lemma: Final Search Wrong Termination Bound}, the algorithm terminates in each such case with probability at most $1/\constant\log^2\vertexcount$. For each value of $\searchesttriangle  \leq 2\numtriangle$, by Theorem~\ref{Theorem: Triangle Estimate ALgo Works}, the algorithm returns a wrong value with probability at most $1/\constant\log \vertexcount$. There are at most $3\log\vertexcount$ such values of $\searchesttriangle$. Taking an union bound and fixing the constant $\constant$ appropriately, the algorithm returns a correct output with probability at least $5/6$.

    For each  $\searchesttriangle $ in Line~\ref{Line: Bar T Range} of Algorithm~\ref{Algorithm: Final Search}, by Theorem~\ref{Theorem: Triangle Estimate ALgo Works} the algorithm makes $\bigot{(\edgecount\arboricity/\approxerror^3\searchesttriangle )}$ queries. Hence, till $\searchesttriangle  \leq \numtriangle/2$, the algorithm makes $\bigot{(\edgecount\arboricity/\approxerror^3\numtriangle)}$ queries. Additionally, by Lemma~\ref{Lemma: Final Search Low Bar T Bound}, the queries beyond $\searchesttriangle  \leq \numtriangle$ can be bounded as:
    \begin{align*}
        \sum_{i \in \log\fbrac{\vertexcount}}\frac{1}{\fbrac{\constant\log \vertexcount}^k}\cdot2^k \cdot\bigot{\frac{\edgecount\arboricity}{\approxerror^3\numtriangle}} &=\bigot{\frac{\edgecount\arboricity}{\approxerror^3\numtriangle}}
    \end{align*}
\end{proof}

\begin{theorem}
    There exists an algorithm that makes $\bigot{\frac{\edgecount\arboricity\log\frac{1}{\confidence}}{\approxerror^3\numtriangle}}$ queries in expectation, and returns $X \in \tbrac{\fbrac{1-\approxerror}\numtriangle,\fbrac{1+\approxerror}\numtriangle}$ with probability at least $1 - \confidence$.
\end{theorem}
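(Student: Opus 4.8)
The plan is to amplify the constant success probability of Algorithm~\ref{Algorithm: Final Search} up to $1-\confidence$ by the standard median-of-independent-runs trick. By the preceding theorem, Algorithm~\ref{Algorithm: Final Search} terminates, makes $\bigot{\edgecount\arboricity/\approxerror^3\numtriangle}$ queries in expectation, and outputs a value in the interval $I \defeq \tbrac{\fbrac{1-\approxerror}\numtriangle,\fbrac{1+\approxerror}\numtriangle}$ with probability at least $5/6$. I would run $t = \Theta\fbrac{\log\fbrac{1/\confidence}}$ independent copies of Algorithm~\ref{Algorithm: Final Search}, obtain outputs $X_1,\dots,X_t$, and return their median $X$.

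For correctness, let $Z_j \defeq \indicator\sbrac{X_j \in I}$. The $Z_j$ are i.i.d.\ Bernoulli random variables with $\Pr\tbrac{Z_j = 1} \geq 5/6$, so $\Exp\tbrac{\sum_{j} Z_j} \geq 5t/6$. First observe that if strictly more than $t/2$ of the outputs lie in $I$, then so does their median: more than half of the $X_j$ are at least $\fbrac{1-\approxerror}\numtriangle$, forcing $X \geq \fbrac{1-\approxerror}\numtriangle$, and symmetrically $X \leq \fbrac{1+\approxerror}\numtriangle$. Applying the lower-tail multiplicative Chernoff bound of Lemma~\ref{Lemma: Multiplicative Chernoff Bound} to $\sum_j Z_j$ gives $\Pr\tbrac{\sum_j Z_j \leq t/2} \leq \exp\fbrac{-\Omega(t)}$, which is at most $\confidence$ once $t$ is a large enough constant multiple of $\log\fbrac{1/\confidence}$. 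Hence $X \in I$ with probability at least $1-\confidence$.

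For the query complexity, each of the $t$ copies makes $\bigot{\edgecount\arboricity/\approxerror^3\numtriangle}$ queries in expectation and is run to completion independently of the others, so by linearity of expectation the total expected number of queries is $t \cdot \bigot{\edgecount\arboricity/\approxerror^3\numtriangle} = \bigot{\frac{\edgecount\arboricity\log\fbrac{1/\confidence}}{\approxerror^3\numtriangle}}$, as claimed.

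This is entirely routine amplification, so I do not expect a real obstacle; the only points needing care are (i) choosing the constant in $t$ so that the Chernoff tail is bounded by $\confidence$ rather than merely a small absolute constant, (ii) the elementary observation that a strict majority of the $X_j$ landing in $I$ pins the median into $I$, and (iii) noting that the expected-query bound is preserved under repetition simply by linearity, since the copies are independent and each halts.
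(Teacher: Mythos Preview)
Your proposal is correct and follows exactly the same approach as the paper, which simply invokes the well-known median trick on $\bigo{\log(1/\confidence)}$ independent runs of Algorithm~\ref{Algorithm: Final Search}. You have merely spelled out the standard details (Chernoff on the indicator variables, majority-in-interval implies median-in-interval, and linearity of expectation for the query count) that the paper leaves implicit.
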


\begin{proof}
    By the well-known median trick~\citep{chakrabarti2015data}, the median of $\bigo{\log\frac{1}{\confidence}}$ independent runs of Algorithm~\ref{Algorithm: Final Search} establishes the result.
\end{proof}

\section{Lower Bounds}
\label{sec:lower-bound}
In this section, we prove a lower bound that accounts for the multiplicative approximation factor $\approxerror$, as well as the arboricity $\arboricity$ of the graph by extending the ideas proposed in~\citep{DBLP:conf/approx/AssadiN22}. Our lower bound 
almost matches our proposed upper bound stated in Theorem~\ref{Theorem: Final Upper Bound}. We first state the problem known as the {\sf Popcount Thresholding Problem} (referred to as \ptp{}) in the query framework:
\begin{definition}[$\defptp$]
    Given a string $\alicestring \in \sbrac{0,1}^\stringlength$, the problem $\defptp$ is to  distinguish whether $\alicestring$ is  generated from i.i.d. samples from $\ptpdone$ or $\ptpdtwo$ defined as follows:
    \begin{itemize}
        \item \textbf{$\ptpdone$ :} For all $i \in \tbrac{\stringlength}$, $\alicestring_i$ is set to $1$ with probability $\fbrac{1-2\ptpsep}\frac{\ptpprob}{\stringlength}$, and set to $0$ otherwise.
        \item \textbf{$\ptpdtwo$ :} For all $i \in \tbrac{\stringlength}$, $\alicestring_i$ is set to $1$ with probability $\fbrac{1+2\ptpsep}\frac{\ptpprob}{\stringlength}$, and set to $0$ otherwise.
    \end{itemize}
    The problem is to decide if $\alicestring$ is generated from $\ptpdone$ or $\ptpdtwo$ by querying $\alicestring$ at any of its $\stringlength$ bits.
\end{definition}
We state the following lemma~\citep{DBLP:conf/approx/AssadiN22} quantifying the query complexity of $\defptp$. 
\begin{lemma}~\cite{DBLP:conf/approx/AssadiN22}\label{Lemma: PTP Query Lower Bound}
    For any $\ptpsep \in (0,1/4)$, $\confidence \in (0,1/100)$, and integers $\stringlength \geq 1$, $\log(1/\confidence)\cdot12/\ptpsep^2\leq \ptpprob \leq \stringlength/6$, $\rqcomplexity{\confidence}{\defptp} \geq \frac{\stringlength\log\fbrac{1/4\confidence}}{24\ptpsep^2\ptpprob}$
  \remove{  \begin{align*}
        \rqcomplexity{\confidence}{\defptp} \geq \frac{\stringlength\log\fbrac{1/4\confidence}}{24\ptpsep^2\ptpprob}
    \end{align*}
    }
    where $\rqcomplexity{\confidence}{\defptp}$ is the randomized query complexity to decide $\defptp{}$ problem with probability at least $1 - \confidence$.
\end{lemma}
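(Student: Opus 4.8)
I would treat $\defptp$ as a two-point hypothesis test between the product distributions $\ptpdone^{\otimes\stringlength}$ and $\ptpdtwo^{\otimes\stringlength}$ and run the standard information-theoretic lower bound. The first step is to reduce adaptive query algorithms to non-adaptive ones. Under either hypothesis the coordinates $\alicestring_1,\dots,\alicestring_\stringlength$ are i.i.d.\ (Bernoulli with parameter $(1\mp 2\ptpsep)\ptpprob/\stringlength$), so whatever an adaptive algorithm has seen so far, the next coordinate it queries is a fresh independent sample with the same marginal; hence the transcript produced by \emph{any} $\querycount$-query algorithm — adaptive or not — is distributed as $\querycount$ i.i.d.\ draws from the corresponding Bernoulli, independent of the algorithm's internal randomness. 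Making this precise with a coupling argument, it suffices to lower bound the number of coordinates $\querycount$ needed to distinguish $\mathrm{Ber}(p_0)^{\otimes\querycount}$ from $\mathrm{Ber}(p_1)^{\otimes\querycount}$ with error probability at most $\confidence$, where $p_0=(1-2\ptpsep)\ptpprob/\stringlength$ and $p_1=(1+2\ptpsep)\ptpprob/\stringlength$; the hypotheses $\ptpsep<1/4$ and $\ptpprob\le\stringlength/6$ guarantee $0<p_0<p_1<1$.

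For the distinguishing bound I would combine relative entropy, tensorisation, and the data-processing inequality. The core estimate is that since $p_0$ and $p_1$ differ by only $p_1-p_0=4\ptpsep\ptpprob/\stringlength$, the binary relative entropy is quadratic in the gap:
\[
\RE{\mathrm{Ber}(p_1)}{\mathrm{Ber}(p_0)}\;\le\;\frac{(p_1-p_0)^2}{p_0(1-p_0)}\;=\;O\!\left(\frac{\ptpsep^2\ptpprob}{\stringlength}\right),
\]
where the first inequality is the $\chi^2$ bound and $p_0(1-p_0)=\Theta(\ptpprob/\stringlength)$. A more careful second-order expansion of the binary KL — which benefits from the cancellation of the first-order terms together with $p_0\le 1/6$ and $\ptpsep<1/4$ — sharpens the constant to give $\RE{\mathrm{Ber}(p_1)}{\mathrm{Ber}(p_0)}\le 24\,\ptpsep^2\ptpprob/\stringlength$. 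Tensorisation then yields $\RE{\mathrm{Ber}(p_1)^{\otimes\querycount}}{\mathrm{Ber}(p_0)^{\otimes\querycount}}=\querycount\cdot\RE{\mathrm{Ber}(p_1)}{\mathrm{Ber}(p_0)}\le 24\,\querycount\,\ptpsep^2\ptpprob/\stringlength$.

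On the other side, suppose a $\querycount$-query algorithm decides $\defptp$ with error at most $\confidence$, and let $\sA$ be the event that it outputs ``$\ptpdtwo$''; then $\Prob_{\ptpdtwo}[\sA]\ge 1-\confidence$ and $\Prob_{\ptpdone}[\sA]\le\confidence$. Applying data processing to the (randomised) map that sends the transcript to the output bit,
\[
24\,\querycount\,\ptpsep^2\ptpprob/\stringlength\;\ge\;\RE{\mathrm{Ber}(p_1)^{\otimes\querycount}}{\mathrm{Ber}(p_0)^{\otimes\querycount}}\;\ge\;\RE{\mathrm{Ber}(1-\confidence)}{\mathrm{Ber}(\confidence)}\;\ge\;\log(1/4\confidence),
\]
where the last inequality is an elementary bound on the binary relative entropy valid for $\confidence<1/100$. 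Rearranging gives $\rqcomplexity{\confidence}{\defptp}\ge\stringlength\log(1/4\confidence)/(24\,\ptpsep^2\ptpprob)$. One could instead phrase this via Le Cam's two-point method by lower bounding total variation, but the relative-entropy route makes the $\log(1/\confidence)$ dependence transparent.

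The part I expect to be most delicate is the constant bookkeeping rather than the conceptual structure: making the adaptive-to-non-adaptive reduction fully rigorous for randomised adaptive algorithms, and pinning down the absolute constants so that the Bernoulli-KL upper bound and the binary-KL lower bound combine into exactly $24$ and $\log(1/4\confidence)$. The side condition $\ptpprob\ge 12\log(1/\confidence)/\ptpsep^2$ is then used only to ensure the resulting bound does not exceed the trivial upper bound $\querycount\le\stringlength$ — i.e.\ that the statement is not vacuous — while $\ptpsep<1/4$ and $\ptpprob\le\stringlength/6$ keep $p_0,p_1$ bounded away from $0$ and $1$ so the quadratic KL estimate applies.
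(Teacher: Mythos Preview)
The paper does not prove this lemma at all: it is quoted from \cite{DBLP:conf/approx/AssadiN22} and used as a black box in the reduction of Section~\ref{sec:lower-bound}, so there is no ``paper's own proof'' to compare your proposal against.

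That said, your sketch is the standard and correct route to such a statement. The reduction from adaptive to non-adaptive queries is valid here precisely because the coordinates are i.i.d.\ under each hypothesis (so, assuming without loss of generality that no index is queried twice, the transcript law depends only on the number of queries). The KL\,/\,data-processing chain you outline is exactly the high-confidence two-point method and is what yields the $\log(1/\confidence)$ factor that Pinsker- or Le~Cam-style total-variation arguments would miss. The only places to be careful are the ones you already flag: (i) getting the single-coordinate KL bound down to the specific constant $24$ requires a slightly sharper estimate than the raw $\chi^2$ inequality (expanding $\operatorname{KL}(\mathrm{Ber}(p_1)\|\mathrm{Ber}(p_0))$ directly and using $\ptpsep<1/4$, $\ptpprob\le\stringlength/6$ works), and (ii) the inequality $\operatorname{KL}(\mathrm{Ber}(1-\confidence)\|\mathrm{Ber}(\confidence))\ge\log(1/4\confidence)$ is indeed elementary for $\confidence<1/100$. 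Your reading of the side condition $\ptpprob\ge 12\log(1/\confidence)/\ptpsep^2$ as a non-vacuity constraint is also correct.
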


We also state the following lemma establishing bounds on the $\ell_1$-norm of $\alicestring$ \remove{depending on whether it is generated from $\ptpdone$ or $\ptpdtwo$}. 

\begin{lemma}\label{Lemma: PTP Deviation Bound}
    In $\defptp$, for any $\ptpsep \in (0,1/4)$, $\confidence \in (0,1/100)$, and integers $\stringlength \geq 1$, $\log(1/\confidence)\cdot12/\ptpsep^2\leq \ptpprob \leq \stringlength/6$,
    \begin{align*}
        \Pr\tbrac{\norm{\alicestring}_1 > \fbrac{1-\ptpsep} \cdot \ptpprob~|~\alicestring \sim \ptpdone} \leq \confidence\\
        \Pr\tbrac{\norm{\alicestring}_1 < \fbrac{1+\ptpsep} \cdot \ptpprob~|~\alicestring \sim \ptpdtwo} \leq \confidence\\
        \Pr\tbrac{\norm{\alicestring}_1 < \fbrac{1-4\ptpsep} \cdot \ptpprob} \leq \confidence
    \end{align*}
    where $\norm{\alicestring}_1$ denotes the number of $1$'s in the string $\alicestring$.
\end{lemma}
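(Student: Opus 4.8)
The plan is to derive all three inequalities from the multiplicative Chernoff bound (Lemma~\ref{Lemma: Multiplicative Chernoff Bound}) applied to $\norm{\alicestring}_1 = \sum_{i\in\tbrac{\stringlength}}\alicestring_i$, which is a sum of $\stringlength$ i.i.d.\ Bernoulli variables. First I would record the two means: $\Exp\tbrac{\norm{\alicestring}_1} = \fbrac{1-2\ptpsep}\ptpprob$ when $\alicestring \sim \ptpdone$, and $\Exp\tbrac{\norm{\alicestring}_1} = \fbrac{1+2\ptpsep}\ptpprob$ when $\alicestring \sim \ptpdtwo$. For each of the three target thresholds I would then rewrite it as $(1 \pm \epsilon)$ times the appropriate mean, verify that the resulting $\epsilon$ lies in the range required by Lemma~\ref{Lemma: Multiplicative Chernoff Bound} (in particular $\epsilon < 1$ for the lower-tail form), and finally absorb every constant into the hypothesis $\ptpprob \geq 12\log\fbrac{1/\confidence}/\ptpsep^2$.

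Concretely, for the first inequality (under $\ptpdone$) I would write $\fbrac{1-\ptpsep}\ptpprob = \fbrac{1+\epsilon_1}\fbrac{1-2\ptpsep}\ptpprob$ with $\epsilon_1 = \ptpsep/\fbrac{1-2\ptpsep}$; since $\ptpsep \in (0,1/4)$ we have $1-2\ptpsep > 1/2$, hence $\ptpsep \leq \epsilon_1 < 2\ptpsep$ and $2 + \epsilon_1 < 3$, and the upper-tail bound gives $\exp\fbrac{-\epsilon_1^2\fbrac{1-2\ptpsep}\ptpprob/(2+\epsilon_1)} \leq \exp\fbrac{-\ptpsep^2\ptpprob/6} \leq \confidence$. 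For the second inequality (under $\ptpdtwo$) I would write $\fbrac{1+\ptpsep}\ptpprob = \fbrac{1-\epsilon_2}\fbrac{1+2\ptpsep}\ptpprob$ with $\epsilon_2 = \ptpsep/\fbrac{1+2\ptpsep} \in \fbrac{2\ptpsep/3, \ptpsep} \subset (0,1)$ (using $1 < 1+2\ptpsep < 3/2$), and the lower-tail bound gives $\exp\fbrac{-\epsilon_2^2\fbrac{1+2\ptpsep}\ptpprob/3} \leq \exp\fbrac{-4\ptpsep^2\ptpprob/27} \leq \confidence$.

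For the third inequality I would first note that the statement is distribution-free, and that under the obvious monotone coupling (deciding each $\alicestring_i$ by the same uniform threshold in both models) $\norm{\alicestring}_1$ is stochastically larger under $\ptpdtwo$ than under $\ptpdone$; hence it suffices to prove the bound when $\alicestring \sim \ptpdone$. There I would write $\fbrac{1-4\ptpsep}\ptpprob = \fbrac{1-\epsilon_3}\fbrac{1-2\ptpsep}\ptpprob$ with $\epsilon_3 = 2\ptpsep/\fbrac{1-2\ptpsep} \in \fbrac{2\ptpsep, 4\ptpsep} \subset (0,1)$ --- this is exactly where $\ptpsep < 1/4$ is needed, to keep $\epsilon_3 < 1$ --- and the lower-tail bound gives $\exp\fbrac{-\epsilon_3^2\fbrac{1-2\ptpsep}\ptpprob/3} \leq \exp\fbrac{-2\ptpsep^2\ptpprob/3} \leq \confidence$. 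I expect the only friction to be the bookkeeping: keeping each $\epsilon$ inside the Chernoff-admissible range and checking that the constant $12$ in the lower bound on $\ptpprob$ dominates the constants $6$, $27/4$, and $3/2$ that surface above. There is comfortable slack in all three estimates, so nothing delicate is involved.
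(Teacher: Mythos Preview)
Your proposal is correct and follows the same Chernoff-bound route as the paper. The only difference is organizational: the paper imports the first two inequalities verbatim from \cite{DBLP:conf/approx/AssadiN22} (stated there as Lemma~\ref{Lemma: PTP Deviation Conditional Bound}) and then proves the third one directly under $\ptpdone$ (Lemma~\ref{Lemma: PTP Deviation Lower Bound}), whereas you prove all three from scratch. Your version is actually a little more careful on one point: the third bound in the statement is distribution-free, and you explicitly reduce to the $\ptpdone$ case via stochastic dominance, while the paper only writes out the $\ptpdone$ computation and leaves the $\ptpdtwo$ case implicit (it is of course immediate from the second inequality since $(1-4\ptpsep)\ptpprob < (1+\ptpsep)\ptpprob$). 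Your constant bookkeeping is fine and the slack against $\ptpprob \ge 12\log(1/\confidence)/\ptpsep^2$ is ample in all three cases.
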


We combine two results to obtain the result. First we state a lemma due to~\citep{DBLP:conf/approx/AssadiN22}:

\begin{lemma}\label{Lemma: PTP Deviation Conditional Bound}
    In $\defptp$, for any $\ptpsep \in (0,1/4)$, $\confidence \in (0,1/100)$, and integers $\stringlength \geq 1$, $\log(1/\confidence)\cdot12/\ptpsep^2\leq \ptpprob \leq \stringlength/6$,
    \begin{align*}
        \Pr\tbrac{\norm{\alicestring}_1 > \fbrac{1-\ptpsep} \cdot \ptpprob~|~\alicestring \sim \ptpdone} \leq \confidence\\
        \Pr\tbrac{\norm{\alicestring}_1 < \fbrac{1+\ptpsep} \cdot \ptpprob~|~\alicestring \sim \ptpdtwo} \leq \confidence
    \end{align*}
    where $\norm{\alicestring}_1$ denotes the number of $1$'s in the string $\alicestring$.
\end{lemma}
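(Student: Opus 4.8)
The plan is to reduce both claims to a direct application of the multiplicative Chernoff bound (Lemma~\ref{Lemma: Multiplicative Chernoff Bound}), since $\norm{\alicestring}_1=\sum_{i\in[\stringlength]}\alicestring_i$ is a sum of $\stringlength$ i.i.d.\ Bernoulli variables under each of $\ptpdone$ and $\ptpdtwo$; essentially all the work is the bookkeeping needed to rewrite the stated thresholds as multiplicative deviations from the appropriate mean.

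For the first bound I would set $\alicestring\sim\ptpdone$, so that $\mu_1\defeq\Exp[\norm{\alicestring}_1]=(1-2\ptpsep)\ptpprob$, and write the threshold as $(1-\ptpsep)\ptpprob=(1+\theta)\mu_1$ with $\theta=\ptpsep/(1-2\ptpsep)$; since $\ptpsep\in(0,1/4)$ this gives $\ptpsep\le\theta\le 2\ptpsep<1$. The upper-tail estimate of Lemma~\ref{Lemma: Multiplicative Chernoff Bound} then bounds $\Pr[\norm{\alicestring}_1>(1-\ptpsep)\ptpprob]$ by $\exp(-\theta^2\mu_1/(2+\theta))$, and the crude bounds $\theta\ge\ptpsep$, $\mu_1=(1-2\ptpsep)\ptpprob\ge\ptpprob/2$ and $2+\theta\le 3$ reduce this to $\exp(-\ptpsep^2\ptpprob/6)$; the hypothesis $\ptpprob\ge 12\log(1/\confidence)/\ptpsep^2$ makes the exponent at least $2\log(1/\confidence)$, so the probability is at most $\confidence$.

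The second bound is the mirror image, via the lower-tail estimate. Taking $\alicestring\sim\ptpdtwo$ gives $\mu_2\defeq\Exp[\norm{\alicestring}_1]=(1+2\ptpsep)\ptpprob$ and $(1+\ptpsep)\ptpprob=(1-\theta')\mu_2$ with $\theta'=\ptpsep/(1+2\ptpsep)\in(0,\ptpsep)$, so Lemma~\ref{Lemma: Multiplicative Chernoff Bound} bounds $\Pr[\norm{\alicestring}_1<(1+\ptpsep)\ptpprob]$ by $\exp(-(\theta')^2\mu_2/3)$; since $(\theta')^2\mu_2=\ptpsep^2\ptpprob/(1+2\ptpsep)\ge 2\ptpsep^2\ptpprob/3$, the exponent is again a constant times $\log(1/\confidence)$ and the bound is $\confidence$.

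I do not expect a genuine obstacle here: the only point to check is that the constant $12$ in the hypothesis on $\ptpprob$ is large enough to absorb the slack introduced both by passing from the additive separation $2\ptpsep\ptpprob/\stringlength$ between the two bit-probabilities to a multiplicative Chernoff deviation and by the crude bounds on $2+\theta$, $\mu_1$ and $\mu_2$ above --- and it is, with room to spare. (Alternatively, since this is precisely the statement established in~\citep{DBLP:conf/approx/AssadiN22}, one may simply invoke it.)
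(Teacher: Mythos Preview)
Your Chernoff computation is correct, and in fact the paper does not prove this lemma at all: it simply attributes it to~\cite{DBLP:conf/approx/AssadiN22} and moves on, exactly the alternative you mention in your final parenthetical. So there is no divergence to speak of; your direct argument just fills in what the paper leaves as a citation.
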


We introduce another lemma to lower bound the $\ell_1$ norm of $\alicestring$ for $\ptpdone$.

\begin{lemma}\label{Lemma: PTP Deviation Lower Bound}
    In $\defptp$, for any $\ptpsep \in (0,1/4)$, $\confidence \in (0,1/100)$, and integers $\stringlength \geq 1$, $\log(1/\confidence)\cdot12/\ptpsep^2\leq \ptpprob \leq \stringlength/6$, $\Pr\tbrac{\norm{\alicestring}_1 < \fbrac{1-4\ptpsep} \cdot \ptpprob~|~\alicestring \sim \ptpdone} \leq \confidence$
    \remove{
    \begin{align*}
        \Pr\tbrac{\norm{\alicestring}_1 < \fbrac{1-4\ptpsep} \cdot \ptpprob~|~\alicestring \sim \ptpdone} \leq \confidence
    \end{align*}
    }
\end{lemma}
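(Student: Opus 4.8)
The plan is to observe that under $\ptpdone$ the quantity $\norm{\alicestring}_1 = \sum_{i \in \tbrac{\stringlength}} \alicestring_i$ is a sum of $\stringlength$ i.i.d.\ Bernoulli random variables, each equal to $1$ with probability $p = (1-2\ptpsep)\ptpprob/\stringlength$, so $\Exp\tbrac{\norm{\alicestring}_1} = (1-2\ptpsep)\ptpprob$. Since $(1-4\ptpsep)\ptpprob < (1-2\ptpsep)\ptpprob$, the event in question is a genuine lower-tail event and the natural tool is the multiplicative Chernoff bound of Lemma~\ref{Lemma: Multiplicative Chernoff Bound}.

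Concretely, I would write $(1-4\ptpsep)\ptpprob = (1-\beta)\Exp\tbrac{\norm{\alicestring}_1}$ with $\beta = \frac{2\ptpsep}{1-2\ptpsep}$. Because $\ptpsep < 1/4$ we have $1-2\ptpsep > 1/2$, hence $0 < \beta < 4\ptpsep < 1$, so the lower-tail Chernoff bound genuinely applies and gives
\[
\Pr\tbrac{\norm{\alicestring}_1 \leq (1-\beta)\Exp\tbrac{\norm{\alicestring}_1}} \leq \exp\fbrac{-\frac{\beta^2 \Exp\tbrac{\norm{\alicestring}_1}}{3}} .
\]
The exponent simplifies via $\beta^2 \Exp\tbrac{\norm{\alicestring}_1} = \frac{4\ptpsep^2}{(1-2\ptpsep)^2}(1-2\ptpsep)\ptpprob = \frac{4\ptpsep^2\ptpprob}{1-2\ptpsep} \geq 4\ptpsep^2\ptpprob$, so the probability is at most $\exp\fbrac{-\frac{4}{3}\ptpsep^2\ptpprob}$.

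Finally I would invoke the hypothesis $\ptpprob \geq 12\log(1/\confidence)/\ptpsep^2$, which yields $\frac{4}{3}\ptpsep^2\ptpprob \geq 16\log(1/\confidence) \geq \log(1/\confidence)$, and therefore $\exp\fbrac{-\frac{4}{3}\ptpsep^2\ptpprob} \leq \confidence$, completing the proof. I do not anticipate a real obstacle: the only points requiring care are verifying $\beta \in (0,1)$ so the Chernoff bound is legitimately applicable (this is exactly where $\ptpsep < 1/4$ is used) and keeping the constants consistent with those already fixed in Lemma~\ref{Lemma: PTP Deviation Conditional Bound}, so that the three displayed inequalities of Lemma~\ref{Lemma: PTP Deviation Bound} all follow from the same parameter regime.
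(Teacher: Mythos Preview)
Your proposal is correct and follows essentially the same route as the paper: compute $\Exp\tbrac{\norm{\alicestring}_1}=(1-2\ptpsep)\ptpprob$, rewrite $(1-4\ptpsep)\ptpprob=(1-\beta)\Exp\tbrac{\norm{\alicestring}_1}$ with $\beta=\frac{2\ptpsep}{1-2\ptpsep}$, apply the lower-tail multiplicative Chernoff bound, and finish using $\ptpprob\geq 12\log(1/\confidence)/\ptpsep^2$. If anything, your version is slightly more careful in explicitly verifying $\beta\in(0,1)$ before invoking Lemma~\ref{Lemma: Multiplicative Chernoff Bound}.
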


\begin{proof}
    By definition of $\ptpdone$ and linearity of expectations, we have:
    \begin{align*}
        \Exp{\tbrac{\norm{\alicestring}_1}} = \sum_{i \in [\stringlength]} \Exp\tbrac{\alicestring_i} = \fbrac{1-2\ptpsep}\ptpprob
    \end{align*}
    Now, given the independence of each $\alicestring_i$, we can use Chernoff bound (Lemma~\ref{Lemma: Multiplicative Chernoff Bound}) to obtain:
    \begin{align*}
        & \Pr\tbrac{\norm{\alicestring}_1 < \fbrac{1-4\ptpsep}\ptpprob} \\
        =&\Pr\tbrac{\norm{\alicestring}_1 - \Exp{\tbrac{\norm{\alicestring}_1}} > \frac{2\ptpsep}{1-2\ptpsep}\Exp{\tbrac{\norm{\alicestring}_1}}}\\
        \leq& \exp{\fbrac{-\frac{4\ptpsep^2\Exp{\tbrac{\norm{\alicestring}_1}}}{3\fbrac{1-2\ptpsep}^2}}}  & \text{(by Chernoff bound)}\\
        \leq& \exp{\fbrac{-\frac{48\ptpsep^2\log\fbrac{1/\confidence}}{3\ptpsep^2\fbrac{1-2\ptpsep}}}} &(\Exp{\tbrac{\norm{\alicestring}_1}} = \fbrac{1-2\ptpsep}\ptpprob, \ptpprob \geq \log(1/\confidence)\cdot12/\ptpsep^2)\\
        \leq& \confidence &(\ptpsep < 1/4)
    \end{align*}
\end{proof}

Combining lemmas~\ref{Lemma: PTP Deviation Conditional Bound} and~\ref{Lemma: PTP Deviation Lower Bound}, we obtain the desired result.

\begin{theorem}[Lower Bound - Formal]\label{Theorem: Lower Bound on Triangle Counting through PTP}
Any algorithm that solves triangle estimation problem with $\approxerror \leq \frac{1}{4}$ using \degreeq{}, \neighbourq{}, \edgeexistsq{} and \randedgeq{} requires $\Omega\fbrac{\frac{\edgecount\arboricity\log\fbrac{1/\confidence}}{\approxerror^2\numtriangle}}$ queries.    
\end{theorem}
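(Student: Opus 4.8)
The plan is to prove this by a reduction from the Popcount Thresholding Problem $\defptp$, invoking the query lower bound of Lemma~\ref{Lemma: PTP Query Lower Bound}. Given a string $\alicestring\in\{0,1\}^{\stringlength}$, I will build a graph $\rqgraph$ on a common vertex set as the vertex-disjoint union of $\stringlength$ gadgets $H_1,\dots,H_{\stringlength}$, one per bit, each occupying its own $\bigtheta{\arboricity}$ reserved vertices. If $\alicestring_i=1$, gadget $H_i$ is a clique on $\bigtheta{\arboricity}$ vertices, contributing $\bigtheta{\arboricity^2}$ edges, $\bigtheta{\arboricity^3}$ triangles, and arboricity $\bigtheta{\arboricity}$. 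If $\alicestring_i=0$, gadget $H_i$ is a triangle-free bipartite graph on its reserved vertices carrying \emph{exactly the same number of edges} (e.g.\ a complete bipartite graph with the appropriate number of edges deleted), again of arboricity $\bigtheta{\arboricity}$. Then $\rqgraph$ has a number of edges $\edgecount=\bigtheta{\stringlength\arboricity^2}$ that does not depend on $\alicestring$, arboricity $\bigtheta{\arboricity}$, and $\numtriangle=\bigtheta{\norm{\alicestring}_1\arboricity^3}$ triangles, so that
\[
\frac{\edgecount\arboricity}{\numtriangle}=\bigtheta{\frac{\stringlength}{\norm{\alicestring}_1}}.
\]
I fix the separation $\ptpsep=2\approxerror$ (assuming $\approxerror$ is below a small absolute constant so that $\ptpsep<1/4$; the rest of the range $\approxerror\le 1/4$ is handled by taking $\ptpsep$ a suitable constant in $(\approxerror,1/4)$), and I let the nominal popcount $\ptpprob$ correspond to the threshold triangle count $\numtriangle_0=\bigtheta{\ptpprob\arboricity^3}$.

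Next I would show that every query to $\rqgraph$ is answerable using $\bigo{1}$ queries to $\alicestring$, so an algorithm making $\querycount$ queries on $\rqgraph$ induces a solver for $\defptp$ making $\bigo{\querycount}$ string queries. A $\degreeq$, $\neighbourq$, or $\edgeexistsq$ query touches at most two vertices, each lying in one gadget, and its answer is determined once the at most two corresponding bits are read; a $\randedgeq$ query is answered by sampling a uniform gadget index $i\in[\stringlength]$ with internal randomness, reading $\alicestring_i$, and returning a uniform edge of $H_i$ — this is \emph{exactly} uniform over $\edgeset(\rqgraph)$ precisely because all $\stringlength$ gadgets carry the same number of edges (a rejection step handles the case where the two edge counts differ only by a constant factor). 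The solver builds $\rqgraph$ lazily, caching read bits.

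Then I would transfer correctness. By the deviation bounds of Lemma~\ref{Lemma: PTP Deviation Conditional Bound}, with probability at least $1-\confidence$ we have $\norm{\alicestring}_1\le(1-2\approxerror)\ptpprob$ when $\alicestring\sim\ptpdone$ and $\norm{\alicestring}_1\ge(1+2\approxerror)\ptpprob$ when $\alicestring\sim\ptpdtwo$, hence $\numtriangle\le(1-2\approxerror)\numtriangle_0$ in the first case and $\numtriangle\ge(1+2\approxerror)\numtriangle_0$ in the second. If the estimation algorithm returns $\emptriangle\in[(1-\approxerror)\numtriangle,(1+\approxerror)\numtriangle]$ with probability $\ge 1-\confidence$, then, using $\approxerror<1/2$, we get $\emptriangle\le(1+\approxerror)(1-2\approxerror)\numtriangle_0<\numtriangle_0$ in the first case and $\emptriangle\ge(1-\approxerror)(1+2\approxerror)\numtriangle_0>\numtriangle_0$ in the second, with probability $\ge 1-2\confidence$; so comparing $\emptriangle$ with the known threshold $\numtriangle_0$ decides $\defptp$ with probability $1-\bigo{\confidence}$. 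Combining with Lemma~\ref{Lemma: PTP Query Lower Bound} — whose preconditions $12\log(1/\confidence)/\ptpsep^2\le\ptpprob\le\stringlength/6$ translate into the natural window $\bigomega{\arboricity^3\log(1/\confidence)/\approxerror^2}\le\numtriangle\le\bigo{\edgecount\arboricity}$ (the right side being Lemma~\ref{lemma: arboricity triangle bound}) — yields $\bigo{\querycount}\ge\rqcomplexity{\bigtheta{\confidence}}{\defptp}=\bigomega{\stringlength\log(1/\confidence)/(\ptpsep^2\ptpprob)}=\bigomega{\stringlength\log(1/\confidence)/(\approxerror^2\ptpprob)}$, and substituting $\stringlength/\ptpprob=\bigtheta{\edgecount\arboricity/\numtriangle}$ gives $\querycount=\bigomega{\edgecount\arboricity\log(1/\confidence)/(\approxerror^2\numtriangle)}$.

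The step I expect to be the main obstacle is the gadget design: one must produce a single gadget that simultaneously (i) realizes the ratio $(\text{triangles})/(\text{edges})=\bigtheta{\arboricity}$ at arboricity $\bigtheta{\arboricity}$ — this is exactly what turns the $\stringlength/\ptpprob$ from the PTP bound into the target $\edgecount\arboricity/\numtriangle$, and it essentially forces the gadget to be a clique on $\bigtheta{\arboricity}$ vertices, since a graph with $g$ edges has only $\bigo{g^{3/2}}$ triangles so $g=\bigomega{\arboricity^2}$ edges per gadget are necessary — and (ii) admits a triangle-free "off" version with the same edge count so that $\randedgeq$ is simulated uniformly; this comes bundled with the bookkeeping needed to certify that the disjoint union really has arboricity $\bigtheta{\arboricity}$ and that the chosen $\stringlength,\ptpprob$ stay inside the parameter window of Lemma~\ref{Lemma: PTP Query Lower Bound}.
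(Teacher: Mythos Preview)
Your reduction from $\defptp$ is correct and reaches the same bound, but it is \emph{structurally different} from the paper's. The paper does not use disjoint per-bit gadgets. Instead it builds a single graph with a shared ``hub'' set $\trianglemakerset$ of size $\arboricity^*$: every vertex of two blocks $\vertexsetA,\vertexsetB$ (each of size $\stringlength/\arboricity^*$) is joined to all of $\trianglemakerset$, and each bit $\alicestring_{i,j}$ decides only \emph{two} edges --- $(\vertexA_i,\vertexB_{i+j})$ and $(\vertexAhat_i,\vertexBhat_{i+j})$ if $\alicestring_{i,j}=1$, or $(\vertexA_i,\vertexAhat_{i+j})$ and $(\vertexB_i,\vertexBhat_{i+j})$ if $\alicestring_{i,j}=0$. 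A single ``on'' bit thus contributes one $\vertexsetA$--$\vertexsetB$ edge, which forms exactly $\arboricity^*$ triangles through the hub $\trianglemakerset$; so $\numtriangle=\norm{\alicestring}_1\arboricity^*$ and $\edgecount=4\stringlength$. In your construction each ``on'' bit carries an entire clique with $\bigtheta{\arboricity^2}$ edges and $\bigtheta{\arboricity^3}$ triangles. Both designs give $\edgecount\arboricity/\numtriangle=\bigtheta{\stringlength/\ptpprob}$, so the final bound is the same; the paper's hub construction is more economical (degrees are known without reading any bit, each graph query costs at most one string query), while yours is more modular.

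Two points to tighten. First, the parameter window you obtain is strictly narrower than the paper's: your $\ptpprob=\bigtheta{\numtriangle/\arboricity^3}$ forces $\numtriangle=\bigomega{\arboricity^3\log(1/\confidence)/\approxerror^2}$, whereas the hub construction has $\ptpprob=\numtriangle/\arboricity$ and hence only needs $\numtriangle=\bigomega{\arboricity\log(1/\confidence)/\approxerror^2}$; for large $\arboricity$ your range of admissible $\numtriangle$ shrinks substantially. Second, your choice $\ptpsep=2\approxerror$ already forces $\approxerror<1/8$, and the sentence ``the rest of the range $\approxerror\le 1/4$ is handled by taking $\ptpsep$ a suitable constant in $(\approxerror,1/4)$'' does not actually work: for $\approxerror=1/4$ you would need $(1-\approxerror)(1+\ptpsep)>1$, i.e.\ $\ptpsep>1/3$, outside the hypothesis of Lemma~\ref{Lemma: PTP Query Lower Bound}. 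The fix the paper uses is to set $\ptpsep=\approxerror$ and put the decision threshold at $(1-\approxerror^2)\,\ptpprob\arboricity^*$ rather than at $\ptpprob\arboricity^*$; then $(1+\approxerror)(1-\approxerror)=(1-\approxerror)(1+\approxerror)=1-\approxerror^2$ separates the two cases for every $\approxerror<1/4$.
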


\begin{proof}


We want to show that given a graph $\graph$, it takes $\Omega\fbrac{\frac{\edgecount\arboricity\log\fbrac{1/\confidence}}{\approxerror^2\numtriangle}}$ queries to obtain an estimate $\esttriangle$ such that $\esttriangle \in (1\pm \approxerror)\numtriangle$ with probability at least $1-\confidence$. For contradiction, let us assume there exists an algorithm $\trianglealgo$ that, for some arboricity $\lbarboricity$, computes an estimate $\esttriangle$ such that $\esttriangle \in (1\pm \approxerror)\numtriangle$ with probability at least $1-\confidence$ using $\smallo{\frac{\edgecount\lbarboricity\log\fbrac{1/\confidence}}{\approxerror^2\numtriangle}}$ queries.

    To show contradiction, we design an algorithm that solves $\defptp$ in $<\frac{\stringlength\log\fbrac{1/4\confidence}}{24\ptpsep^2\ptpprob}$ queries using $\trianglealgo$.\remove{\gopi{Shall we state the value of $M,k,$ and $\gamma$ here w.r.t. $m,T,\alpha^*$ and $\varepsilon$? May be $M=4m$, $T=k\alpha^*$, and $\gamma=\varepsilon$.}} Given an instance of $\alicestring$ generated from $\defptp$, we consider a graph $\rqgraph = (\vertexset_x, \edgeset_x)$ defined as: 
    \paragraph*{The vertex set $\vertexset_x$:} The vertex set $\vertexset_x$ consists of $5$ sets of disjoint and independent set of vertices $\vertexsetA, \vertexsetAhat, \vertexsetB, \vertexsetBhat$, and $\trianglemakerset$, with $\size{\vertexsetA} = \size{\vertexsetAhat} = \size{\vertexsetB} = \size{\vertexsetBhat} = \stringlength/\lbarboricity$, and $\size{\trianglemakerset} = \lbarboricity$. Vertices of $\vertexsetA$, $\vertexsetAhat$, $\vertexsetB$ and $\vertexsetBhat$ will be named as $\vertexA$, $\vertexAhat$, $\vertexB$ and $\vertexBhat$, respectively. 
    \paragraph*{The edge set $\edgeset_x$:} There exists an edge between every vertex in $\vertexsetA \cup \vertexsetB$ to every vertex in $\trianglemakerset$. Observe that this makes sure that the arboricity of $\rqgraph$ is $\lbarboricity$. If a bit of $\alicestring$ is $1$ (resp. $0$), there will be an edge from a vertex in $\vertexsetA$ (resp. $\vertexsetA$) to a vertex in $\vertexsetB$ (resp. $\vertexsetAhat$), and an edge from a vertex in $\vertexsetAhat$ (resp. $\vertexsetB$) to a vertex in $\vertexsetBhat$ (resp. $\vertexsetBhat$).
        
        \remove{There can be \blue{at most} $\left( \frac{M}{\lbarboricity} \right)^2$ edges between any pair of sets like $(\vertexsetA, \vertexsetB)$, or $(\vertexsetA, \vertexsetAhat)$, or $(\vertexsetAhat, \vertexsetBhat)$, or $(\vertexsetB, \vertexsetBhat)$. \red{For bits in $\alicestring$ to have a one-to-one correspondence to the above edges, we need $\left( \frac{M}{\lbarboricity} \right)^2 = M$, i.e., $\lbarboricity = \sqrt{\stringlength}$.}}
        
        We index the problem instance $\alicestring$ of length $\stringlength$ as $\tbrac{\frac{\stringlength}{\lbarboricity}}\times\tbrac{\lbarboricity}$, denoting  $\alicestring_{\fbrac{i-1}\times\lbarboricity+j}$ as $\alicestring_{i,j}$, where $i \in \tbrac{\frac{\stringlength}{\lbarboricity}}$ and $j \in \tbrac{\lbarboricity}$. We add an edge $\fbrac{\vertexA_i,\vertexAhat_{i+j}}$, and an edge $\fbrac{\vertexB_i,\vertexBhat_{i+j}}$ if $\alicestring_{i,j} = 0$. We add an edge $\fbrac{\vertexA_i,\vertexB_{i+j}}$, and an edge $\fbrac{\vertexAhat_i,\vertexBhat_{i+j}}$ if $\alicestring_{i,j} = 1$. All $(i+j)$ additions here are modulo $\tbrac{\frac{\stringlength}{\lbarboricity}}$.
        \remove{
    \begin{itemize}
        \item[$\vertexset_x$:] The vertex set $\vertexset_x$ consists of $5$ sets of disjoint and independent set of vertices $\vertexsetA, \vertexsetAhat, \vertexsetB, \vertexsetBhat$, and $\trianglemakerset$, with $\size{\vertexsetA} = \size{\vertexsetAhat} = \size{\vertexsetB} = \size{\vertexsetBhat} = \stringlength/\lbarboricity$, and $\size{\trianglemakerset} = \lbarboricity$. Vertices of $\vertexsetA$, $\vertexsetAhat$, $\vertexsetB$ and $\vertexsetBhat$ will be named as $\vertexA$, $\vertexAhat$, $\vertexB$ and $\vertexBhat$, respectively. 
        \item[$\edgeset_x$:] There exists an edge between every vertex in $\vertexsetA \cup \vertexsetB$ to every vertex in $\trianglemakerset$. Observe that this makes sure that the arboricity of $\rqgraph$ is $\lbarboricity$. If the bit is $1$ (\blue{respectively }$0$), \complain{there will be an edge from a vertex in $\vertexsetA$ (\blue{respectively }$\vertexsetA$) to a vertex in $\vertexsetB$ (\blue{respectively }$\vertexsetAhat$), and an edge from a vertex in $\vertexsetAhat$ (\blue{respectively }$\vertexsetB$) to a vertex in $\vertexsetBhat$ (\blue{respectively }$\vertexsetBhat$).}\blue{}
        \remove{There can be \blue{at most} $\left( \frac{M}{\lbarboricity} \right)^2$ edges between any pair of sets like $(\vertexsetA, \vertexsetB)$, or $(\vertexsetA, \vertexsetAhat)$, or $(\vertexsetAhat, \vertexsetBhat)$, or $(\vertexsetB, \vertexsetBhat)$. \red{For bits in $\alicestring$ to have a one-to-one correspondence to the above edges, we need $\left( \frac{M}{\lbarboricity} \right)^2 = M$, i.e., $\lbarboricity = \sqrt{\stringlength}$.}}
        We index the problem instance $\alicestring$ of length $\stringlength$ as $\tbrac{\frac{\stringlength}{\lbarboricity}}\times\tbrac{\lbarboricity}$, denoting  $\alicestring_{\fbrac{i-1}\times\lbarboricity+j}$ as $\alicestring_{i,j}$, where $i \in \tbrac{\frac{\stringlength}{\lbarboricity}}$ and $j \in \tbrac{\lbarboricity}$. We add an edge $\fbrac{\vertexA_i,\vertexAhat_{i+j}}$, and an edge $\fbrac{\vertexB_i,\vertexBhat_{i+j}}$ if $\alicestring_{i,j} = 0$. We add an edge $\fbrac{\vertexA_i,\vertexB_{i+j}}$, and an edge $\fbrac{\vertexAhat_i,\vertexBhat_{i+j}}$ if $\alicestring_{i,j} = 1$. All $(i+j)$ additions here are modulo $\tbrac{\frac{\stringlength}{\lbarboricity}}$.
    \end{itemize}
      }

    Observe that each edge between a vertex in $\vertexsetA$ and $\vertexsetB$ adds $\lbarboricity$ triangles. Hence, the number of triangles in $\rqgraph$ is exactly $\norm{\alicestring}_1\lbarboricity$. Also note that we have added $2\stringlength$ edges between $\vertexsetA\cup\vertexsetB$ and $\trianglemakerset$, and further $2\stringlength$ edges according to the entries of $\alicestring$, 2 for each bit $\alicestring_i$. Hence, we have $\lbedgecount = 4\stringlength$, and fix $\approxerror = \ptpsep$.

We now describe $\trianglealgoptp$ an algorithm that uses $\trianglealgo$ to solve $\defptp$ using $<\frac{\stringlength\log\fbrac{1/4\confidence}}{24\ptpsep^2\ptpprob}$ queries. Given a string $\alicestring$, we generate $\rqgraph$ and estimate the number of triangles $\emptriangle$ by calling $\trianglealgo$ using $\frac{\edgecount\log\fbrac{1/4\confidence}}{200\approxerror^2\ptpprob} = \frac{\stringlength\log(1/4\confidence)}{50\ptpsep^2\ptpprob}$ queries. If $\emptriangle < \fbrac{1-\ptpsep^2}\ptpprob\lbarboricity$, it outputs $\ptpdone$; otherwise, it outputs $\ptpdtwo$. Also, observe that by Lemma~\ref{Lemma: PTP Deviation Bound}, we have  $\numtriangle = \lbarboricity\norm{\alicestring}_1 \geq \lbarboricity(1-4\ptpsep)\ptpprob \geq \frac{\lbarboricity\ptpprob}{2}$ with probability at least $1-\confidence$, where the last inequality follows from the fact that $\ptpsep = \approxerror \leq \frac{1}{4}$. Hence, we have $\frac{\edgecount\log\fbrac{1/4\confidence}}{200\approxerror^2\ptpprob} \geq \frac{\edgecount\lbarboricity\log\fbrac{1/4\confidence}}{200\approxerror^2\numtriangle}$. Thus the algorithm $\trianglealgo$ is allowed to make $\smallo{\frac{\edgecount\lbarboricity\log\fbrac{1/\confidence}}{\approxerror^2\numtriangle}}$ queries under the given query bound of $\frac{\edgecount\lbarboricity\log\fbrac{1/4\confidence}}{200\approxerror^2\numtriangle}$.

By Lemma~\ref{Lemma: PTP Deviation Bound}, to distinguish between whether $\alicestring$ is generated from $\ptpdone$ or $\ptpdtwo$ with probability $1 - \confidence$, it suffices to show that we can use $\trianglealgo$ to distinguish between $\norm{\alicestring}_1 > \fbrac{1+\ptpsep}\ptpprob$ instance and $\norm{\alicestring}_1 < \fbrac{1-\ptpsep}\ptpprob$ instance with probability $1-\frac{\confidence}{2}$. We now show that $\trianglealgoptp$ outputs $\ptpdtwo$ with probability $1-\frac{\confidence}{2}$ given $\norm{\alicestring}_1 > \fbrac{1+\ptpsep}\ptpprob$, and outputs $\ptpdone$ with probability $1-\frac{\confidence}{2}$ given $\norm{\alicestring}_1 < \fbrac{1-\ptpsep}\ptpprob$. We consider the two cases separately: 

    {\bf Case I }($\norm{\alicestring}_1 > \fbrac{1+\ptpsep}\ptpprob$): Our construction ensures that the number of triangles in $\rqgraph$ is $\numtriangle = \norm{\alicestring}_1\lbarboricity > \fbrac{1+\ptpsep}\ptpprob\lbarboricity = \fbrac{1+\approxerror}\ptpprob\lbarboricity$. Additionally, by our assumption on $\trianglealgo$, it outputs an estimate $\emptriangle \geq (1-\approxerror)\numtriangle$ with probability $1 - \frac{\confidence}{2}$. Thus, we have $\esttriangle \geq \fbrac{1 - \approxerror^2}\ptpprob\lbarboricity$ with probability $1-\frac{\confidence}{2}$.

    {\bf Case II }($\norm{\alicestring}_1 < \fbrac{1-\ptpsep}\ptpprob$): Our construction ensures that the number of triangles in $\rqgraph$ is $\numtriangle = \norm{\alicestring}_1\lbarboricity < \fbrac{1-\ptpsep}\ptpprob\lbarboricity = \fbrac{1-\approxerror}\ptpprob\lbarboricity$. Additionally, by our assumption on $\trianglealgo$, it outputs an estimate $\emptriangle \leq (1+\approxerror)\numtriangle$ with probability $1 - \frac{\confidence}{2}$. Thus, we have $\esttriangle \leq \fbrac{1 - \approxerror^2}\ptpprob\lbarboricity$ with probability $1-\frac{\confidence}{2}$.

    Now we state how to simulate the required queries:
    \begin{itemize}
        \item \textbf{$\degreeq{\fbrac{\vertex}}$:} For a vertex $\vertex \in \vertexsetA\cup\vertexsetB\cup\vertexsetAhat\cup\vertexsetBhat$, return $\lbarboricity$, and for a vertex $\vertex \in \trianglemakerset$, return $\frac{2\stringlength}{\lbarboricity}$. Hence, for \degreeq{} queries, no queries are made to the string.
        

        \item \textbf{$\neighbourq{\fbrac{\vertex,i}}$:} For a vertex $\vertex \in \vertexsetA\cup\vertexsetB\cup\vertexsetAhat\cup\vertexsetBhat$, w.l.o.g assume $\vertex$ to be $\vertexA_j \in \vertexsetA$. Return $\vertexAhat_{i+j}$ if $\alicestring_{i,j} = 0$, and $\vertexB_{i+j}$ otherwise. For a vertex $\vertex \in \trianglemakerset$, if $i \leq \frac{\edgecount}{\lbarboricity}$, return $\vertexA_{i}$, else return $\vertexB_{i = \frac{\edgecount}{\lbarboricity}}$. Hence, for \neighbourq{} queries, a single query is made to the string.
        
        \item \textbf{\edgeexistsq$\fbrac{\altvertex,\vertex}$:} Given a vertex pair $\fbrac{\vertex,\altvertex}$, there can be 3 cases: (a) If the vertex $\vertex \in \vertexsetA\cup\vertexsetB$, w.l.o.g assume $\vertex$ to be $\vertexA_i \in \vertexsetA$, if $\altvertex \in \trianglemakerset$, return $1$, else if $\altvertex = \vertexB_j \in \vertexsetB$, return $1$ if $\alicestring_{i,j-i} = 1$, else if $\altvertex = \vertexAhat_j \in \vertexsetAhat$, return $1$ if $\alicestring_{i,j-i} = 0$, else return $0$. (b) If a vertex $\vertex \in \vertexsetAhat\cup\vertexsetBhat$, w.l.o.g assume the $\vertex$ to be $\vertexAhat_i \in \vertexsetAhat$, if $\altvertex = \vertexA_j \in \vertexsetA$, return $1$ if $\alicestring_{j,i-j} = 0$, else if $\altvertex = \vertexBhat_j \in \vertexsetBhat$, return $1$ if $\alicestring_{j,i-j} = 1$, else return $0$. If $\vertex \in \trianglemakerset$, return $1$ if $\altvertex \in \vertexsetA\cup\vertexsetB$, else return $0$. (c) If a vertex $\vertex \in \trianglemakerset$, return $1$ if $\altvertex \in \vertexsetA \cup \vertexsetB$, return $0$ otherwise. Hence, for \edgeexistsq{} queries, a single query is made to the string.

        \item \textbf{\randedgeq{}:} Sample a vertex $\vertex$ with probability $\frac{\degree{\vertex}}{4\stringlength}$. Choose $i \in \degree{\fbrac{\vertex}}$ uniformly at random, query $\altvertex = \neighbourq{\fbrac{\vertex,i}}$ and return $\fbrac{\vertex,\altvertex}$. Hence, for \randedgeq{} queries, a single query is made to the string for the $\neighbourq{}$ query.
    \end{itemize}
\end{proof}

The following corollary is a direct implication from Theorem~\ref{Theorem: Lower Bound on Triangle Counting through PTP} due to the fact that we consider a weaker model (without access to \randedgeq{} queries) compared to Theorem~\ref{Theorem: Lower Bound on Triangle Counting through PTP}.

\begin{corollary}[Lower Bound for Local Queries]\label{Theorem: Lower Bound on Triangle Counting w/o Random Edge}
Any algorithm that solves triangle estimation problem with $\approxerror \leq \frac{1}{4}$ using \degreeq{}, \neighbourq{}, and \edgeexistsq{} requires $\Omega\fbrac{\frac{\edgecount\arboricity\log\fbrac{1/\confidence}}{\approxerror^2\numtriangle}}$ queries.    
\end{corollary}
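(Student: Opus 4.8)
The plan is to exploit the fact that the query model of this corollary --- access only to \degreeq{}, \neighbourq{}, and \edgeexistsq{} --- is a restriction of the model of Theorem~\ref{Theorem: Lower Bound on Triangle Counting through PTP}, which additionally grants \randedgeq{}. Any algorithm $\trianglealgo$ that produces an $\appcon$ estimate of $\numtriangle$ using only these three local queries is, trivially, also an algorithm in the richer model: it simply never issues a \randedgeq{} query. Hence the lower bound $\Omega\fbrac{\edgecount\arboricity\log\fbrac{1/\confidence}/\approxerror^2\numtriangle}$ established in Theorem~\ref{Theorem: Lower Bound on Triangle Counting through PTP} for \emph{every} algorithm in the richer model applies in particular to $\trianglealgo$. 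This yields the corollary immediately; no new construction is needed, and one only has to remark that a weaker query model can only make the estimation problem harder.

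If a self-contained argument is preferred, the reduction inside the proof of Theorem~\ref{Theorem: Lower Bound on Triangle Counting through PTP} can be re-read with the \randedgeq{} bullet simply deleted. Given an instance $\alicestring$ of $\defptp$, one builds the same graph $\rqgraph$; as shown there, $\rqgraph$ has $\edgecount = 4\stringlength$ edges, arboricity $\lbarboricity$, and exactly $\numtriangle = \norm{\alicestring}_1\lbarboricity$ triangles --- none of which depends on which query types the algorithm uses. The simulations of \degreeq{}, \neighbourq{}, and \edgeexistsq{} each cost at most one bit-query to $\alicestring$ (zero for \degreeq{}), exactly as before. Therefore a local-query algorithm counting triangles with $\smallo{\edgecount\lbarboricity\log\fbrac{1/\confidence}/\approxerror^2\numtriangle}$ queries would, via this reduction together with Lemma~\ref{Lemma: PTP Deviation Bound} (to ensure $\numtriangle = \lbarboricity\norm{\alicestring}_1 \geq \lbarboricity\ptpprob/2$ with high probability, so that the query budget translates correctly), solve $\defptp$ with fewer than $\frac{\stringlength\log\fbrac{1/4\confidence}}{24\ptpsep^2\ptpprob}$ string queries, contradicting Lemma~\ref{Lemma: PTP Query Lower Bound}; the case analysis on $\norm{\alicestring}_1 > (1+\ptpsep)\ptpprob$ versus $\norm{\alicestring}_1 < (1-\ptpsep)\ptpprob$ is verbatim that of $\trianglealgoptp$.

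There is essentially no obstacle here. The only point to check is that nothing in the construction of Theorem~\ref{Theorem: Lower Bound on Triangle Counting through PTP} relied on \randedgeq{} beyond its own simulation, which is immediate: the graph, its parameters $\edgecount,\arboricity,\numtriangle$, and the correctness analysis of the resulting $\defptp$-solver are all untouched by dropping that query. If anything the local-query reduction is strictly simpler, since one fewer query type must be simulated, so the same quantitative bound $\Omega\fbrac{\edgecount\arboricity\log\fbrac{1/\confidence}/\approxerror^2\numtriangle}$ carries over with no loss.
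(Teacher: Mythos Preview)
Your proposal is correct and takes essentially the same approach as the paper: the corollary is obtained as a direct implication of Theorem~\ref{Theorem: Lower Bound on Triangle Counting through PTP} by noting that the local-query model is weaker than the model with \randedgeq{}, so any lower bound for the latter applies to the former. Your additional self-contained rephrasing is fine but unnecessary, as the paper itself simply invokes the ``weaker model'' observation in one sentence.
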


\newpage

\bibliographystyle{apalike}


\bibliography{./ref}

\begin{thebibliography}{}

\bibitem[Aggarwal and Subbian, 2014]{CAggarwalEvolutionaryNetworkSurvey}
Aggarwal, C. and Subbian, K. (2014).
\newblock Evolutionary network analysis: A survey.
\newblock {\em ACM Comput. Surv.}, 47(1).

\bibitem[Ahn et~al., 2012]{AhnGMPODS2012}
Ahn, K.~J., Guha, S., and McGregor, A. (2012).
\newblock Graph sketches: sparsification, spanners, and subgraphs.
\newblock In {\em Proceedings of the 31st ACM SIGMOD-SIGACT-SIGAI Symposium on Principles of Database Systems}, PODS '12, page 5–14, New York, NY, USA. Association for Computing Machinery.

\bibitem[Al~Hasan and Dave, 2018]{Hasan2018TriangleCI}
Al~Hasan, M. and Dave, V.~S. (2018).
\newblock Triangle counting in large networks: a review.
\newblock {\em WIREs Data Mining and Knowledge Discovery}, 8(2):e1226.

\bibitem[Alon, 2013]{AlonTestingFOCS2013}
Alon, N. (2013).
\newblock { Testing Subgraphs in Large Graphs }.
\newblock In {\em IEEE 54th Annual Symposium on Foundations of Computer Science}, page 434, Los Alamitos, CA, USA. IEEE Computer Society.

\bibitem[Alon et~al., 2008]{DBLP:journals/siamdm/AlonKKR08}
Alon, N., Kaufman, T., Krivelevich, M., and Ron, D. (2008).
\newblock Testing triangle-freeness in general graphs.
\newblock {\em {SIAM} J. Discret. Math.}, 22(2):786--819.

\bibitem[Alon et~al., 1997]{DBLP:journals/algorithmica/AlonYZ97}
Alon, N., Yuster, R., and Zwick, U. (1997).
\newblock Finding and counting given length cycles.
\newblock {\em Algorithmica}, 17(3):209--223.

\bibitem[Assadi et~al., 2019]{assadi2018simple}
Assadi, S., Kapralov, M., and Khanna, S. (2019).
\newblock {A Simple Sublinear-Time Algorithm for Counting Arbitrary Subgraphs via Edge Sampling}.
\newblock In Blum, A., editor, {\em 10th Innovations in Theoretical Computer Science Conference (ITCS 2019)}, volume 124 of {\em Leibniz International Proceedings in Informatics (LIPIcs)}, pages 6:1--6:20, Dagstuhl, Germany. Schloss Dagstuhl -- Leibniz-Zentrum f{\"u}r Informatik.

\bibitem[Assadi and Nguyen, 2022]{DBLP:conf/approx/AssadiN22}
Assadi, S. and Nguyen, H. (2022).
\newblock Asymptotically optimal bounds for estimating h-index in sublinear time with applications to subgraph counting.
\newblock In Chakrabarti, A. and Swamy, C., editors, {\em Approximation, Randomization, and Combinatorial Optimization. Algorithms and Techniques, {APPROX/RANDOM} 2022, September 19-21, 2022, University of Illinois, Urbana-Champaign, {USA} (Virtual Conference)}, volume 245 of {\em LIPIcs}, pages 48:1--48:20, Illinois, Urbana-Champaign, {USA} (Virtual Conference). Schloss Dagstuhl - Leibniz-Zentrum f{\"{u}}r Informatik.

\bibitem[Atserias et~al., 2013]{AtseriasDatabaseJoinJComp}
Atserias, A., Grohe, M., and Marx, D. (2013).
\newblock Size bounds and query plans for relational joins.
\newblock {\em SIAM Journal on Computing}, 42(4):1737--1767.

\bibitem[Bar-Yossef et~al., 2002]{DBLP:conf/soda/Bar-YossefKS02}
Bar-Yossef, Z., Kumar, R., and Sivakumar, D. (2002).
\newblock Reductions in streaming algorithms, with an application to counting triangles in graphs.
\newblock In {\em Proceedings of the Thirteenth Annual ACM-SIAM Symposium on Discrete Algorithms}, SODA '02, page 623–632, USA. Society for Industrial and Applied Mathematics.

\bibitem[Bera and Chakrabarti, 2017]{DBLP:conf/stacs/BeraC17}
Bera, S.~K. and Chakrabarti, A. (2017).
\newblock Towards tighter space bounds for counting triangles and other substructures in graph streams.
\newblock In Vollmer, H. and Vall{\'{e}}e, B., editors, {\em 34th Symposium on Theoretical Aspects of Computer Science, {STACS} 2017, March 8-11, 2017, Hannover, Germany}, volume~66 of {\em LIPIcs}, pages 11:1--11:14, Hannover, Germany. Schloss Dagstuhl - Leibniz-Zentrum f{\"{u}}r Informatik.

\bibitem[Bera et~al., 2020]{beraCG/BoundedArboricity:LIPIcs.ICALP.2020.11}
Bera, S.~K., Chakrabarti, A., and Ghosh, P. (2020).
\newblock {Graph Coloring via Degeneracy in Streaming and Other Space-Conscious Models}.
\newblock In Czumaj, A., Dawar, A., and Merelli, E., editors, {\em 47th International Colloquium on Automata, Languages, and Programming (ICALP 2020)}, volume 168 of {\em Leibniz International Proceedings in Informatics (LIPIcs)}, pages 11:1--11:21, Dagstuhl, Germany. Schloss Dagstuhl -- Leibniz-Zentrum f{\"u}r Informatik.

\bibitem[Bera and Seshadhri, 2020]{BeraSeshadriStreamingDegeneracy}
Bera, S.~K. and Seshadhri, C. (2020).
\newblock How the degeneracy helps for triangle counting in graph streams.
\newblock In {\em Proceedings of the 39th ACM SIGMOD-SIGACT-SIGAI Symposium on Principles of Database Systems}, PODS'20, page 457–467, New York, NY, USA. Association for Computing Machinery.

\bibitem[Blais et~al., 2012]{Blais2012}
Blais, E., Brody, J., and Matulef, K. (2012).
\newblock Property testing lower bounds via communication complexity.
\newblock {\em computational complexity}, 21(2):311–358.

\bibitem[Braverman et~al., 2013]{BOV_13_Streaming_triangle_counting_hardness}
Braverman, V., Ostrovsky, R., and Vilenchik, D. (2013).
\newblock How hard is counting triangles in the streaming model?
\newblock In {\em 40th International Colloquium on Automata, Languages, and Programming (ICALP 2013)}, ICALP'13, page 244–254, Berlin, Heidelberg. Springer-Verlag.

\bibitem[Bulteau et~al., 2016]{TriangleCountingDynamicGraphALgorithmica2016}
Bulteau, L., Froese, V., Kutzkov, K., and Pagh, R. (2016).
\newblock Triangle counting in dynamic graph streams.
\newblock {\em Algorithmica}, 76(1):259–278.

\bibitem[Buriol et~al., 2006]{BuriolFLPODS06}
Buriol, L.~S., Frahling, G., Leonardi, S., Marchetti-Spaccamela, A., and Sohler, C. (2006).
\newblock Counting triangles in data streams.
\newblock In {\em Proceedings of the Twenty-Fifth ACM SIGMOD-SIGACT-SIGART Symposium on Principles of Database Systems}, PODS '06, page 253–262, New York, NY, USA. Association for Computing Machinery.

\bibitem[Chakrabarti, 2015]{chakrabarti2015data}
Chakrabarti, A. (2015).
\newblock Data stream algorithms.
\newblock {\em Computer Science}, 49:149.

\bibitem[Chiba and Nishizeki, 1985]{DBLP:journals/siamcomp/ChibaN85}
Chiba, N. and Nishizeki, T. (1985).
\newblock Arboricity and subgraph listing algorithms.
\newblock {\em {SIAM} J. Comput.}, 14(1):210--223.

\bibitem[Cormode and Jowhari, 2017]{CORMODE2017SecondLook}
Cormode, G. and Jowhari, H. (2017).
\newblock A second look at counting triangles in graph streams (corrected).
\newblock {\em Theoretical Computer Science}, 683:22--30.

\bibitem[Danisch et~al., 2018]{DanischBS/WWW18.RealWorldDegeneracy}
Danisch, M., Balalau, O., and Sozio, M. (2018).
\newblock Listing k-cliques in sparse real-world graphs*.
\newblock In {\em Proceedings of the 2018 World Wide Web Conference}, WWW '18, page 589–598, Republic and Canton of Geneva, CHE. International World Wide Web Conferences Steering Committee.

\bibitem[Dory et~al., 2022]{DoryGI/BoundedArboricity/PODC22}
Dory, M., Ghaffari, M., and Ilchi, S. (2022).
\newblock Near-optimal distributed dominating set in bounded arboricity graphs.
\newblock In {\em Proceedings of the 2022 ACM Symposium on Principles of Distributed Computing}, PODC'22, page 292–300, New York, NY, USA. Association for Computing Machinery.

\bibitem[Eckmann and Moses, 2001]{Eckmann2001CurvatureOC}
Eckmann, J.-P. and Moses, E. (2001).
\newblock Curvature of co-links uncovers hidden thematic layers in the world wide web.
\newblock {\em Proceedings of the National Academy of Sciences of the United States of America}, 99:5825 -- 5829.

\bibitem[Eden et~al., 2017]{Dana_Ron_Triangle_Counting}
Eden, T., Levi, A., Ron, D., and Seshadhri, C. (2017).
\newblock Approximately counting triangles in sublinear time.
\newblock {\em SIAM Journal on Computing}, 46(5):1603--1646.

\bibitem[Eden et~al., 2020]{DBLP:conf/soda/EdenRS20}
Eden, T., Ron, D., and Seshadhri, C. (2020).
\newblock Faster sublinear approximation of the number of k-cliques in low-arboricity graphs.
\newblock In {\em Proceedings of the Thirty-First Annual ACM-SIAM Symposium on Discrete Algorithms}, SODA '20, page 1467–1478, USA. Society for Industrial and Applied Mathematics.

\bibitem[Eden and Rosenbaum, 2018]{DBLP:conf/approx/EdenR18}
Eden, T. and Rosenbaum, W. (2018).
\newblock {Lower Bounds for Approximating Graph Parameters via Communication Complexity}.
\newblock In Blais, E., Jansen, K., D.~P.~Rolim, J., and Steurer, D., editors, {\em Approximation, Randomization, and Combinatorial Optimization. Algorithms and Techniques (APPROX/RANDOM 2018)}, volume 116 of {\em Leibniz International Proceedings in Informatics (LIPIcs)}, pages 11:1--11:18, Dagstuhl, Germany. Schloss Dagstuhl -- Leibniz-Zentrum f{\"u}r Informatik.

\bibitem[Finocchi et~al., 2015]{FinocchiMapreduceArboricityJExpAlg}
Finocchi, I., Finocchi, M., and Fusco, E.~G. (2015).
\newblock Clique counting in mapreduce: Algorithms and experiments.
\newblock {\em ACM J. Exp. Algorithmics}, 20.

\bibitem[Goel and Gustedt, 2006]{GoelGustedtBoundedArboricity}
Goel, G. and Gustedt, J. (2006).
\newblock Bounded arboricity to determine the local structure of sparse graphs.
\newblock In {\em Proceedings of the 32nd International Conference on Graph-Theoretic Concepts in Computer Science}, WG'06, page 159–167, Berlin, Heidelberg. Springer-Verlag.

\bibitem[Goldreich, 2020]{Goldreich2020}
Goldreich, O. (2020).
\newblock {\em On the communication complexity methodology for proving lower bounds on the query complexity of property testing}, pages 87--118.
\newblock Lecture Notes in Computer Science (including subseries Lecture Notes in Artificial Intelligence and Lecture Notes in Bioinformatics). Springer Verlag, Germany.

\bibitem[Goldreich and Ron, 1997]{Goldreich_Ron_EdgeCounting}
Goldreich, O. and Ron, D. (1997).
\newblock Property testing in bounded degree graphs.
\newblock In {\em Proceedings of the Twenty-Ninth Annual ACM Symposium on Theory of Computing}, STOC '97, page 406–415, New York, NY, USA. Association for Computing Machinery.

\bibitem[Gonen et~al., 2011]{GRS11}
Gonen, M., Ron, D., and Shavitt, Y. (2011).
\newblock Counting stars and other small subgraphs in sublinear-time.
\newblock {\em SIAM Journal on Discrete Mathematics}, 25(3):1365--1411.

\bibitem[Jain and Seshadhri, 2017]{JainSeshadriTuranWWW17}
Jain, S. and Seshadhri, C. (2017).
\newblock A fast and provable method for estimating clique counts using tur\'{a}n's theorem.
\newblock In {\em Proceedings of the 26th International Conference on World Wide Web}, WWW '17, page 441–449, Republic and Canton of Geneva, CHE. International World Wide Web Conferences Steering Committee.

\bibitem[Jayaram and Kallaugher, 2021]{jayaram_et_al:LIPIcs.APPROX/RANDOM.2021.11}
Jayaram, R. and Kallaugher, J. (2021).
\newblock {An Optimal Algorithm for Triangle Counting in the Stream}.
\newblock In {\em {APPROX/RANDOM} 2021}, volume 207 of {\em Leibniz International Proceedings in Informatics (LIPIcs)}, pages 11:1--11:11, Dagstuhl, Germany. Schloss Dagstuhl -- Leibniz-Zentrum f{\"u}r Informatik.

\bibitem[Jha et~al., 2015]{JhaSeshadriStreamingTransitivity}
Jha, M., Seshadhri, C., and Pinar, A. (2015).
\newblock A space-efficient streaming algorithm for estimating transitivity and triangle counts using the birthday paradox.
\newblock {\em ACM Trans. Knowl. Discov. Data}, 9(3).

\bibitem[Kallaugher and Price, 2017]{KallaugherP17}
Kallaugher, J. and Price, E. (2017).
\newblock A hybrid sampling scheme for triangle counting.
\newblock In {\em Proceedings of the Twenty-Eighth Annual ACM-SIAM Symposium on Discrete Algorithms}, SODA '17, page 1778–1797, USA. Society for Industrial and Applied Mathematics.

\bibitem[Kane et~al., 2012]{KaneMelhornArbitrarySubgrapgICALP12}
Kane, D.~M., Mehlhorn, K., Sauerwald, T., and Sun, H. (2012).
\newblock Counting arbitrary subgraphs in data streams.
\newblock In {\em Proceedings of the 39th International Colloquium Conference on Automata, Languages, and Programming - Volume Part II}, ICALP'12, page 598–609, Berlin, Heidelberg. Springer-Verlag.

\bibitem[Konrad et~al., 2024]{LowArboricityMatching/FSTTCS2024/Konrad}
Konrad, C., McGregor, A., Sengupta, R., and Than, C. (2024).
\newblock {Matchings in Low-Arboricity Graphs in the Dynamic Graph Stream Model}.
\newblock In Barman, S. and Lasota, S., editors, {\em 44th IARCS Annual Conference on Foundations of Software Technology and Theoretical Computer Science (FSTTCS 2024)}, volume 323 of {\em Leibniz International Proceedings in Informatics (LIPIcs)}, pages 29:1--29:15, Dagstuhl, Germany. Schloss Dagstuhl -- Leibniz-Zentrum f{\"u}r Informatik.

\bibitem[Kopelowitz et~al., 2016]{DBLP:conf/soda/KopelowitzPP16}
Kopelowitz, T., Pettie, S., and Porat, E. (2016).
\newblock Higher lower bounds from the 3sum conjecture.
\newblock In {\em Proceedings of the Twenty-Seventh Annual ACM-SIAM Symposium on Discrete Algorithms}, SODA '16, page 1272–1287, USA. Society for Industrial and Applied Mathematics.

\bibitem[Leskovec et~al., 2008]{LeskovecBKT/SIGKDD08/EvolSocNet}
Leskovec, J., Backstrom, L., Kumar, R., and Tomkins, A. (2008).
\newblock Microscopic evolution of social networks.
\newblock In {\em Proceedings of the 14th ACM SIGKDD International Conference on Knowledge Discovery and Data Mining}, KDD '08, page 462–470, New York, NY, USA. Association for Computing Machinery.

\bibitem[Luce and Perry, 1949]{LucePerry1949AMO}
Luce, D. and Perry, A.~D. (1949).
\newblock A method of matrix analysis of group structure.
\newblock {\em Psychometrika}, 14:95--116.

\bibitem[McGregor et~al., 2016]{DBLP:conf/pods/McGregorVV16}
McGregor, A., Vorotnikova, S., and Vu, H.~T. (2016).
\newblock Better algorithms for counting triangles in data streams.
\newblock In {\em Proceedings of the 35th ACM SIGMOD-SIGACT-SIGAI Symposium on Principles of Database Systems}, PODS '16, page 401–411, New York, NY, USA. Association for Computing Machinery.

\bibitem[Mitzenmacher and Upfal, 2005]{Mitzenmacher_Upfal_2005}
Mitzenmacher, M. and Upfal, E. (2005).
\newblock {\em Probability and Computing: Randomized Algorithms and Probabilistic Analysis}.
\newblock Cambridge University Press, Cambridge.

\bibitem[Onak et~al., 2020]{OnakSSW/LowArboricityMinorFree/ACMTrAlg}
Onak, K., Schieber, B., Solomon, S., and Wein, N. (2020).
\newblock Fully dynamic mis in uniformly sparse graphs.
\newblock {\em ACM Trans. Algorithms}, 16(2).

\bibitem[Pagh and Tsourakakis, 2012]{DBLP:journals/ipl/PaghT12}
Pagh, R. and Tsourakakis, C.~E. (2012).
\newblock Colorful triangle counting and a mapreduce implementation.
\newblock {\em Inf. Process. Lett.}, 112(7):277--281.

\bibitem[Pavan et~al., 2013]{DBLP:journals/pvldb/PavanTTW13}
Pavan, A., Tangwongsan, K., Tirthapura, S., and Wu, K. (2013).
\newblock Counting and sampling triangles from a graph stream.
\newblock {\em Proc. {VLDB} Endow.}, 6(14):1870--1881.

\bibitem[Schank and Wagner, 2005]{SWExperimentalListingTriangles2005}
Schank, T. and Wagner, D. (2005).
\newblock Finding, counting and listing all triangles in large graphs, an experimental study.
\newblock In Nikoletseas, S.~E., editor, {\em Experimental and Efficient Algorithms}, pages 606--609, Berlin, Heidelberg. Springer Berlin Heidelberg.

\bibitem[Shin et~al., 2018]{ShinRealWorldKCore/JKInfSyst}
Shin, K., Eliassi-Rad, T., and Faloutsos, C. (2018).
\newblock Patterns and anomalies in k-cores of real-world graphs with applications.
\newblock {\em Knowl. Inf. Syst.}, 54(3):677–710.

\bibitem[Shin et~al., 2020]{ShinDynamicStreamTriangle}
Shin, K., Oh, S., Kim, J., Hooi, B., and Faloutsos, C. (2020).
\newblock Fast, accurate and provable triangle counting in fully dynamic graph streams.
\newblock {\em ACM Trans. Knowl. Discov. Data}, 14(2).

\bibitem[Suri and Vassilvitskii, 2011]{SuriVasCursedLastReducerWWW11}
Suri, S. and Vassilvitskii, S. (2011).
\newblock Counting triangles and the curse of the last reducer.
\newblock In {\em Proceedings of the 20th International Conference on World Wide Web}, WWW '11, page 607–614, New York, NY, USA. Association for Computing Machinery.

\bibitem[Tsourakakis et~al., 2011]{TsourakakisJournalOG}
Tsourakakis, C., Kolountzakis, M., and Miller, G. (2011).
\newblock Triangle sparsifiers.
\newblock {\em Journal of Graph Algorithms and Applications}, 15(6):703–726.

\bibitem[T\v{e}tek, 2022]{tetek:LIPIcs.ICALP.2022.107}
T\v{e}tek, J. (2022).
\newblock {Approximate Triangle Counting via Sampling and Fast Matrix Multiplication}.
\newblock In Boja\'{n}czyk, M., Merelli, E., and Woodruff, D.~P., editors, {\em 49th International Colloquium on Automata, Languages, and Programming (ICALP 2022)}, volume 229 of {\em Leibniz International Proceedings in Informatics (LIPIcs)}, pages 107:1--107:20, Dagstuhl, Germany. Schloss Dagstuhl -- Leibniz-Zentrum f{\"u}r Informatik.

\bibitem[{Vassilevska Williams} and Xu, 2020]{DBLP:conf/focs/WilliamsX20}
{Vassilevska Williams}, V. and Xu, Y. (2020).
\newblock Monochromatic triangles, triangle listing and {APSP}.
\newblock In Irani, S., editor, {\em 61st {IEEE} Annual Symposium on Foundations of Computer Science, {FOCS} 2020, Durham, NC, USA, November 16-19, 2020}, pages 786--797, Durham,NC,USA. {IEEE}.

\bibitem[Watts and Strogatz, 1998]{watts_collective_1998}
Watts, D.~J. and Strogatz, S.~H. (1998).
\newblock Collective dynamics of ‘small-world’ networks.
\newblock {\em Nature}, 393(6684):440--442.

\end{thebibliography}

\ifarxiv{

}
\else{
\appendix
\newpage
\section*{Appendix}\label{sec:appendix}
\setcounter{section}{0}

\input{sections/Appendix/0_Chernoff_Bounds}
\input{sections/Appendix/1_Final_Search}
\input{sections/Appendix/2_Deviation_Bounds}
\input{sections/Appendix/3_Other_Models}
}
\fi
\end{document}